%
%
\documentclass[10pt,a4paper]{article}

\usepackage{amsmath,amsgen,latexsym}
\usepackage{amstext,amssymb,amsfonts,latexsym}
\usepackage{theorem}
\usepackage{pifont}
\usepackage{graphics,epsfig}
\usepackage{graphicx}
\usepackage{graphicx}

\setlength{\evensidemargin}{-0.1cm}
\setlength{\oddsidemargin}{-0.1cm}
\setlength{\topmargin}{-0.7cm}
\setlength{\textheight}{24.5cm}
\setlength{\textwidth}{16.5cm}

\setlength{\headsep}{0cm}
\setlength{\headheight}{0cm}
\setlength{\marginparwidth}{0cm}


 \newcommand{\bs}{\bigskip}
 \newcommand{\ms}{\medskip}
 \newcommand{\n}{\noindent}
 \newcommand{\s}{\smallskip}
 \newcommand{\hs}[1]{\hspace*{ #1 mm}}
 \newcommand{\vs}[1]{\vspace*{ #1 mm}}



 \newcommand{\setempty}{\emptyset}
 
 \newcommand{\nat}{\mathbb{N}}
 
 \newcommand{\integer}{\mathbb{Z}}


 \newcommand{\dom}{\mbox{dom}}

 \newcommand{\co}{\mathrm{co}\mbox{-}}


 \newcommand{\ie}{\textrm{i.e.},\hspace*{2mm}}
 \newcommand{\eg}{\textrm{e.g.},\hspace*{2mm}}


 \newcommand{\CC}{{\cal C}}
 \newcommand{\FF}{{\cal F}}
 
 \newcommand{\EE}{{\cal E}}

 \newcommand{\GG}{{\cal G}}

 \newcommand{\PP}{{\cal P}}


 \newcommand{\p}{\mathrm{P}}
 \newcommand{\np}{\mathrm{NP}}




 \newcommand{\cfl}{\mathrm{CFL}}




 \newcommand{\comb}[2]{\left(\:\begin{subarray}{c} #1 \\%
      #2 \end{subarray}\right)}






\theoremstyle{plain}
\theoremheaderfont{\bfseries}
\setlength{\theorempreskipamount}{3mm}
\setlength{\theorempostskipamount}{3mm}

 \newtheorem{theorem}{Theorem}[section]
 \newtheorem{lemma}[theorem]{Lemma}
 \newtheorem{proposition}[theorem]{Proposition}

 \newtheorem{openproblem}[theorem]{Question}

 {\theorembodyfont{\rmfamily}
  }
 {\theorembodyfont{\rmfamily} }
 {\theorembodyfont{\rmfamily} }

 \newtheorem{claim}{Claim}

 \newenvironment{proof}{\par \noindent
            {\bf Proof. \hs{2}}}{\hfill$\Box$ \vspace*{3mm}}

 \newenvironment{proofof}[1]{\vspace*{5mm} \par \noindent
         {\bf Proof of #1.\hs{2}}}{\hfill$\Box$ \vspace*{3mm}}

 \newenvironment{yproof}{\par \noindent
            {\bf Proof. \hs{2}}}{\hfill$\Box$ \vspace*{3mm}}


 \newcommand{\floors}[1]{\lfloor #1 \rfloor}


 \newcommand{\sigmap}[1]{\Sigma^{\mathrm{P}}_{#1}}


\newcommand{\ignore}[1]{}

\newcommand{\track}[2]{[\:\begin{subarray}{c} #1 \\%
      #2 \end{subarray} ]}

\newcommand{\cent}{|\!\! \mathrm{c}}
\newcommand{\dollar}{\$}

 \newcommand{\cflsv}{\mathrm{CFLSV}}

 \newcommand{\cflmv}{\mathrm{CFLMV}}

 \newcommand{\cfltwov}{\mathrm{CFL2V}}
 \newcommand{\cflkv}[1]{\mathrm{CFL}{#1}\mathrm{V}}

 \newcommand{\sigmacfl}[1]{\Sigma^{\mathrm{CFL}}_{ #1 }}

 \newcommand{\sigmacflmv}[1]{\Sigma^{\mathrm{CFL}}_{ #1 }\mathrm{MV}}
 \newcommand{\sigmacflsv}[1]{\Sigma^{\mathrm{CFL}}_{ #1 }\mathrm{SV}}

 \newcommand{\picflmv}[1]{\Pi^{\mathrm{CFL}}_{ #1 }\mathrm{MV}}
 \newcommand{\picflsv}[1]{\Pi^{\mathrm{CFL}}_{ #1 }\mathrm{SV}}

 \newcommand{\sigmapmv}[1]{\Sigma^{\mathrm{P}}_{ #1 }\mathrm{MV}}

 \newcommand{\pipmv}[1]{\Pi^{\mathrm{P}}_{ #1 }\mathrm{MV}}

 \newcommand{\ucfltwov}{\mathrm{UCFL2V}}
 \newcommand{\ucflkv}[1]{\mathrm{UCFL}{#1}\mathrm{V}}


\begin{document}

\pagestyle{plain}
\setcounter{page}{1}

\begin{center}
{\Large {\bf Not All Multi-Valued Partial CFL Functions Are \s\\
Refined by Single-Valued Functions}}\footnote{An extended abstract appeared in the Proceedings of the 8th IFIP International Conference on Theoretical Computer Science (IFIP TCS 2014), Rome, Italy, September 1--3, 2014, Lecture Notes in Computer Science, Springer, vol. 8705, pp. 136--150.}
\bs\s\\

{\sc Tomoyuki Yamakami}\footnote{Faculty of Engineering,
University of Fukui, 3-9-1 Bunkyo, Fukui 910-8507, Japan} \ms
\end{center}


\begin{quote}
\n{\bf Abstract:}
Multi-valued partial CFL functions are functions computed along accepting computation paths by one-way nondeterministic pushdown automata, equipped with write-only output tapes, which are allowed to reject an input, in comparison with single-valued partial CFL functions. We give an answer to a fundamental question, raised by Konstantinidis, Santean, and Yu [Act. Inform. 43 (2007) 395--417], of whether all such multi-valued partial CFL functions can be refined by  single-valued partial CFL functions.
We negatively solve this open question by presenting a special multi-valued partial CFL function as an example function and by proving that no refinement of this particular function becomes a single-valued partial CFL function. This contrasts an early result of Kobayashi [Inform. Control 15 (1969) 95--109] that multi-valued partial NFA functions are always refined by single-valued NFA functions, where NFA functions are computed by one-way nondeterministic finite automata with output tapes. Our example function turns out to be unambiguously 2-valued, and thus we obtain a stronger separation result, in which no refinement of unambiguously 2-valued partial CFL functions can be single-valued.
For the proof of this fact, we first introduce a new concept of colored automata having no output tapes but having ``colors,''  which can simulate pushdown automata equipped with constant-space output tapes. We then conduct an extensive combinatorial analysis on the behaviors of transition records of stack contents (called stack histories) of these colored automata.

\s

\n{\bf Keywords:}
multi-valued partial function, CFL function, NFA function, refinement, pushdown automaton, context-free language, colored automaton, stack history
\end{quote}

\sloppy
\section{Resolving a Fundamental Question}\label{sec:introduction}

Since early days of automata and formal language theory, multi-valued partial functions,\footnote{Throughout this paper, we often call those multi-valued partial functions just ``functions.''}  computed by various types of automata equipped with supplemental \emph{write-only output tapes}, have been investigated extensively. To keep a restricted nature of memory usage, we require the automata to write output symbols \emph{in an oblivious way}; namely, the automata move their output-tape heads to new blank cells whenever they write non-blank output symbols. We succinctly refer such output tapes to ``write only.''
Among those types of functions, we intend to spotlight
{\em CFL functions} (also known as \emph{algebraic transductions}), which are computed by {\em one-way nondeterministic pushdown automata} (succinctly abbreviated as \emph{npda's}) whose input-tape heads move only in one direction (from the left to the right) with write-only output tapes. Such functions were formally discussed in 1963 by Evey \cite{Eve63} and Fisher \cite{Fis63}. The acronym CFL stands for \emph{context-free languages} because, with no output tapes, the machines recognize precisely context-free languages. Therefore, those functions naturally inherit certain distinctive traits from the context-free languages; however, their behaviors are in essence quite different from the behaviors of the languages. Such intriguing properties of those functions have been addressed occasionally in the past literature (\eg \cite{CC83,Eve63,Fis63,KSY07,Yam11,Yam14b}).

Along their numerous accepting computation paths, npda's can produce various output values on their output tapes. We flexibly allow npda's to reject an input, producing no valid output values.
When the number of output values is always limited to at most one, we obtain {\em single-valued} partial functions. Such single-valued partial functions can be obviously treated
as multi-valued partial functions, but multi-valued partial functions are, in general, not single-valued. For expressing a relationship between multi-valued and single-valued partial functions, it is therefore  more appropriate to ask a question of whether multi-valued partial functions can be {\em refined} by single-valued partial functions, where ``refinement'' is a notion discussed initially for NP functions \cite{Sel94} and it refers to
a certain natural restriction on the outcomes of multi-valued functions.
To be more precise, we say that a function $g$ is a {\em refinement}  (also called ``uniformization'' \cite{KSY07}) of another function $f$  if and only if (i) $f$ and $g$ have the same domain and (ii) for every input $x$ in the domain of $f$, all output values of $g$ on $x$ are also output values of $f$ on the same input $x$.
When $g$ is particularly single-valued, $g$ acts as a ``selection'' function that picks exactly one value out of a set of output values of $f$ on $x$ whenever the set is nonempty. This refinement notion is known to play a significant role also in language recognition.
In a polynomial-time setting, for instance, if we can effectively find an accepting computation path of any polynomial-time nondeterministic Turing machine, then every \emph{multi-valued partial NP function} (which is computed by a certain polynomial-time nondeterministic Turing machine) has a refinement in the form of single-valued NP function. Therefore, this ``no-refinement'' claim for multi-valued partial NP functions immediately
leads to a negative answer to
the long-standing $\p=?\np$ question. More generally, multi-valued partial $\sigmap{k}$-functions in the so-called \emph{NPMV-hierarchy} $\{\sigmapmv{k},\pipmv{k}\mid k\geq1\}$ are not in general refined by  single-valued partial  $\sigmap{k}$-functions  as long as the polynomial(-time) hierarchy forms an infinite hierarchy  \cite{FHOS97,Sel96}.

Returning to automata theory, we can discuss a similar refinement question on CFL functions in hope that we resolve it without any unproven assumption, such as the separation of the polynomial hierarchy. Along this line of research, the first important step was taken by Kobayashi \cite{Kob69} in 1969. He gave an affirmative answer to the refinement question on \emph{multi-valued partial NFA functions}, which are computed by \emph{one-way nondeterministic finite automata} (or \emph{nfa's}, in short) with write-only output tapes; namely, multi-valued partial NFA functions can be refined by appropriate single-valued partial NFA functions.
Konstantinidis, Santean, and Yu \cite{KSY07} discussed a similar question concerning multi-valued partial  CFL functions. They managed to obtain a partial affirmative answer but unfortunately they left the whole question open.

This paper is focused on CFL functions whose output values are produced by npda's that halt in linear time\footnote{This linear time-bound ensures that every CFL function produces only at most an exponential number of output values and it therefore becomes an NP function. This fact naturally extends a well-known containment of $\cfl\subseteq\np$. If no execution time bound is imposed, on the contrary, then a function computed by an npda that nondeterministically produces every string on its output tape on each input also becomes a ``valid'' CFL function but such the function is no longer an NP function.}  (that is, all computation paths terminate in time $O(n)$, where $n$ is the size of input)  with write-only output tapes.
By adopting succinct notations from \cite{Yam11,Yam14b}, we express  as $\cflmv$ the collection of all such CFL functions and we also write $\cflsv$ for a collection of all single-valued partial functions in $\cflmv$.
As a concrete example of our CFL function, let us consider $f$ defined by setting $f(1^n\#x)$ to be the set of all substrings of $x$ of length between $1$ and $n$, exactly when $1\leq n\leq |x|$. This function $f$ is a multi-valued partial CFL function and the following function $g$ is an obvious refinement of $f$; the set $g(1^n\#x)$ is composed only of the first symbol of $x$ whenever $1\leq n\leq|x|$. Notice that $g$ belongs to $\cflsv$.

For a further discussion, it is beneficial to introduce another succinct notation concerning ``refinement.''
Given two classes $\FF$ and $\GG$ of multi-valued partial functions, we write $\FF\sqsubseteq_{ref}\GG$ if every function in $\FF$ can be refined by an appropriately chosen function in $\GG$. Using this notation, the aforementioned refinement question of Konstantinidis et al. regarding CFL functions can be rephrased  neatly as follows.

\begin{openproblem}\label{CFLMV-refine-k=1}
Is it true that $\cflmv\sqsubseteq_{ref}\cflsv$?
\end{openproblem}

Various extensions of $\cflmv$ in Question \ref{CFLMV-refine-k=1} are also possible. We state one such possible extension. Yamakami \cite{Yam14b} lately  introduced a functional hierarchy $\{\sigmacflmv{k},\picflmv{k}\mid k\geq1\}$ (called the \emph{CFLMV hierarchy}), which is  built upon  multi-valued partial CFL functions by applying Turing relativization and a complementation operation (see Section \ref{sec:future}), analogously to the aforementioned NPMV hierarchy $\{\sigmapmv{k},\pipmv{k}\mid k\geq1\}$ over multi-valued partial NP functions \cite{FHOS97,Sel96}. Its single-valued version is customarily denoted by $\{\sigmacflsv{k},\picflsv{k}\mid k\geq1\}$. The function $g$ defined as $g(w)=\{x\in\{0,1\}^*\mid \exists\,u,v\,[w=uxxv]\}$ for each $w\in\{0,1\}^*$ is a simple example of function in $\sigmacflmv{2}$.

Our focal question, Question \ref{CFLMV-refine-k=1}, can be further generalized to the following.

\begin{openproblem}\label{CFLMV-refine-general}
Does $\sigmacflmv{k}\sqsubseteq_{ref}\sigmacflsv{k}$ hold for each index $k\geq1$?
\end{openproblem}

When $k\geq3$, Yamakami \cite{Yam14b} shed partial light on this general question. He was able to show that, for every index $k\geq3$, $\sigmacfl{k-1}=\sigmacfl{k}$ implies  $\sigmacflmv{k}\sqsubseteq_{ref}\sigmacflsv{k}$, where $\sigmacfl{k}$ is the $k$th level of the {\em CFL hierarchy} \cite{Yam14a}, which is the language counterpart of the CFLMV hierarchy. Since the collapse of the CFL hierarchy is closely related to that of the polynomial hierarchy, the answer to Question \ref{CFLMV-refine-general} (when $k\geq3$) might possibly be quite difficult to obtain. See Section \ref{sec:future} for a further discussion.  Nevertheless, the remaining cases of $k=1,2$ have been left unsolved.

In this paper, without relying on any unproven assumption,
we solve Question \ref{CFLMV-refine-general} \emph{negatively} when $k=1$; therefore, our result completely settles Question \ref{CFLMV-refine-k=1}.
Our solution actually gives an essentially stronger statement than what we have discussed so far. To explain this statement, we need a new function class $\cfltwov$ as the collection of all functions $f$ in $\cflmv$ satisfying the condition that the number of output values of $f$ on each input should be at most $2$. We actually obtain the following statement.

\begin{theorem}\label{CFL2V-refine-CFLSV}
$\cfltwov\not\sqsubseteq_{ref} \cflsv$.
\end{theorem}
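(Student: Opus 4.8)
The plan is to exhibit one unambiguously $2$-valued partial CFL function and to show directly that it admits no single-valued refinement. First I would fix a candidate $f$ on inputs of the form $a^{i}b^{j}c^{k}$ by declaring the symbol $0$ to be a legal output exactly when $i=j$ and the symbol $1$ to be legal exactly when $j=k$; concretely, $f(a^{i}b^{j}c^{k})=\{0\}$ when $i=j\neq k$, $=\{1\}$ when $i\neq j=k$, $=\{0,1\}$ when $i=j=k$, and $f$ is undefined on all other inputs. Membership $f\in\cfltwov$ is witnessed by an npda that nondeterministically either pushes on each $a$ and pops on each $b$, emitting $0$ precisely when the two blocks balance, or ignores the $a$'s and then pushes on each $b$ and pops on each $c$, emitting $1$ precisely when those blocks balance; each branch is deterministic once the strategy is chosen, so $f$ has at most one accepting path per output value and is in fact \emph{unambiguously} $2$-valued.

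Second, I would assume toward a contradiction that some $g\in\cflsv$ refines $f$, computed by a linear-time npda $M$ with a write-only output tape. Since every legal output value of $f$ is a single symbol in $\{0,1\}$, the output tape of $M$ uses only constant space, so I would invoke the paper's colored-automaton machinery to replace $M$ by an equivalent colored npda $N$ whose accepting configurations carry a color in $\{0,1\}$ in place of the emitted symbol. Single-valuedness of $g$ then becomes the requirement that all accepting computations of $N$ on a common input agree on their color, and the two refinement conditions become $\dom(g)=\dom(f)$ together with the following: the color-$0$ acceptance language $L_{0}$ satisfies $\{a^{i}b^{j}c^{k}:i=j\neq k\}\subseteq L_{0}\subseteq\{a^{i}b^{j}c^{k}:i=j\}$, the color-$1$ language $L_{1}$ satisfies $\{a^{i}b^{j}c^{k}:i\neq j=k\}\subseteq L_{1}\subseteq\{a^{i}b^{j}c^{k}:j=k\}$, and $L_{0}\cap L_{1}=\emptyset$ while $L_{0}\cup L_{1}=\dom(f)$. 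Each of $L_{0},L_{1}$ is context-free, being the acceptance-by-color projection of the npda $N$.

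Third, and this is the combinatorial heart, I would derive a contradiction from the mere existence of such context-free $L_{0},L_{1}$. The two extreme choices already reveal the obstruction: taking $L_{0}=\{a^{i}b^{j}c^{k}:i=j\}$ forces $L_{1}=\{a^{i}b^{n}c^{n}:i\neq n\}$, and taking $L_{1}=\{a^{i}b^{j}c^{k}:j=k\}$ forces $L_{0}=\{a^{n}b^{n}c^{m}:m\neq n\}$, each of which is non-context-free because it demands two independent comparisons from a single stack; the content of the theorem is that no intermediate assignment of the diagonal strings $a^{n}b^{n}c^{n}$ between $L_{0}$ and $L_{1}$ rescues context-freeness either. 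This is precisely the kind of tension underlying the inherent ambiguity of $\{a^{n}b^{n}c^{m}\}\cup\{a^{n}b^{m}c^{m}\}$, so a one-shot application of the standard pumping or Ogden lemma does not suffice; instead I would run a two-block analysis on the \emph{stack histories} of $N$. On a long diagonal input $a^{n}b^{n}c^{n}$ accepted with, say, color $0$, I would use the repetition structure of the stack-height profile across the $a$--$b$ boundary to pump the $a$- and $b$-blocks in lockstep, while simultaneously tracking the stack history spanning the $b$--$c$ boundary; the goal is to show that the same surface configuration licensing a color-$0$ acceptance also licenses a color-$0$ acceptance of some input that, by the symmetric argument applied to $L_{1}$, is forced to receive color $1$, violating $L_{0}\cap L_{1}=\emptyset$.

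I expect this third step to be the main obstacle. The two delicate points are (i) the diagonal strings $a^{n}b^{n}c^{n}$, on which the coloring is a priori free, so the pumping must be engineered to transport a color decision from an input where one property is forced to an input where the opposite property is forced; and (ii) the replacement of the textbook context-free pumping lemma by a genuine analysis of stack histories, since the nondeterminism of $N$ and the requirement to pump two blocks coherently are not captured by a single application of Ogden's lemma. Managing exactly these two difficulties is what the paper's stack-history formalism is designed to accomplish, and it is here that essentially all of the real work will reside.
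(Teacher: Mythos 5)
Your overall route is genuinely different from the paper's: instead of the palindrome-based function $h_3$ with a counting analysis of stack histories, you use a unary-counting function on $a^ib^jc^k$ and reduce the refinement question to the existence of two \emph{disjoint} context-free languages $L_0,L_1$ with $\{a^ib^jc^k: i=j\neq k\}\subseteq L_0\subseteq\{a^ib^jc^k: i=j\}$ and $\{a^ib^jc^k: i\neq j=k\}\subseteq L_1\subseteq\{a^ib^jc^k: j=k\}$. Your first two steps are correct: your $f$ is indeed in $\ucfltwov$ (each guessed branch is deterministic), so a completed argument would even yield the stronger Theorem~\ref{UCFL2V-by-CFLSV}, not just Theorem~\ref{CFL2V-refine-CFLSV}; and the reduction is sound, although invoking the colored-automaton machinery is overkill here --- since every output value is a single symbol, the simulating npda can simply guess the output bit, keep it in its finite control, and verify it, which already exhibits $L_0$ and $L_1$ as context-free.

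The genuine gap is your third step, which you defer entirely, and the sketch you do give points in the wrong direction. Pumping a diagonal input $a^nb^nc^n$ accepted with color $0$ cannot yield a contradiction: pumping the $a$- and $b$-blocks in lockstep only produces strings $a^{n+\ell t}b^{n+\ell t}c^{n}$, whose membership in $L_0$ is perfectly legal, and the color of diagonal strings is a priori unconstrained, so no forced recoloring ever arises from the diagonal outward. The correct move is the reverse: pump forced \emph{off-diagonal} witnesses \emph{into} a common diagonal string. Apply Ogden's lemma to $L_0$ on the forced member $a^{n}b^{n}c^{n+n!}$ with the $a$'s marked; since all of $L_0$ satisfies $i=j$ and $v,x$ must each lie inside a single letter block, the decomposition is forced to have $v=a^t$, $x=b^t$ with $1\leq t\leq p\leq n$ ($p$ the Ogden constant), and pumping $n!/t$ extra times puts $a^{N}b^{N}c^{N}$ into $L_0$ for $N=n+n!$. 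Symmetrically, Ogden's lemma applied to $L_1$ on the forced member $a^{n+n!}b^{n}c^{n}$ with the $c$'s marked forces $v=b^t$, $x=c^t$ and puts the \emph{same} string $a^{N}b^{N}c^{N}$ into $L_1$, contradicting $L_0\cap L_1=\emptyset$. So, contrary to your assessment, two independent applications of Ogden's lemma suffice, and no stack-history analysis is needed for your example; the paper's stack-history and colored-automaton apparatus is required for \emph{its} example $h_3$ precisely because $h_3$ compares binary strings rather than unary counts --- with exponentially many strings per length, pumping cannot control which inputs receive which color, and the paper must instead count distinct stack contents (its Propositions~\ref{size-stack-content} and \ref{symbol-length-small}) to obtain contradictory height bounds.
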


Since $\cflsv\subseteq \cfltwov\subseteq \cflmv$ holds, Theorem \ref{CFL2V-refine-CFLSV} clearly leads to a negative answer to
Question \ref{CFLMV-refine-k=1}. The proof of this theorem is essentially a manifestation of the following intuition: since an npda relies on limited functionality of its memory device (a stack), along any single computation path, it cannot simulate simultaneously two independent computation paths of another npda.

Instead of providing a detailed proof for Theorem \ref{CFL2V-refine-CFLSV},
we wish to present a simple and clear argument to demonstrate a slightly stronger result regarding a subclass of $\cfltwov$. To justify an introduction of such a subclass, we need to address that even if a function $f$ is single-valued, its underlying npda on each input may have numerous accepting computation paths, each of which produces the same value of $f$.
Hence, controlling the number of those accepting computation paths may be difficult for npda's.
We thus restrict our attention on special npda's that have ``few'' accepting computation paths for each output value.
Let us first call an npda $N$ with a write-only output tape {\em unambiguous} if, for every input $x$ and any output value $y$, $N$ has exactly one accepting computation path producing $y$.
Finally, we denote by $\ucfltwov$ the class of all $2$-valued partial functions computed in linear time by unambiguous npda's equipped with output tapes. Succinctly, those functions are called {\em unambiguously 2-valued}. Obviously, $\ucfltwov\subseteq \cfltwov$ holds.

Throughout this paper, we wish to show the following stronger separation result (than Theorem \ref{CFL2V-refine-CFLSV}), which is referred to as the ``main theorem'' in the subsequent sections.

\begin{theorem}[Main Theorem]\label{UCFL2V-by-CFLSV}
$\ucfltwov\not\sqsubseteq_{ref}\cflsv$.
\end{theorem}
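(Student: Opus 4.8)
The plan is to exhibit a single witness function over the alphabet $\{a,b,c\}$ and show directly that no single-valued CFL function can select from it. Define a partial function $f$ on strings of the form $a^ib^jc^k$ by declaring $1\in f(a^ib^jc^k)$ exactly when $i=j$ and $2\in f(a^ib^jc^k)$ exactly when $j=k$; on every other input $f$ is undefined. Then $f$ takes at most two values, and its domain is $D=\{a^ib^jc^k\mid i=j \text{ or } j=k\}$, a context-free (indeed linear) set. An npda computing $f$ guesses one of the two obligations: to emit $1$ it pushes on the $a$'s, pops on the $b$'s, checks the stack is empty, then scans the $c$'s; to emit $2$ it scans the $a$'s, pushes on the $b$'s, and pops on the $c$'s. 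Each branch is deterministic once the guess is fixed, so there is exactly one accepting path per emitted value; hence $f\in\ucfltwov$, and it suffices to prove $f$ has no refinement in $\cflsv$.

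First I would reduce the whole question to a statement about languages. Suppose $g\in\cflsv$ refines $f$, computed by a single-valued npda $M$. Because $f$ is $\{1,2\}$-valued, $g(x)\in\{1,2\}$ on $D$, and the two preimages $S_1=\{x\mid g(x)=1\}$ and $S_2=\{x\mid g(x)=2\}$ are context-free (project $M$'s accepting computations onto the emitted symbol). They partition $D$, and the refinement condition forces $\{a^ib^jc^k\mid i=j,\ j\ne k\}\subseteq S_1$ and $\{a^ib^jc^k\mid i\ne j,\ j=k\}\subseteq S_2$. Conversely, any disjoint context-free $S_1,S_2$ with $S_1\cup S_2=D$ obeying these inclusions yields a refinement (the npda verifies membership in $S_1$ emitting $1$, or in $S_2$ emitting $2$; disjointness makes it single-valued). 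Thus the Main Theorem is equivalent to: no such context-free partition of $D$ exists. The only freedom is how to distribute the overlap $\Delta=\{a^nb^nc^n\mid n\ge0\}$, so I would encode a partition by the index set $A=\{n\mid a^nb^nc^n\in S_2\}$.

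The heart is to show that every choice of $A$ makes $S_1$ or $S_2$ non-context-free. I would first stress that the usual pumping tools are too weak: $\{a^ib^jc^j\mid i\ne j\}$ is not context-free, yet Ogden's lemma cannot witness this, since one can always pump a balanced block of $b$'s and $c$'s in steps large enough to leap over the forbidden diagonal $i=j$. Instead I would invoke the Ginsburg--Spanier characterization of bounded context-free languages: a language inside $a^*b^*c^*$ is context-free iff its set of exponent vectors is a finite union of stratified linear sets (periods with at most two nonzero coordinates, pairwise non-crossing). For $S_2\subseteq\{j=k\}$ the constraint $j=k$ forces every period to be of axis type $(s,0,0)$ or $(0,m,m)$, so the exponent set of $S_2$ must be covered by finitely many two-dimensional cones in the $(i,j)$-plane built from these axis periods, each cone meeting the diagonal $\{i=j\}$ only at indices belonging to $A$.

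The main obstacle --- and where the real combinatorics lives --- is to extract a contradiction from this covering. The key point is that the off-diagonal points $(i,j)$ with $i\ne j$ but $L\mid(i-j)$, where $L$ is the least common multiple of all cone moduli, cannot be covered by any diagonal-avoiding cone, because modulo every cone modulus such a point sits on the diagonal; they can only be captured by cones that genuinely cross $\{i=j\}$, and such a cone drags in a whole arithmetic progression of diagonal indices, all of which must lie in $A$. Running this over every residue class modulo $L$ shows $A$ must be cofinite. The symmetric analysis of $S_1\subseteq\{i=j\}$ forces the complementary index set to be cofinite as well, and these two cofiniteness requirements are incompatible. Hence no context-free partition of $D$ exists, so $f$ has no single-valued CFL refinement, giving $\ucfltwov\not\sqsubseteq_{ref}\cflsv$. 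I expect the delicate step to be making the cone/covering bookkeeping rigorous --- in particular, reducing arbitrary stratified periods to the axis-type normal form and quantifying the arithmetic progressions of diagonal indices forced into $A$ --- which is precisely the place where the paper's ``stack history'' analysis offers an alternative, machine-level route.
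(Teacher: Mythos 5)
Your proposal is, as far as I can check, correct in substance, and it takes a genuinely different route from the paper. The paper's witness is the palindrome-based function $h_3$ on inputs $x_1\# x_2\# x_3$, and its proof is machine-level: it folds the constant-length outputs into ``colored'' stack alphabets (Lemma \ref{construction}), normalizes the machine into an ideal shape (Lemma \ref{modify-ideal-shape}), and derives the contradiction from a combinatorial analysis of stack histories (Proposition \ref{size-stack-content} giving a constant upper bound on stack height versus Propositions \ref{stack-height-bound} and \ref{symbol-length-small} giving an $\Omega(n)$ lower bound, with the reversal construction of Proposition \ref{reversing-machine} handling the symmetric case). You instead choose the bounded witness $f$ on $a^ib^jc^k$ (value $1$ iff $i=j$, value $2$ iff $j=k$); your membership argument for $\ucfltwov$ is fine, and your reduction of refinement to the non-existence of a context-free partition $S_1\cup S_2$ of the domain with $S_1\subseteq\{i=j\}$ and $S_2\subseteq\{j=k\}$ is sound --- the preimages are context-free precisely because outputs have constant length, the same observation that powers the paper's colored automata. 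From there you replace all stack analysis by the Ginsburg--Spanier theorem for letter-bounded languages, invoked correctly: a linear set inside the plane $j=k$ has all its periods parallel to that plane, and stratification (at most two nonzero coordinates per period) then forces the axis types $(s,0,0)$ and $(0,m,m)$, so the exponent set of $S_2$ is a finite union of products $(a+S)\times(b+T)$ of translated numerical semigroups, and symmetrically for $S_1$. What each approach buys: yours is much shorter and reduces everything to clean semilinear geometry, but it leans on Ginsburg--Spanier as a heavy black box and is intrinsically tied to bounded witnesses; the paper's argument is self-contained automata combinatorics whose stack-history toolkit also applies to non-bounded inputs such as $h_3$.

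One step of your sketch should be tightened, though it does go through. Rather than ``running over every residue class modulo $L$,'' argue directly that for every sufficiently large $n$ the single off-diagonal point $(n,n+L)$ forces $n\in A$, where $L$ is the least common multiple of the gcds $g_S,g_T$ of all semigroups occurring in the decomposition of $S_2$. Degenerate cones (with $S=\{0\}$ or $T=\{0\}$) are points or lines and meet the family $\{(n,n+L)\}_{n}$ in only finitely many points; if a nondegenerate cone $(a+S)\times(b+T)$ contains $(n,n+L)$, then $n\in a+S$ outright, while $n+L\in b+T$ gives $n\equiv b\pmod{g_T}$ since $g_T\mid L$, so by the eventual periodicity of numerical semigroups (a Frobenius-number bound, uniform over the finitely many cones) $n\in b+T$ once $n$ exceeds a fixed conductor; hence $(n,n)$ lies in that cone and $n\in A$. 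This yields cofiniteness of $A$, the mirror argument on $S_1$ yields cofiniteness of $\nat\setminus A$, and the contradiction follows. Also note that the ``axis-type normal form'' you flag as delicate is actually immediate (period parallel to the plane plus at most two nonzero coordinates leaves no other option); the real care sits in the conductor bookkeeping just described. Your aside that Ogden's lemma cannot witness the non-context-freeness of $\{a^ib^jc^j\mid i\neq j\}$ is inessential to the argument and, being debatable, is best dropped.
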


Following a brief explanation of key notions and notation in Section \ref{sec:preliminaries}, we will give in Section \ref{sec:main-theorem} the proof
of Theorem \ref{UCFL2V-by-CFLSV}, completing the proof of Theorem \ref{CFL2V-refine-CFLSV} as well.
Our proof will start  in Sections \ref{sec:example-function} with a presentation of our example function $h_3$, a member of $\ucfltwov$. The proof will then proceed, by way of contradiction, starting with a faulty assumption that a certain refinement, say, $g$ of $h_3$
exists in $\cflsv$.
Thus, there is an npda computing $g$ using a write-only output tape. For our proof, however,  we wish to avoid the messy handling of the output tape of this npda and seek a simpler model of automaton for an easier analysis of its behaviors. For this purpose, we will introduce in Section  \ref{sec:colored-automata} a new concept of ``colored'' automaton---a new type of automaton having no output tape but having ``colors''---which can
simulate any npda equipped with an output tape that computes $g$.
To each accepting computation path of such colored automata, we assign a certain color if the machine pushes the same colored symbols into a stack along this computation path.

To lead to the desired contradiction, we are focused on accepting colored computation paths of a colored automaton and see how the computation paths can turn into different colors if we alter certain portions of input strings.
The proof will further exploit special properties of such a colored automaton by analyzing the behaviors of its time transition record of stack contents (called a \emph{stack history}) generated by this colored automaton. The detailed combinatorial analysis of the stack history will be presented in Sections \ref{sec:case-one}--\ref{sec:case-two}. The analysis itself is interesting on its own right. The proof of the main theorem will be split into two cases. In Case 1, the proof is supported by two key statements, Proposition \ref{size-stack-content} and Proposition \ref{symbol-length-small} (for a special case, Proposition  \ref{stack-height-bound}), in which we estimate the height of stack contents at certain points of a stack history.
The proofs of these propositions are quite contrive to some extent. These estimations provide two contradictory upper and lower bounds of the height, leading to the desired contradiction. In Case 2, we transform this case back to Case 1 by constructing a ``reversed'' colored automaton in Proposition \ref{reversing-machine} in Section \ref{sec:case-two}.

\s

We expect that  colored automata may find useful applications to other issues arising in  automata theory and we strongly hope that our analysis of stack history may shed another insight into the behaviors of other intriguing  automata models.

\section{Preparation for the Proof}\label{sec:preliminaries}

Before giving the awaiting proof of the main theorem (Theorem \ref{UCFL2V-by-CFLSV}) in Section \ref{sec:main-theorem}, we  wish to explain key notions and notation necessary to read through the rest of this paper.

Let $\nat$ denote the set of all \emph{natural numbers} (i.e., nonnegative integers) and define  $\nat^{+}=\nat-\{0\}$. Given two integers $m$ and $n$ with $m\leq n$, the notation $[m,n]_{\integer}$ denotes the \emph{integer interval} $\{m,m+1,m+2,\ldots,n\}$. When $n\geq1$, we further abbreviate $[1,n]_{\integer}$ as $[n]$. All \emph{logarithms} are taken to the base $2$ unless otherwise stated.
Given a finite set $A$, $\PP(A)$ denotes the {\em power set} of $A$ (i.e., the collection of all subsets of $A$).
The notation $|A|$ for a finite set $A$ refers to its {\em cardinality} (i.e., the number of all distinct elements in $A$).

An {\em alphabet} is a finite nonempty set of ``symbols'' or ``letters.''
Given such an alphabet $\Sigma$, a {\em string $x$ over $\Sigma$} is a finite series of symbols taken from $\Sigma$ and the {\em length} (or {\em size}) of $x$, denoted by $|x|$, is the total number of symbols in $x$.
We use $\lambda$ to express the {\em empty string} of length $0$.
The set of all strings over $\Sigma$ is denoted by $\Sigma^*$ and a {\em language over $\Sigma$} is a subset of $\Sigma^*$. Let $\Sigma^{+} = \Sigma^*-\{\lambda\}$.  The set $\Sigma^n$ for a number $n\in\nat$ is composed of all strings of length $n$ and the set $\Sigma^{\geq k}$ (resp., $\Sigma^{\leq k}$) consists of all strings of length at least $k$ (resp., at most $k$). For a symbol $a$ and a language $A$, the notation $aA$ stands for the set $\{ax\mid x\in A\}$.
Given two strings $x$ and $y$ over the same alphabet, the notation  $x\sqsubseteq y$ indicates that $x$ is a {\em substring} of $y$; namely,  $y$ equals $uxv$ for certain two strings $u$ and $v$.  Moreover, for a string $x$ and an index $i\in[|x|]$,  $(x)_i$ expresses a unique substring made up only of the first $i$ symbols of $x$.
Such a string is also called a \emph{prefix}. For example, $(0100)_1 = 0$ and $(0100)_3 = 010$. Clearly, $(x)_i\sqsubseteq x$ and $(x)_{|x|} = x$ hold.
Given a string $x=x_1x_2\cdots x_{n-1}x_n$ with $x_i\in\Sigma$ for all $i\in[n]$, the \emph{reversal} of $x$, denoted by $x^R$, is the string $x_nx_{n-1}\cdots x_2x_1$.

A \emph{multi-valued partial function} generally maps elements of a given set to subsets of another (possibly the same) set. Slightly different from a conventional notation\footnote{Another expression $f:A\to B$ is customarily used in computational complexity to express a multi-valued partial function.}  (\eg \cite{Sel94,Sel96}), we write $f:A\rightarrow \PP(B)$ for two sets $A$ and $B$ to refer to a multi-valued partial function that takes an element in $A$ as input and produces a certain number of elements in $B$.
In particular, when $f(x)=\setempty$, we conventionally say that $f(x)$ is {\em undefined}. The {\em domain} of $f$, denoted by $\dom(f)$, is therefore the set $\{x\in A\mid f(x)\text{ is not undefined }\}$. Given a constant $k\in\nat^{+}$,
$f$ is said to be {\em $k$-valued} if $|f(x)|\leq k$ holds for every input $x$ in $A$.
For two multi-valued partial functions $f,g:A\rightarrow\PP(B)$, we say that $g$ is a {\em refinement} of $f$ (or $f$ is {\em refined by} $g$), denoted by $f\sqsubseteq_{ref}g$, if (i) $\dom(f)=\dom(g)$ and (ii)  $g(x)\subseteq f(x)$ (set inclusion) holds for every $x\in\dom(g)$ \cite{Sel94}. For any two function classes $\FF$ and $\GG$, the succinct notation $\FF\sqsubseteq_{ref}\GG$ is used when every function in $\FF$ has a refinement in $\GG$.

Our mechanical model of computation is a {\em one-way nondeterministic pushdown automaton} (or an {\em npda}, for short) with/without a write-only output tape, allowing \emph{$\lambda$-moves} (or \emph{$\lambda$-transitions}).
We use an infinite input tape, which holds two special endmarkers: the left endmarker $\cent$ and the right endmarker $\dollar$. Let  $\check{\Sigma}$ stand for the set $\Sigma\cup\{\cent,\dollar\}$.
In addition, we use a semi-infinite output tape, on which its tape head is initially positioned at the first (\ie the leftmost) tape cell and moves only in one direction (to the right) whenever it writes a non-blank symbol.
Formally, an npda $M$ with an output tape is a tuple $(Q,\Sigma,\{\cent,\dollar\},\Gamma,\Theta,\delta,q_0,\bot, Q_{acc},Q_{rej})$ with a finite set $Q$ of inner states, an input alphabet $\Sigma$, a stack alphabet $\Gamma$, an output alphabet $\Theta$, the initial state $q_0\in Q$, the bottom marker $\bot \in\Gamma$, a set $Q_{acc}$ (resp., $Q_{rej}$) of accepting (resp., rejecting) states satisfying $Q_{acc}\cup Q_{rej} \subseteq Q$, and a transition function $\delta:(Q-Q_{halt})\times (\check{\Sigma}\cup \{\lambda\}) \times \Gamma \rightarrow \PP(Q\times \Gamma^*\times (\Theta\cup\{\lambda\}))$, where $Q_{halt}=Q_{acc}\cup Q_{rej}$.
The input tape is indexed by natural numbers with $\cent$ in the $0$th cell. When an input $x=x_1x_2\cdots x_n$ of length $n$ is given, it is placed in cells indexed from $1$ to $n$, where $\dollar$ is at the $(n+1)$th cell. A stack holds a series $s_ks_{k-1}\cdots s_{1}s_{0}$ of stack symbols in such a way that $s_0=\bot $ and $s_k$ is the topmost symbol.
We demand that $M$ should neither
remove $\bot $ nor replace it with any other symbol at any step; that is, for any tuple $(p,q,s,\sigma,\tau)$,
$(p,s,\tau)\notin \delta(q,\sigma,\bot)$ holds if $s$ does not contain $\bot $ at its bottom.
Conventionally, we say that the stack is \emph{empty} if it contains only the bottom marker $\bot $. Moreover, $M$ is not allowed to use $\bot $ as an ordinary stack symbol; that is, $(p,s,\tau)\notin \delta(q,\sigma,\gamma)$ if $\gamma\neq\bot$ and $\tau$ contains $\bot$.
The output tape must be  {\em write-only}; namely,
whenever $M$ writes a non-blank symbol on this tape,
its tape head must move to the right.
It is important to recognize two types of $\lambda$-moves.
When $\delta$ is applied to tuple $(q,\lambda,\gamma)$, $M$ modifies the current
contents of its stack and its output tape while neither scanning input symbols nor moving its input-tape head. In contrast,  when $(p,w,\lambda)\in \delta(q,\sigma,\gamma)$ holds, $M$ neither moves its output-tape head nor writes any non-blank symbol onto the output tape.

A \emph{configuration} of $M$ on input $x$ is a triplet $(q,i,w)$, in which $M$ is in inner state $q$ (with $q\in Q$), its tape head scans the $i$th cell (with $i\in[0,|x|+1]_{\integer}$), and its stack contains $w$ (with $w\in\Gamma^*$). The \emph{initial configuration} is $(q_0,0,\bot )$ and an accepting (resp., a rejecting) configuration is a configuration with an accepting state (resp., a rejecting state). A \emph{halting configuration} is either an accepting or a rejecting configuration. A \emph{computation path} of $M$ on $x$ is a series of configurations of $M$ on $x$, starting with the initial configuration, for which any non-initial configuration in the series must be reached from its predecessor by a single application of $\delta$.

Whenever we need to discuss an \emph{npda having no output tape}, we  drop ``$\Theta$'' as well as ``$\Theta\cup\{\lambda\}$'' from the aforementioned definition of $M$ and $\delta$. As stated in Section \ref{sec:introduction}, we consider only npda's whose computation paths all  terminate within $O(n)$ steps, where $n$ refers to any input size, and this particular condition concerning the termination of computation is conventionally  called the {\em termination condition} \cite{Yam14a}.  Throughout this paper, all npda's are implicitly assumed to satisfy this termination condition.

In general, an {\em output} (\emph{outcome} or \emph{output string}) of $M$ along a given  computation path refers to a string over $\Theta$ written down on the output tape when the computation path terminates. Such an output is called
{\em valid} (or {\em legitimate}) if the corresponding computation path is an accepting computation path (\ie $M$ enters an accepting state along this computation path).
Given a function $f$, we say that an npda $M$ with an output tape {\em computes} $f$ if, on  every input $x$, $M$ produces exactly all the strings in $f(x)$ as valid outputs; namely, for every pair $(x,y)$, $y\in f(x)$ if and only if $y$ is a valid outcome of $M$ on the input $x$.
Notice that an npda can generally produce more than one valid output string, its computed function inherently becomes multi-valued.
Because invalid outputs produced by $M$ are all discarded from our arguments in the subsequent sections, we will refer to valid outputs as just ``outputs'' unless otherwise stated.

The notation $\cflmv$ (resp., $\cflkv{k}$ for a fixed constant $k\in\nat^{+}$) stands for the class of all multi-valued (resp., $k$-valued) partial functions that can be computed by appropriate npda's with write-only output tapes in linear time. When $k=1$, in particular, we customarily write $\cflsv$ instead of $\cflkv{1}$.
In addition, we define $\ucflkv{k}$ as the collection of all functions $f$ in $\cflkv{k}$ for which an appropriate npda $M$ equipped with an output tape computes $f$ with the  extra condition (called the {\em unambiguous computation condition})  that, for every input $x$ and every value $y$ in $f(x)$, there exists exactly one accepting computation path of $M$ on $x$ producing $y$.
It follows by their definitions that $\ucflkv{k}\subseteq \cflkv{k} \subseteq\cflmv$. Since any function producing exactly $k+1$ values cannot belong to $\cflkv{k}$ by definition, $\cflkv{k}\neq\cflkv{(k+1)}$ holds; therefore, in particular, we obtain $\cflsv\neq\cflmv$. Notice that this inequality does not directly lead to the desired conclusion $\cflmv\not\sqsubseteq_{ref}\cflsv$.

To describe behaviors of an npda's stack, we closely follow terminology from  \cite{Yam08,Yam16}.
A {\em stack content} is formally a series $z_mz_{m-1}\cdots z_1z_0$ of stack symbols sequentially stored into a stack (in our convention,  $z_0$ is the bottom marker $\bot $ and $z_m$ is a symbol at the top of the stack).
A {\em stack content at the $i$th cell position} refers to a stack content obtained just after the tape head scans and then moves off the $i$th cell of the input tape. A series of stack contents produced along a computation path is briefly referred to as a \emph{stack history}.

\section{Proof of the Main Theorem}\label{sec:main-theorem}

Our ultimate goal is to solve negatively a question that was posed in \cite{KSY07} and reformulated in \cite{Yam14b} as in the form of Question \ref{CFLMV-refine-k=1}. For this purpose, we intend to prove the main theorem (Theorem \ref{UCFL2V-by-CFLSV}).
As an example of a  non-refinable function that witnesses the theorem, we will present a special function, called $h_3$, which belongs to $\ucfltwov$ (shown in Section \ref{sec:example-function}), and then give an explanation of why no refinement of this function is found in $\cflsv$, resulting in the main theorem, namely, $\ucfltwov\not\sqsubseteq_{ref}\cflsv$. To simplify our proof, we will introduce in Section \ref{sec:colored-automata} a computational model of \emph{colored automata}, which have no output tapes but have ``colors'' to specify their outcomes.  We will conduct a combinatorial analysis on a stack history of such colored automata in Section \ref{sec:case-one}--\ref{sec:case-two}.

\subsection{An Example Function}\label{sec:example-function}

Our example function $h_3$ is a natural extension of a well-known deterministic context-free language $\{x\#x^R\mid x\in\{0,1\}^*\}$ (marked even-length palindromes), where $\#$ is a distinguished symbol not in $\{0,1\}$, used as a separator.
Let us define two supporting languages
$L=\{x_1\# x_2\# x_3\mid x_1,x_2,x_3\in\{0,1\}^*\}$ and $L_3=\{w \mid \exists x_1,x_2,x_3[ w=x_1\# x_2\# x_3\in L, \exists (i,j)\in I_3\;[x_i^R=x_j]] \}$, where
$I_3=\{(i,j)\mid i,j\in\nat^{+}, 1\leq i<j\leq 3\}$.
We then introduce the desired function $h_3$ by setting $h_3(w)=\{ 0^i1^j\mid (i,j)\in I_3, x_i^{R}=x_j \}$ if $w=x_1\# x_2\# x_3\in L$, and $h_3(w)=\setempty$ otherwise.
It thus follows that
$L_3=\{w\in L\mid h_3(w)\neq\setempty\}$.
As simple examples, if $w$ has the form $x\#x^R\#y$ with $x\neq y$, then $h_3(w)$ equals $\{011\}$; in contrast, if $w=x\#x^R\#x$, then $h_3(w)$ is $\{011,00111\}$.

Let us verify the following proposition.

\begin{proposition}
The above function $h_3$ is in $\ucfltwov$.
\end{proposition}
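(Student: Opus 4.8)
The plan is to exhibit a single linear-time, unambiguous npda $N$ with a write-only output tape that computes $h_3$, and then to verify separately the two features demanded by $\ucfltwov$: that $N$ never produces more than two values, and that each produced value arises from a unique accepting path. First I would construct $N$ as follows. On input $\cent w\dollar$ the finite control checks that $w$ has the shape $x_1\# x_2\# x_3$ with $x_1,x_2,x_3\in\{0,1\}^*$ (a regular test on the number and placement of the two $\#$'s); if this fails, $N$ rejects on every path, so that $h_3(w)=\setempty$ is reproduced by the absence of accepting paths. Otherwise $N$ nondeterministically guesses a pair $(i,j)\in I_3$ and verifies $x_i^R=x_j$ by the usual palindrome-matching discipline: while scanning the block $x_i$ it pushes its symbols, it consumes any block lying strictly between positions $i$ and $j$ without touching the stack, and while scanning $x_j$ it pops one stack symbol per input symbol, aborting on any symbol mismatch or length discrepancy; the remaining blocks are read off without stack activity. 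If the match succeeds, $N$ writes $0^i1^j$ obliviously onto the output tape and enters an accepting state. Since $N$ sweeps the input once from left to right and makes only a bounded number of $\lambda$-moves per cell, it halts in time $O(|w|)$.

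Correctness is then immediate: a string $y$ is a valid output of $N$ on $w$ exactly when $w\in L$ and $y=0^i1^j$ for some $(i,j)\in I_3$ whose reversal equality $x_i^R=x_j$ was confirmed on the stack, which is precisely the condition defining membership in $h_3(w)$. Hence $N$ computes $h_3$ and $h_3\in\cflmv$. For the unambiguous computation condition, I would observe that the assignment $(i,j)\mapsto 0^i1^j$ is injective on $I_3$, so a fixed output value $y$ determines the guessed pair $(i,j)$ uniquely; and once the pair is fixed, every subsequent choice of $N$ (which block to push, which to skip, which to match against the stack) is forced, so exactly one accepting path produces $y$. Thus $N$ meets the unambiguity requirement regardless of how many values it outputs.

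The remaining and genuinely delicate point is the quantitative bound $|h_3(w)|\le 2$. Since distinct pairs give distinct outputs, $|h_3(w)|$ equals the number of reversal equalities among $x_1^R=x_2$, $x_1^R=x_3$, $x_2^R=x_3$ that hold, so the bound reduces to a purely combinatorial statement about which subsets of these three equalities are simultaneously realizable by a triple $(x_1,x_2,x_3)$. The engine of this analysis is that any two of the equalities already impose a rigid relation among the blocks---for example $x_1^R=x_2$ together with $x_2^R=x_3$ forces $x_3=x_1$, after which the third equality $x_1^R=x_3$ is controlled entirely by whether $x_1$ is a palindrome. I would therefore carry out a short case analysis over the three pairs, tracking these forced relations to enumerate the admissible equality-patterns and thereby bound their number. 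I expect this count to be the heart of the proposition and the step that must be pinned down with care, whereas the machine construction, the correctness check, and the unambiguity argument are routine by comparison.
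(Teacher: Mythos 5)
Your machine construction, correctness check, and unambiguity argument coincide with the paper's own proof: the paper's npda likewise guesses $(i,j)\in I_3$, writes $0^i1^j$ (at the first step rather than the last, which is immaterial for a write-only tape), pushes $x_i$, pops against $x_j$, and accepts, and its unambiguity claim is exactly yours --- the output value determines the guessed pair, after which the path is forced. The one place you diverge is the $2$-valuedness: the paper dismisses it in one line (``Obviously, the function $h_3$ is 2-valued''), whereas you defer it to an unfinished case analysis that you explicitly call the heart of the proposition.

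That deferred step is a genuine gap, and in fact it cannot be closed, because the bound $|h_3(w)|\le 2$ is false for $h_3$ as literally defined. Your own sketch already contains the counterexample: $x_1^R=x_2$ together with $x_2^R=x_3$ forces $x_3=x_1$, and the third equality $x_1^R=x_3$ then holds precisely when $x_1$ is a palindrome. So take any palindrome $x$ (even $x=\lambda$) and let $w=x\#x\#x$; all three equalities hold simultaneously and $h_3(w)=\{011,\,0111,\,00111\}$ has three elements. Carried out honestly, your case analysis therefore refutes rather than establishes the bound; the paper's ``obviously'' glosses over the same defect, so your instinct to scrutinize this point was sound. The proposition (and the rest of the paper) is easily repaired by dropping the pair $(1,3)$ from the definition, i.e., taking $I=\{(1,2),(2,3)\}$: then only the two outputs $011$ and $00111$ can ever occur, both do occur on inputs $x\#x^R\#x$, and the proof of the main theorem is untouched, since it only ever argues about accepting $(1,2)$- and $(2,3)$-computation paths. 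Under that repair, your construction and unambiguity argument as written do prove the corrected statement.
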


\begin{proof}
Obviously, the function $h_3$ is 2-valued.
Targeting $h_3$, let us consider the following npda $M$ equipped with a write-only output tape. On any input $w$, $M$ deterministically checks whether $w$ is of the form $x_1\#x_2\#x_3$ in $L$ by moving its input-tape head from the left to the right by counting the number of $\#$ in $w$. At the same time, $M$ guesses (i.e., nondeterministically chooses) a pair $(i,j)\in I_3$, writes $0^i1^j$ onto its  output tape, stores $x_i$ into a stack, and then checks whether $x_i^R$ matches $x_j$ by retrieving $x_i$ in a reverse order from the stack. If $x_i^R=x_j$ holds, then $M$ enters an accepting state; otherwise, it enters a rejecting state.

To be more formal, the desired npda $M=(Q,\Sigma,\{\cent,\dollar\},\Gamma,\Theta,\delta,q_0,\bot, Q_{acc},Q_{rej})$ is defined as follows. Let $\Sigma=\{0,1,\#\}$, $\Gamma=\{0,1,\bot \}$, and $\Theta=\{0,1\}$. Moreover, let $Q_{acc}=\{q_{acc}\}$ and $Q_{rej}=\{q_{rej}\}$. The transition function $\delta$ consists of the following transitions. The first move of $M$ is a nondeterministic move of $\delta(q_0,\cent,\bot )=\{(q_{12}^{(0)},\bot,01^2), (q_{23}^{(0)},\bot, 0^21^3), (q_{13}^{(0)},\bot, 01^3)\}$. With respect to $q_{12}^{(0)}$, this first step is followed by a series of transitions:  $\delta(q_{12}^{(0)},\sigma,a) =\{(q_{12}^{(0)},\sigma a,\lambda)\}$,  $\delta(q_{12}^{(0)},\#,a) =\{(q_{12}^{(1)},a,\lambda)\}$, $\delta(q_{12}^{(1)},\sigma,\sigma) =\{(q_{12}^{(1)},\lambda,\lambda)\}$, $\delta(q_{12}^{(1)},\#,\bot ) =\{(q_{12}^{(2)},\bot, \lambda)\}$, $\delta(q_{12}^{(2)},\sigma,\bot ) =\{(q_{12}^{(2)},\bot, \lambda)\}$, and $\delta(q_{12}^{(2)},\dollar,\bot ) =\{(q_{acc},\bot, \lambda)\}$, where $\sigma\in\Sigma$ and $a\in\Gamma$. For all other transitions, $M$ changes its inner states to $q_{rej}$.  For the other two states $q_{23}^{(0)}$ and $q_{13}^{(0)}$, we can define similar sets of transitions and thus we omit their precise descriptions.

It therefore follows by the above definition that, for each choice of $(i,j)$ in $I_3$, there is at most one accepting computation path producing $0^i1^j$.
It is not difficult to verify that $M$ correctly computes $h_3$. Therefore, $h_3$ belongs to $\ucfltwov$.
\end{proof}

We have obtained the example function $h_3$, which belongs to $\ucfltwov$. To complete the proof of the main theorem, it therefore suffices to
verify the following proposition regarding the non-existence of a refinement of the function $h_3$.

\begin{proposition}\label{h3-no-refinement}
The function $h_3$ has no refinement in $\cflsv$.
\end{proposition}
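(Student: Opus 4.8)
The plan is to argue by contradiction. Assume $g\in\cflsv$ refines $h_3$, and fix a linear-time npda $N$ with a write-only output tape computing $g$. Since every value of $h_3$ lies in the fixed finite set $\{01^2,\,0^21^3,\,01^3\}$, the string produced along any accepting path of $N$ is one of three bounded-length strings; I can therefore absorb the output tape into the finite control and replace $N$ by a \emph{colored} npda $\hat N$ (with no output tape) whose accepting paths carry a color in $\{c_{12},c_{23},c_{13}\}$ recording which value the path would have written. This is the reduction made precise in Section~\ref{sec:colored-automata}. Because $g$ is single-valued, for every $w\in L_3$ all accepting paths of $\hat N$ on $w$ are monochromatic and their common color names the unique value $g(w)$; because $g$ is a refinement, a path of color $c_{ij}$ on input $x_1\#x_2\#x_3$ witnesses $x_i^R=x_j$.

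I then feed $\hat N$ the diagonal inputs $w_x=x\#x^R\#x$ for non-palindromic $x\in\{0,1\}^n$. For such $x$ exactly the conditions $(1,2)$ and $(2,3)$ hold, so $h_3(w_x)=\{01^2,0^21^3\}$ and the forced color of $w_x$ is $c_{12}$ or $c_{23}$ (here single-valuedness is essential, as it pins down one color per input). Pigeonholing over the $\approx 2^n$ non-palindromes yields, for infinitely many $n$, a set $S\subseteq\{0,1\}^n$ with $|S|=\Omega(2^n)$ on which the color is constant. The reversal $x_1\#x_2\#x_3\mapsto x_3^R\#x_2^R\#x_1^R$ swaps the roles of $(1,2)$ and $(2,3)$ and fixes $(1,3)$, so the subcase of constant color $c_{23}$ is transformed into the subcase $c_{12}$; this is exactly the reversed colored automaton of Proposition~\ref{reversing-machine}, and it lets me assume the constant color on $S$ is $c_{12}$ (Case~1, with Case~2 being the reversed one).

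The lower bound is a swapping argument and is where I am most confident. Because $\hat N$ is one-way, the only information carried from the reading of the first block into the rest of the computation is the single configuration $C_x=(\text{state},\text{stack content})$ reached by the color-$c_{12}$ path just after it leaves the first $\#$. If some $v\in\{0,1\}^n$ with $v\notin\{x,x^R\}$ could also reach $C_x$ at that point, then splicing the $v$-prefix onto the color-$c_{12}$ suffix of the $w_x$-computation (legitimate, since the remaining input $x^R\#x$ agrees) gives an accepting color-$c_{12}$ computation of $\hat N$ on $v\#x^R\#x$, i.e.\ a valid output $01^2$; but only $(2,3)$ holds there, so $01^2\notin h_3(v\#x^R\#x)$, contradicting refinement. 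Hence each $C_x$ is reachable only by block-1 strings in $\{x,x^R\}$, so the configurations $C_x$ ($x\in S$) are distinct up to the pairing $x\leftrightarrow x^R$; as the number of configurations of stack height at most $h$ is $O(|\Gamma|^{h})$, this forces a stack height $\ge c\,n$ for a constant $c>0$ at that boundary on all but a vanishing fraction of $S$. This is the content of Proposition~\ref{size-stack-content}.

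The opposing upper bound is the real work and is carried out in Sections~\ref{sec:case-one}--\ref{sec:case-two} through a combinatorial analysis of the \emph{stack history} of the color-$c_{12}$ computation as the head sweeps across the second block $x^R$. The idea is that, to reject all inputs $x\#y\#x$ with $y\neq x^R$ while still accepting $w_x$ with color $c_{12}$, the computation must genuinely consume the information it stored about $x_1$, and a careful accounting of how individual stack symbols are pushed and popped under the linear-time termination condition (Proposition~\ref{symbol-length-small}, with Proposition~\ref{stack-height-bound} in the degenerate case) bounds the height at the relevant point of the stack history \emph{above} by a quantity strictly smaller than $c\,n$. The clash of this estimate with the lower bound of Proposition~\ref{size-stack-content} is the desired contradiction, settling Case~1 and, via Proposition~\ref{reversing-machine}, Case~2 as well. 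I expect this upper-bound analysis to be the main obstacle: one must control the stack history of an \emph{adversarial} npda that need not process the three blocks in the intended push-then-pop fashion, whereas the colored-automaton reformulation and the swapping lower bound are comparatively routine.
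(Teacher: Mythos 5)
Your framing matches the paper — the colored-automaton reduction, the two-case split, and the use of Proposition~\ref{reversing-machine} are exactly the paper's scaffolding — and your splicing argument is sound as far as it goes: the configurations $C_x$ after the first block on the diagonal color-$c_{12}$ paths are indeed pairwise distinct up to $x\leftrightarrow x^R$, which forces stack height $\Omega(n)$ at the first $\#$. But this lower bound is aimed at a point where no contradiction is available: the \emph{intended} behavior of a $(1,2)$-checker is precisely to push $x$ during block~1 and pop it against $x^R$, so any correct automaton legitimately has linear stack height at the first boundary on accepted diagonals, and no sublinear upper bound can hold there (the paper proves none). You have also inverted the roles of the paper's propositions: Proposition~\ref{size-stack-content} is the \emph{constant upper} bound, while Propositions~\ref{stack-height-bound} and \ref{symbol-length-small} are the \emph{linear lower} bounds — and all three concern the stack content $\gamma^{(x)}_{y}$ at the \emph{second} boundary $|x\#x^R\#|$, along the forced opposite-color paths, not the diagonal ones.

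Concretely, the paper's contradiction lives in its Case~1, where $|D_{(2,3)}|\geq 2^n/2$: for $x\in D_{(2,3)}$, refinement forces an accepting $(1,2)$-path on every $x\#x^R\#y$ with $y\in H_x$, while single-valuedness forbids one on $x\#x^R\#x$; hence $\gamma^{(x)}_{y}$ must still encode $x$ in order to reject the continuation $y=x$ (Lemma~\ref{accept-reject-path} plus counting gives the linear lower bound). The genuinely hard half — which your proposal explicitly defers as ``the main obstacle'' — is the opposing constant upper bound: one decomposes the input at a minimal stack content $\gamma^{(x)}_{\ell,y}\in MSC_{x,y}$ and shows, via the swapping-uniqueness arguments (Claims~\ref{segment-stack-content} and \ref{unique-r'}, Lemmas~\ref{upper-size-s}, \ref{size-bound-of-r}, and \ref{TF-bound}), that the segments $s$ and $r'$ are each determined by constantly many stack symbols, whence $|\gamma^{(x)}_{y}|=O(1)$. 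Without this mechanism the proof does not close. Finally, your orientation is backwards relative to the analysis you invoke: your direct case (diagonal color $c_{12}$) is the paper's Case~2; if you insist on that orientation, the mirrored clash would occur at the first boundary but along the forced $(2,3)$-paths on inputs $x\#z\#z^R$ (which must remember $x$ across block~1 to reject $z=x^R$ while simultaneously pushing $z$ for the block-3 comparison), not along the $c_{12}$ diagonal paths your lower bound actually addresses.
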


In the subsequent five subsections, we will describe the proof of Proposition \ref{h3-no-refinement} and thus derive the main theorem.

\subsection{Colored Automata}\label{sec:colored-automata}

Our proof of Proposition \ref{h3-no-refinement} proceeds by way of contradiction. To lead to the desired contradiction, we first assume that $h_3$ has a  refinement, say,  $g$ in $\cflsv$. Since $g$ is single-valued, in what follows, we rather write $g(x)=y$ instead of $g(x)=\{y\}$ for $x\in\dom(f)$.
Take an npda $N$ computing $g$ with a write-only output tape. Assume that  $N$ has
the form  $(Q,\Sigma,\{\cent,\dollar\},\Gamma,\Theta,\delta,q_0,\bot, Q_{acc},Q_{rej})$ with a transition function $\delta:(Q-Q_{halt})\times (\check{\Sigma}\cup\{\lambda\}) \times \Gamma\rightarrow \PP(Q\times \Gamma^*\times(\Theta\cup\{\lambda\}))$. Notice that $\Sigma=\Theta=\{0,1\}$ by the definition of $g$.

Unfortunately, we find it difficult to directly follow and analyze the moves of $N$'s output-tape head. To overcome this difficulty, we try to modify $N$ into a new variant of {\em npda having no output tape}.
As seen later, this modification is possible because $g$'s output values are limited only to strings of \emph{constant lengths}.
Now, let us introduce this new machine, dubbed as ``colored'' automaton, which has no output tapes but uses  ``colored'' stack symbols. Using a finite set $C$ of ``colors,'' a {\em colored automaton}   $M=(Q',\Sigma,\{\cent,\dollar\},\Gamma',C,\delta',q'_0,\bot, Q'_{acc},Q'_{rej})$  partitions its stack alphabet $\Gamma'$, except for the bottom marker $\bot$, into sets $\{\Gamma'_{\xi}\}_{\xi\in C}$; namely, $\bigcup_{\xi\in C}\Gamma'_{\xi}  = \Gamma'-\{\bot \}$ and $\Gamma'_{\xi}\cap \Gamma'_{\xi'}=\setempty$ for any distinct pair $\xi,\xi'\in C$. Let $\tilde{\Gamma}_{\xi} = \Gamma'_{\xi}\cup\{\bot \}$ for each color $\xi\in C$.
For a {\em color} of stack symbol $\gamma$, we say that $\gamma$ is \emph{in color $\xi$} if $\gamma$ is in $\tilde{\Gamma}_{\xi}$.  Note that $\bot $  has all colors.

The transition function $\delta'$ maps $(Q'-Q'_{hatl}) \times (\check{\Sigma}\cup\{\lambda\}) \times \Gamma'$ to $\PP(Q'\times (\Gamma')^*)$. Given a substring $z$ of input $x$, we say that $M$ \emph{read off} $z$ if $M$ starts with reading the leftmost symbol of $z$, continues reading the entire symbols of $z$, makes all possible $\lambda$-moves after reading an input symbol, and moves its tape head out of $z$ after processing the rightmost symbol of $z$. The notions of computation, computation path, and accepting/rejecting computation path are defined similarly to the case of npda's.

Given a color $\xi\in C$, we call a computation path of $M$ a {\em $\xi$-computation path} if all  configurations along this computation path use only stack symbols in color $\xi$.
In this case, such a computation path is also said to be \emph{colored} (in color $\xi$). Most importantly, we demand that all computation paths of $M$ should be colored.
An {\em output} of $M$ on input $x$ is composed of all colors $\xi$ in $C$ for which there is an accepting $\xi$-computation path of $M$ on $x$.

Let us verify that the aforementioned CFLSV function $g$ can be computed by an appropriately chosen colored automaton.

\begin{lemma}\label{construction}
Assuming $g\in\cflsv$, there exists a colored automaton $M$ that computes $g$.
\end{lemma}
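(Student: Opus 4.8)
The plan is to convert the output-tape npda $N$ for $g$ into an equivalent colored automaton $M$ by exploiting the fact, promised in the text, that $g$ is a refinement of $h_3$ and hence every valid output of $N$ has length at most $5$ (the longest value $0^i1^j$ in any $h_3(w)$ is $0^21^3$). First I would record the set $V$ of all strings over $\Theta=\{0,1\}$ of length at most $5$; this is a fixed finite set independent of the input. Since $g$ is single-valued, on any input $x$ the machine $N$ produces at most one valid output string, and that string lies in $V$. The idea is to let the color set be $C=V$ (or the subset of $V$ actually produced by $N$), so that the eventual output of $M$---the set of colors admitting an accepting colored computation path---coincides with the single value $g(x)$.

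The main construction step is to have $M$ guess, nondeterministically, at the very start of the computation which output value $v\in V$ the simulated path of $N$ is going to write, and then commit to that guess by working entirely inside the stack-symbol color class $\Gamma_{v}$. Concretely, I would take $|C|$ disjoint colored copies of $N$'s stack alphabet, one copy $\Gamma_{v}$ per candidate output $v\in V$, and build $M$ so that after the initial guess of $v$ it simulates $N$ step for step, always pushing the $v$-colored version of whatever stack symbol $N$ would push. The inner states of $M$ will carry, in addition to $N$'s state, a bounded amount of bookkeeping: the guessed target $v$ and a counter for how many output symbols $N$ has written so far (at most $5$, hence finite). Whenever $N$ would write a non-blank output symbol, $M$ checks that this symbol agrees with the next symbol of the guessed $v$ and advances the counter; if it ever disagrees, $M$ diverts into $q'_{rej}$. $M$ accepts along this $v$-colored path exactly when $N$ reaches an accepting configuration having written precisely all of $v$. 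By construction every configuration on such a path uses only color-$v$ stack symbols, so it is a genuine $v$-computation path in the sense of the definition, and there is an accepting $v$-computation path of $M$ on $x$ if and only if $N$ has an accepting computation path on $x$ outputting $v$, i.e. if and only if $v\in g(x)$.

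Because $g$ is single-valued this holds for exactly one $v$ (when $x\in\dom(g)$) and for no $v$ otherwise, so the output of $M$ as defined---the set of colors with an accepting colored path---is exactly $g(x)$, which is the claim. I would also note two routine points: the bottom marker $Z_0$ is shared across all colors (it already carries all $|C|$ colors by fiat in the definition), so the guessed-color discipline is only imposed on the genuine stack symbols $\Gamma-\{Z_0\}$, consistent with the partition requirement; and $M$ inherits $N$'s linear-time termination, since the only additions are the one-time initial guess and finite state bookkeeping. The main obstacle is bounding the output length to justify that $C$ can be taken finite: I must argue carefully that every \emph{valid} (accepting) output of $N$ is a value of $h_3$ and therefore lies in the fixed finite set $V$, using that $g\sqsubseteq_{ref}h_3$ forces $g(x)\subseteq h_3(x)$ and that every element of every $h_3(w)$ has the form $0^i1^j$ with $(i,j)\in I_3$. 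Invalid (rejecting) outputs of $N$ are irrelevant, since they are discarded, so no length bound on them is needed; this is exactly why handling \emph{valid} outputs suffices and why the colored-automaton reformulation is legitimate.
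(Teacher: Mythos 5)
Your construction is essentially the paper's own proof: there too, $M$ guesses the output value $t\in\bar{I}_3=\{0^i1^j\mid (i,j)\in I_3\}$ at the very first step, simulates $N$ using $t$-colored copies of the stack symbols, tracks the output written so far in the finite state (legitimate because $|t|$ is bounded by a constant), and accepts precisely when $N$ accepts with output equal to $t$. The only cosmetic differences are that the paper takes the color set to be exactly $\bar{I}_3$ rather than all binary strings of length at most $5$, and checks the output match at the final step (via states $(q,t,\tau)$ with $\tau\in\bar{I}^{part}_3$) instead of rejecting eagerly on the first mismatched symbol.
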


\begin{proof}
Associated with the set $I_3$ introduced in Section \ref{sec:example-function}, we define a new set $\bar{I}_3=\{0^i1^j\mid (i,j)\in I_3\}$ and another set $\bar{I}^{part}_3$ composed of all substrings of any string in $\bar{I}_3$.
Notice that $\lambda\in \bar{I}^{part}$.
Let us recall the aforementioned npda $N =  (Q,\Sigma,\{\cent,\dollar\},\Gamma,\Theta,\delta,q_0,\bot,  Q_{acc},Q_{rej})$ that computes $g$ with its  write-only output tape.
Now, we wish to construct a new colored automaton $M=(Q',\Sigma,\{\cent,\dollar\},\Gamma',\bar{I}_3, \delta',q'_0,\bot, Q'_{acc},Q'_{rej})$ that simulates $N$.

We start by setting $Q'= Q\times (\bar{I}_3\cup\{\lambda\})\times \bar{I}^{part}_3$ and $q'_0 = (q_0,\lambda,\lambda)$. In addition,  let $\Gamma'=\{\bot \}\cup \{\tau^{(t)}\mid \tau\in \Gamma-\{\bot \},t\in\bar{I}_3\} \cup \bar{I}_3$.
Intuitively, taking an input $x$, $M$ first {\em guesses} (\ie nondeterministically chooses) an output string $t$ of $g(x)$. Note that, at the first step of $M$ on the input $x$, $M$ pushes $t$ into its stack as a stack symbol in $\Gamma'$.
Whenever $N$ pushes $u$ to its stack along a specific computation path $\gamma$, $M$ pushes the corresponding color-$t$ symbol $u^{(t)}$ into its own stack. Further along this computation path $\gamma$, $M$ keeps using only color-$t$ stack symbols, which are marked by
the superscript ``$(t)$.''
Instead of having an output tape, $M$ remembers the string $t$ produced  on $N$'s output tape. This is possible because $t$ is of a constant length. Whenever $N$ enters an accepting state, say, $q_{acc}$ with an output string that matches the initially guessed string $t$ of $M$, $M$ enters an appropriate accepting state, say, $(q_{acc},t,t)$. In other cases, $M$ rejects the input immediately.

To realize this intuition, we formally define $M$'s transition function $\delta'$ based on $\delta$ as follows.
Assume that, at the first step, $N$ applies a transition of the form $(p,u\bot,\xi)\in \delta(q,\cent,\bot )$. The corresponding transition of $M$ is $((p,t,\xi),u^{(t)}\bot )\in \delta'(q'_0,\cent,\bot )$ for all $t \in \bar{I}_3$.
If $N$ makes a transition of the form $(p,w,\xi)\in \delta(q,\sigma,\gamma)$ with $\sigma\neq\cent$, then  $M$ applies a transition $((p,t,\tau\xi),w^{(t)})\in \delta'((q,t,\tau),\sigma,\gamma^{(t)})$ as long as $M$ is in inner state $(q,t,\tau)$, where $w^{(t)}$ is defined recursively to be  $u^{(t)}$ if $w=u\in\Gamma$, and
$u^{(t)}_1u^{(t)}_2$ if $w=u_1u_2$. In the end of computation, assume that $N$'s transition is of the form $(p,w,\xi)\in\delta(q,\dollar,\gamma)$ with $p\in Q_{halt}$. If $p\in Q_{acc}$ and $t=\tau\xi$, then we set $((p,t,t),w^{(t)})\in\delta'((q,t,\tau),\dollar,\gamma^{(t)})$.
In contrast, if $p\in Q_{rej}$, then we set $((p,t,\tau\xi),w^{(t)})\in\delta'((q,t,\tau),\dollar,\gamma^{(t)})$. However, if $p\in Q_{acc}$ but $t\neq \tau\xi$, then we use the new symbol $q'_{rej}$ and set $((q'_{rej},t,\tau\xi),w^{(t)})\in \delta'((q,t,\tau),\dollar,\gamma^{(t)}))$. Finally, we define $Q'_{acc} =\{ (q,t,t)\mid q\in Q_{acc},t\in \bar{I}_3\}$ and  $Q'_{rej} =\{(q,t,s)\mid t\neq s, q\in Q_{rej}\cup\{q'_{rej}\},t\in \bar{I}_3, s\in \bar{I}_3^{part}\}$.
\end{proof}

To simplify notation in our argument, we describe the colored automaton $M$ guaranteed by Lemma \ref{construction} as
$(Q,\Sigma,\{\cent,\dollar\},\Gamma,I_3,\delta,q_0,\bot, Q_{acc},Q_{rej})$. Notice that we consciously use $I_3$ instead of $\bar{I}_3$. For the subsequent analysis of the behaviors of $M$, it is also useful to restrict the behavior of $M$. A colored automaton $M$ is said to be {\em in an almost ideal shape} if $M$ satisfies all of the following six conditions.

\renewcommand{\labelitemi}{$\circ$}
\begin{enumerate}\vs{-2}
  \setlength{\topsep}{-2mm}%
  \setlength{\itemsep}{0mm}%
  \setlength{\parskip}{0cm}%

\item There are only one accepting state $q_{acc}$ and one rejecting state $q_{rej}$. Moreover, the set $Q$ of inner states equals $\{q_0,q,q_{acc},q_{rej}\}$. The machine is always in state $q$ during its computation except for the initial and final configurations.

\item An input-tape head  always moves to the right until it reaches $\dollar$.

\item The machine never aborts its computation; that is, $\delta$ is a \emph{total function} (\ie $\delta(q,\sigma,\gamma)\neq\setempty$ holds for any $(q,\sigma,\gamma)\in (Q-Q_{halt})\times \check{\Sigma} \times \Gamma$).

\item The machine never enters any halting  state before scanning the right endmarker $\dollar$.

\item As each stack operation, the machine (i) pops the topmost stack symbol, (ii) replaces the topmost stack symbol by another single stack symbol, or (iii) pushes extra one symbol onto the top of the stack after (possibly)   altering the then-topmost symbol; that is,  the range of $\delta$ must be of the form  $\PP(Q\times \Gamma^{\leq 2} \times  (\Theta\cup\{\lambda\}))$, where $\Gamma^{\leq k}$ is the set $\{\gamma\in\Gamma^*\mid |\gamma|\leq k\}$.

\item The stack never becomes empty  at any step of the computation except for the initial and the final configurations. In addition, at the first step of reading $\cent$, the machine  must push a stack symbol onto $\bot $ and this stack symbol  determines the stack color in the rest of its computation path. Before entering any halting state, the stack must become empty.
\end{enumerate}

It is well-known that, for any  context-free language $L$, there always exists an
npda (with no output tape) in an almost ideal shape that recognizes $L$ (see, \eg \cite{HU79}). Similarly, we can assert the following statement for colored automata.

\begin{lemma}[Almost Ideal Shape Lemma]\label{modify-ideal-shape}
Given any colored automaton, there is always another colored automaton in an almost ideal shape that produces the same set of output values.
\end{lemma}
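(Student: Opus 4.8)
The plan is to reduce the lemma to a purely language-theoretic normalization, carried out separately inside each color class. Recall that the output of a colored automaton on an input $x$ is the set $\{\xi\in C\mid x\in L_{\xi}\}$, where $L_{\xi}$ is the set of inputs admitting an accepting $\xi$-computation path. Since a $\xi$-computation path uses only stack symbols of color $\xi$ (together with $Z_0$), these paths are exactly the computations of the ordinary npda $M_{\xi}$ obtained from the given colored automaton by discarding every stack symbol and every transition not of color $\xi$; hence each $L_{\xi}$ is a context-free language. Consequently, \emph{producing the same set of output values} is equivalent to \emph{recognizing the same $L_{\xi}$ for every} $\xi\in C$, and it suffices to build, for each color, a real-time npda for $L_{\xi}$ whose stack symbols all carry color $\xi$, and then to glue these machines together under a single initial color guess.

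The heart of the construction is the normalization of a single $L_{\xi}$. First I would pass from $M_{\xi}$ to a context-free grammar $G_{\xi}$ (the standard npda-to-grammar translation), and then put $G_{\xi}$ into \emph{quadratic} (double) Greibach normal form, so that every production has the shape $A\to a$, $A\to aB$, or $A\to aBC$ with $a$ a terminal. The associated npda reads exactly one input symbol at each step and, on that step, pops the topmost nonterminal and pushes at most two symbols; this is precisely what is needed to satisfy the condition that the input head always advances to the right \emph{together with} the requirement that the range of $\delta$ lie in $\PP(Q\times\Gamma^{\leq 2}\times(\Theta\cup\{\lambda\}))$ --- the two conditions that are usually in tension, since eliminating long pushes normally reintroduces $\lambda$-moves. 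The Greibach form resolves this tension in one stroke. Folding the finite control into the stack (the classical single-state simulation) leaves only $q_0,q,q_{acc},q_{rej}$, with the machine residing in $q$ throughout the scan, which gives the one-working-state condition. I would make $\delta$ total by routing every otherwise-undefined move into a ``garbage'' color-$\xi$ symbol that the machine keeps reading past until $\dollar$; this also lets every accept/reject decision be deferred to the right endmarker, so that no halting state is entered before $\dollar$ is scanned.

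To meet the stack discipline I would have the machine push a single color-$\xi$ symbol onto $Z_0$ at the first step (reading $\cent$), thereby fixing the color of the path, and I would keep a distinguished bottom sentinel nonterminal just above $Z_0$ that is popped \emph{only} on the transition reading $\dollar$. In a leftmost derivation the stack contents track the nonterminals of the current sentential form, so the stack stays nonempty throughout the scan of $x$ and becomes empty exactly at the halting configuration at $\dollar$; this realizes the final condition that the stack is empty only at the initial and final configurations and is empty before halting. Colors are preserved by construction: every auxiliary symbol introduced while processing color $\xi$ is itself declared to be of color $\xi$, and $Z_0$ retains all colors, so a path of the new machine is a $\xi$-path if and only if it simulates $M_{\xi}$. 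Finally I would combine the per-color machines over the disjoint colored alphabets $\{\Gamma_{\xi}\}_{\xi\in C}$: the first push nondeterministically selects $\xi$, and from then on the color of the stack top deterministically dictates which color-$\xi$ transitions may fire, so the simulations never interfere. Thus for each $x$ the new automaton has an accepting $\xi$-path iff $x\in L_{\xi}$, giving the same output set, and linearity of the running time is immediate because the machine is real-time plus $O(1)$ endmarker steps.

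The step I expect to be the main obstacle is exactly the simultaneous enforcement of the ``head always moves right'' condition and the bounded-push condition while still emptying the stack precisely at $\dollar$: naive $\lambda$-move elimination, or naive splitting of long pushes, breaks one of these, and real-time npda's are notoriously delicate. Invoking quadratic Greibach normal form is what makes all of these requirements compatible, and the remaining work is bookkeeping --- keeping the bottom sentinel alive until $\dollar$, and assigning the correct color to each fresh auxiliary symbol so that neither spurious accepting paths nor cross-color paths are ever created.
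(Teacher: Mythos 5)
Your proof is correct in substance, but it takes a genuinely different route from the paper's. The paper never leaves the automaton formalism: its appendix performs an eight-step in-place surgery on the given colored automaton (removing useless symbols, collapsing the halting states, encoding state changes into stack symbols to reach the four-state form, totalizing $\delta$, eliminating nullable symbols and single-symbol replacements, a ``loop-delay'' conversion for left-recursive $\lambda$-pushes, an ordering argument to kill all remaining $\lambda$-moves, and finally a push-width reduction to $\Gamma^{\leq 2}$), painting every freshly introduced stack symbol with the color of the symbol it derives from so that colors survive each local step. You instead decompose first --- observing, correctly, that ``same output set'' is exactly ``same $L_{\xi}$ for every $\xi$,'' and that restricting to color-$\xi$ symbols yields an ordinary npda $M_{\xi}$ --- and then invoke the classical toolchain per color: npda to grammar, quadratic (2-standard) Greibach normal form, and the single-state leftmost-derivation simulation, reassembling under an initial color guess. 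This is a legitimate shortcut: the decomposition trivializes all color bookkeeping (every auxiliary symbol of the $\xi$-machine is simply declared color $\xi$, and disjointness of the $\Gamma_{\xi}$ prevents cross-color interference), and the existence of quadratic GNF plus the real-time single-state simulation delivers conditions (1), (2), (4), and (5) in one stroke, exactly as you say --- whereas the paper must re-derive these grammar-style transformations by hand at the automaton level, precisely because it keeps all colors in a single machine throughout. What the paper's version buys in exchange is a fully self-contained, explicitly effective construction in the colored-automaton formalism (it even notes the conversion works without the termination condition; your route also does, trivially, since the output machine is real-time). Three corner cases you should still patch, though each is routine and equally glossed in the paper's own sketch: (i) $\lambda\in L_{\xi}$ is not derivable in GNF and must be absorbed into the $\cent$/$\dollar$ steps; (ii) $L_{\xi}=\setempty$ has no grammar at all, so that color needs a dead symbol routed to rejection at $\dollar$; and (iii) condition (6) literally asks the stack to be empty before \emph{any} halting state, and a rejecting path can reach $\dollar$ with unconsumed nonterminals on the stack --- with no $\lambda$-moves available these cannot all be popped in one step, so either rejecting paths must be exempted (as the paper's own usage suggests, since its analysis only exploits accepting paths) or your garbage mode must proactively flush the stack before $\dollar$.
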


For readability, we place the proof of Lemma \ref{modify-ideal-shape} in Appendix.

In the rest of this section, we fix the colored automaton $M$, which computes $g$ correctly.
We further assume  by Lemma \ref{modify-ideal-shape} that $M$ is in an almost ideal shape.


\begin{figure}[t]
\centering
\includegraphics*[height=4.0cm]{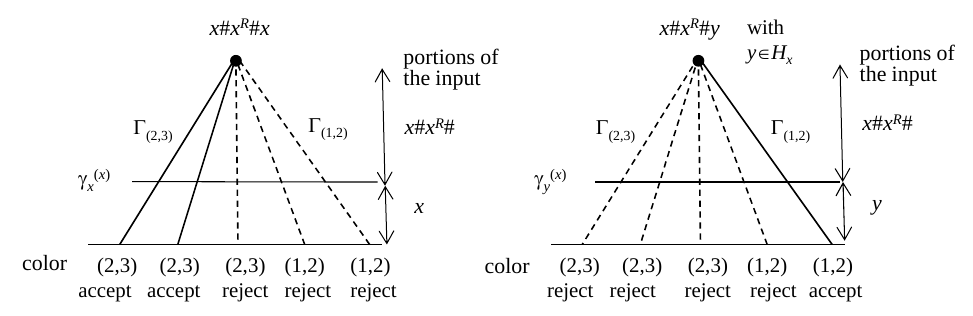}
\caption{Two computation trees of the colored automaton $M$ on two inputs $x\#x^Rx$ and $x\#x^R\#y$ with $y\in H_x$. The color of each computation path and its acceptance/rejection are written in the bottom of the figures.}\label{fig:colored-automata}
\end{figure}


Hereafter, let us focus on inputs of the form $x\#x^R\#y$ for $x,y\in\{0,1\}^*$. For any string $x\in\{0,1\}^*$, we abbreviate  the set $\{y\in\{0,1\}^{|x|}\mid y\not\in\{x,x^R\}\}$ as $H_x$.
Given a number  $n\in\nat^{+}$,  $D^{(n)}_{(i,j)}$ denotes the set of all strings $x\in\{0,1\}^n$ for which there exists an accepting $(i,j)$-computation path of $M$ on input $x\#x^R\#x$. Since $g$ is single-valued, it follows that $D^{(n)}_{(1,2)}\cup D^{(n)}_{(2,3)}=\{0,1\}^n$. We therefore conclude that,  for every length $n$,  either $|D^{(n)}_{(1,2)}|\geq2^n/2$ or $|D^{(n)}_{(2,3)}|\geq 2^n/2$ (or both) holds.
We will discuss  in Sections \ref{sec:case-one}--\ref{sec:stack-size} the case where  $|D_{(2,3)}^{(n)}|\geq 2^n/2$ holds for infinitely many $n$'s and consider  in Section \ref{sec:case-two} the case where  $|D_{(2,3)}^{(n)}|\geq 2^n/2$ holds only finitely many $n$'s (thus, $|D_{(1,2)}^{(n)}|\geq 2^n/2$ holds for infinitely many $n$'s). To complete the proof of Proposition \ref{h3-no-refinement}, it suffices for us to obtain contradictions in both cases.

\subsection{Case 1: $D^{(n)}_{(2,3)}$ is Large for Infinitely Many  Lengths $n$}\label{sec:case-one}

Let us first consider the case where the relation
$|D^{(n)}_{(2,3)}|\geq 2^{n}/2$ holds for infinitely many lengths $n\in\nat$.
Take an arbitrary number $n\in\nat$ that is significantly larger than $3^{|Q|+|\Sigma|+|\Gamma|}$ and also satisfies
$|D^{(n)}_{(2,3)}|\geq 2^{n}/2$. We fix such a number $n$ throughout our
proof and we thus tend to drop script ``$n$'' whenever its omission is clear from the context; for instance, we intend to write $D_{(2,3)}$ instead of $D^{(n)}_{(2,3)}$.

By the property of the colored automaton $M$ computing $g$, it follows that, for any pair $x,y\in\{0,1\}^n$, if $y\in H_x$ (i.e., $y\notin\{x,x^R\}$),  then, for the input $x\#x^R\#y$, there always exists a certain accepting $(1,2)$-computation path of $M$. See Figure \ref{fig:colored-automata}.  However, since $g$ is single-valued,  there must be no \emph{accepting} $(1,2)$-computation path of $M$ on the input $x\#x^R\#x$ for every $x$ in $D_{(2,3)}$. In addition, no accepting $(1,2)$-computation path exists on all inputs of the form $x\#z\#y$ if $z\neq x^R$.
Since there could be a large number of accepting $(1,2)$-computation paths of $M$ on $x\#x^R\#y$, we need to choose one of them arbitrarily and take a close look at this particular computation path.

For the aforementioned purpose, we denote by $PATH_n$ the set of all possible \emph{accepting} $(1,2)$-computation paths of $M$ on inputs of the form $x\#x^R\#y$ for any two strings $x,y\in\{0,1\}^n$, and we fix a {\em partial assignment}  $\pi: D_{(2,3)} \times\{0,1\}^n \rightarrow PATH_n$ that, for any element $(x,y)$, if $y\in H_{x}$, then $\pi$ picks one of the accepting $(1,2)$-computation paths of $M$ on the input $x\#x^R\#y$; otherwise, let $\pi(x,y)$ be \emph{undefined}, for simplicity. Hereafter, we abbreviate $\pi(x,y)$ as $p_{x,y}$. Thus, $p_{x,y}$ is uniquely determined from $(x,y)$ whenever $\pi(x,y)$ is defined.

Along the unique accepting $(1,2)$-computation path $p_{x,y}$ of $M$ on each  input $x\#x^R\#y$, the notation $\gamma^{(x)}_{i,y}$ is used to denote a stack content obtained by $M$ just after reading off the first $i$ symbols of $x\#x^R\#y$ (making all possible $\lambda$-moves).
Note that, for each $x\in D_{(2,3)}$ and any  $y\in H_{x}$, along the  accepting $(1,2)$-computation path $p_{x,y}$ associated with the input $x\#x^R\#y$,
$M$ produces unique stack contents $\gamma^{(x)}_{|x\#|,y}$ and $\gamma^{(x)}_{|x\# x^R\#|,y}$.
For convenience,  we abbreviate as $\gamma_{y}^{(x)}$  the stack content $\gamma^{(x)}_{|x\#x^R\#|,y}$, which is produced by $M$ just after reading the substring $x\#x^R\#$ of the input $x\#x^R\#y$.

In Sections \ref{sec:fundamental-prop}--\ref{sec:case-two}, we plan to evaluate how many strings $x$ in $D_{(2,3)}$ satisfy each of the following conditions.

\renewcommand{\labelitemi}{$\circ$}
\begin{enumerate}\vs{-1}
  \setlength{\topsep}{-2mm}%
  \setlength{\itemsep}{0mm}%
  \setlength{\parskip}{0cm}%

\item Strings $x$ make $\gamma_y^{(x)}$ small in size for all strings $y$ in $H_x$.

\item Strings $x$  make $\gamma_y^{(x)}$  relatively large in size for certain strings $y$ in $H_x$.
\end{enumerate}

Proposition \ref{size-stack-content} gives a lower bound on the number of strings $x$ satisfying  Condition (1), whereas Propositions \ref{stack-height-bound} and \ref{symbol-length-small} provide lower bounds on the number of strings $x$ for Condition (2). If these two bounds are large enough, then they guarantee the existence of a string that meet both conditions, clearly leading to a contradiction. Therefore, we conclude that $M$ cannot exist, closing Case 1.
Proposition \ref{size-stack-content}  will be verified in Section \ref{sec:fundamental-prop} and Propositions \ref{stack-height-bound} and \ref{symbol-length-small} will appear in Section \ref{sec:stack-size}.

\subsection{Fundamental Properties of a Stack History}\label{sec:fundamental-prop}

A key to our proof is an analysis of a stack history of the given colored automaton $M$.
In the following series of lemmas and a proposition, we will explore  fundamental properties of a stack history of $M$ along accepting $(1,2)$-computation path $p_{x,y}$ on any input of the form $x\# x^R\# y$. Those properties are essential in dealing with Case 1.
We start with showing a simple property asserting that, in the above stack history, the same stack content does not appear twice or more.

\begin{lemma}\label{two-stack-difference}
Fix $x\in D_{(2,3)}$ and $y\in\{0,1\}^n$ arbitrarily. For any accepting $(1,2)$-computation path $p_{x,y}$ of $M$ on the input $x\#x^R\#y$, there is no pair $(i_1,i_2)$ of cell positions satisfying that $|x|<i_1<i_2\leq |x\#x^R\#|$ and $\gamma^{(x)}_{i_1,y} =  \gamma^{(x)}_{i_2,y}$. Moreover, the same statement is true when $1\leq i_1<i_2\leq |x|$.
\end{lemma}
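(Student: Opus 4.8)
The plan is to argue by contradiction with a cut-and-paste (pumping) argument that exploits the single computing state guaranteed by the Ideal Shape Lemma (Lemma~\ref{modify-ideal-shape}) together with the fact that the colored automaton $M$ computes a refinement $g$ of $h_3$. First I would fix coordinates: writing $w=x\#x^R\#y$ with $|x|=n$, the first separator sits at position $n+1$, the block $x^R$ occupies positions $n+2,\dots,2n+1$, the second separator sits at position $2n+2=|x\#x^R\#|$, and $y$ follows. Assuming toward a contradiction that there are positions $n<i_1<i_2\le 2n+2$ with $\gamma^{(x)}_{i_1,y}=\gamma^{(x)}_{i_2,y}=:\gamma$, I would first observe that both positions are intermediate (the block $y$ is still unread), so by the ideal shape $M$ is in the single computing state $q$ at each of them; hence the two configurations agree in both state and stack content and differ only in the head position.

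The central step I would carry out is to splice out the segment read strictly between $i_1$ and $i_2$. I would form the shorter input $w'$ by deleting the symbols $w[i_1+1\,..\,i_2]$ and then argue that $M$ still has an accepting $(1,2)$-computation path on $w'$: running $M$ along $p_{x,y}$ reproduces the original computation over the first $i_1$ symbols (they are unchanged), reaching $(q,i_1,\gamma)$; and since the remaining suffix $w'[i_1+1\,..\,]$ equals $w[i_2+1\,..\,]$ while the configuration $(q,\cdot,\gamma)$ coincides with the one the original path had at position $i_2$, $M$ can replay exactly the moves of $p_{x,y}$ from $i_2$ onward and still accept, using only the stack symbols (hence the single color) of $p_{x,y}$. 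By the correspondence between colors and output values ($M$ computes $g$) this gives $011\in g(w')$, and since $g\sqsubseteq_{ref}h_3$ it forces $011\in h_3(w')$. I would then derive a contradiction by a case split on where the deletion falls: since $i_1\ge n+1$, the deleted block lies inside positions $n+2,\dots,2n+2$, i.e.\ inside $x^R\#$. If $i_2\le 2n+1$, then $w'=x\#z\#y$ where $z$ is $x^R$ with a nonempty block removed, so $|z|<n$ and thus $z\ne x^R$, whence $011\notin h_3(x\#z\#y)$; if $i_2=2n+2$, the second separator is deleted, so $w'$ carries only one $\#$, hence $w'\notin L$ and $h_3(w')=\setempty$. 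Either way $011\notin h_3(w')$, contradicting the line above.

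For the ``moreover'' part ($1\le i_1<i_2\le n$) I would run the identical splice, which now deletes a nonempty chunk inside the first block and produces $w'=x'\#x^R\#y$ with $|x'|<n$; then $(x')^R\ne x^R$ by length, so again $011\notin h_3(w')$ while the spliced path forces $011\in h_3(w')$, a contradiction. The step I expect to be the main obstacle is the careful justification that the spliced sequence is a genuine accepting $(1,2)$-computation path: this is precisely where the ideal shape is indispensable, since it pins the control state to the fixed $q$ at every intermediate step (so two stack-equal configurations are truly interchangeable) and fixes the stack color at the very first move (so splicing preserves the color). A secondary point requiring care is the boundary case $i_2=2n+2$, where the contradiction arises from leaving $L$---a domain violation, using $\dom(g)=\dom(h_3)$---rather than from $z\ne x^R$.
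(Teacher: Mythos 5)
Your proposal is correct and takes essentially the same route as the paper's proof: splice out the segment read between the two positions with equal stack contents (state agreement being automatic from the ideal shape's single computing state $q$) and derive a contradiction from the fact that the shortened middle block can no longer equal $x^R$, so no accepting $(1,2)$-computation path can exist on the spliced input. If anything, you are more careful than the paper, which writes the spliced string as $x\#x'\#y$ and thereby glosses over your boundary case $i_2=|x\#x^R\#|$, where the second $\#$ is deleted and the contradiction instead comes from the spliced string leaving $L$ (hence $\dom(h_3)$).
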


\begin{yproof}
Assume that $M$ has an accepting $(1,2)$-computation path $p_{x,y}$ for the given strings $x\in D_{(2,3)}$ and $y\in\{0,1\}^n$.
If the first part of the lemma fails, then the inequality  $\gamma^{(x)}_{i_1,y}=\gamma^{(x)}_{i_2,y}$ must hold for a certain pair $(i_1,i_2)$ satisfying $|x|<i_1<i_2\leq |x\#x^R\#|$. We remove from $x\#x^R\# y$ all input symbols located in between cell positions $i_1+1$ and $i_2$ and then express the resulted string by $x\#x'\#y$. Since $\gamma^{(x)}_{i_1,y}=\gamma^{(x)}_{i_2,y}$,  we can obtain  a new $(1,2)$-computation path, along which  $M$ still enters an accepting state on this input $x\#x'\#y$. However, since  $x'\neq x^R$, there must be no accepting $(1,2)$-computation path on $x\#x'\#y$, a contradiction. The second part of the lemma follows by a similar argument.
\end{yproof}

Lemma \ref{two-stack-difference} can be generalized as follows.

\begin{lemma}\label{stack-difference-at-i}
Let $x_1,x_2,y_1,y_2\in\{0,1\}^n$ satisfy both $y_1\in H_{x_1}$ and $y_2\in H_{x_2}$ and let $i_1,i_2\in\nat$ satisfy
$1\leq i_1,i_2 \leq |x_1\#x_1^R\#|$.
Let $w_1=x_1\# x_1^R\# y_1$ and $w_2=x_2\# x_2^R\# y_2$.
Assume that one of the following three conditions holds: (i) $i_1\neq i_2$,  (ii) $1\leq i_1=i_2\leq |x_1\#|$
and $(w_1)_{i_1}\neq (w_2)_{i_2}$, and (iii) $|x_1\#|< i_1=i_2 \leq |x_1\#x_1^R\#|$ and $(w_1)_{i_1}\neq (w_2)_{i_2}$. If the two computation paths $p_{x_1,y_1}$ and $p_{x_2,y_2}$ exist, then  $\gamma^{(x_1)}_{i_1,y_1}\neq \gamma^{(x_2)}_{i_2,y_2}$ holds.
\end{lemma}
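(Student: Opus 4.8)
Lemma \ref{stack-difference-at-i} is a two-path generalization of the single-path cut argument used in Lemma \ref{two-stack-difference}, so the plan is to argue by contradiction via splicing of computation paths. Suppose, to the contrary, that $\gamma^{(x_1)}_{i_1,y_1}=\gamma^{(x_2)}_{i_2,y_2}$. Because $M$ is in an ideal shape, at the moment the head has moved off the $i_1$-th cell along $p_{x_1,y_1}$ and off the $i_2$-th cell along $p_{x_2,y_2}$, the machine sits in the unique working state $q$ in both cases, and both paths are $(1,2)$-computation paths, so their stacks use only color-$(1,2)$ symbols. Hence the two configurations (state, stack, color) coincide at these synchronization points, and I can \emph{splice}: follow $p_{x_1,y_1}$ while the head reads the first $i_1$ symbols of $x_1\#x_1^R\#y_1$, and then continue exactly as $p_{x_2,y_2}$ does after its $i_2$-th cell. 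This produces a genuine accepting $(1,2)$-computation path of $M$ on the hybrid input
\[
  u = (x_1\#x_1^R\#y_1)_{i_1}\, v ,
\]
where $v$ is the factor of $x_2\#x_2^R\#y_2$ lying strictly to the right of its $i_2$-th cell. The engine of the contradiction is that $M$ computes a refinement $g$ of $h_3$: an accepting $(1,2)$-path on any $a\#b\#c$ would place $01^2$ in $g(a\#b\#c)\subseteq h_3(a\#b\#c)$, which forces both $a\#b\#c\in L$ and $b=a^R$. So it suffices to show that, under each of (i)--(iii), the hybrid $u$ either fails to lie in $L$ or has the form $a\#b\#c$ with $b\neq a^R$.

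To carry this out I track the block structure of $u$ by bookkeeping the two $\#$'s, recording how many $\#$'s the prefix $(x_1\#x_1^R\#y_1)_{i_1}$ and the suffix $v$ contribute according to the region of $i_1$ and $i_2$ (before the first $\#$, strictly between the two $\#$'s, or at the second $\#$). For case (i), where $i_1\neq i_2$: when $i_1$ and $i_2$ lie in different regions the $\#$-counts of $u$ fail to sum to $2$, so $u\notin L$; when they lie in the same region, $u=a\#b\#c\in L$ but the block straddling the splice has length $n+(i_1-i_2)\neq n$, so $b=a^R$ is impossible for length reasons. Either way $u$ admits no accepting $(1,2)$-path, a contradiction.

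For cases (ii) and (iii) we have $i_1=i_2=:i$, so the block lengths of $u$ are correct and $u=a\#b\#c\in L$; the contradiction must instead come from a symbol mismatch. In case (ii), $i\le|x_1\#|$, the splice lies within the first block or at the first $\#$, so the first block $a$ of $u$ inherits a length-$i$ initial part from $x_1$ while its partner block remains $x_2^R$; thus the requirement $b=a^R$, i.e. $a=x_2$, is destroyed precisely by the assumed prefix mismatch $(x_1)_{i}\neq(x_2)_i$. Case (iii) is the mirror-image argument carried out inside the $x^R$-block: the hypothesis $(x_1)_{|x_1\#|}=(x_2)_{|x_1\#|}$ aligns the two computations on the first block so that $a=x_1$, while the middle block $b$ of $u$ mixes the already-read portion of $x_1^R$ with the unread portion of $x_2^R$; after reversal, $b=x_1^R$ reduces to agreement of the \emph{complementary-length} prefixes of $x_1$ and $x_2$ (position $i$ in the $x^R$-block controls the length-$(2n+1-i)$ prefix), which the mismatch $(x_1)_i\neq(x_2)_i$ rules out. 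In all three cases we contradict the nonexistence of an accepting $(1,2)$-path on $u$, whence $\gamma^{(x_1)}_{i_1,y_1}\neq\gamma^{(x_2)}_{i_2,y_2}$.

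The only real obstacle is the positional bookkeeping of the last two paragraphs: one must verify that the stated index conditions correspond \emph{exactly} to the failure of ``$u=a\#b\#c$ with $b=a^R$,'' being especially careful at the two $\#$-boundaries and, inside the $x^R$-block, with the reversal (mirror) indexing that pairs position $i$ with the complementary length $2n+1-i$. A secondary point to confirm is the legitimacy of the splice itself---that matching the working state $q$, the stack, and the color at a ``just-moved-off-cell'' point makes the concatenated sequence of moves a bona fide accepting $(1,2)$-computation path on $u$---which is exactly what the ideal-shape normalization of Lemma \ref{modify-ideal-shape} guarantees.
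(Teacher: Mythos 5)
Your proposal is correct and takes essentially the same route as the paper: a proof by contradiction that splices (swaps) the two accepting $(1,2)$-computation paths at the point where the stack contents---and, by the ideal-shape normalization, the unique working state---coincide, then rules out the resulting hybrid input via the $\#$-count/block-length mismatch in case (i) and via the prefix (resp.\ mirrored, complementary-length prefix) mismatch in cases (ii) and (iii). The paper merely performs the swap in the opposite direction in case (ii) and dismisses case (iii) as ``a similar argument'' with a brief remark on the mirrored index $\ell-i$, so your bookkeeping is, if anything, more explicit than the original.
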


\begin{yproof}
Let $(x_1,x_2,y_1,y_2,i_1,i_2)$ be parameters given as in the premise of the lemma.
We consider the three cases (i)--(iii) separately.
The first case of $i_1\neq i_2$ comes from an argument similar to the proof
of Lemma \ref{two-stack-difference}. In what follows, we therefore assume that $i_1=i_2$ and denote them as $i$ for simplicity.

Next, we consider the second case where  $1\leq i \leq |x_1\#|$ and $(w_1)_i\neq (w_2)_i$.
Toward a contradiction, we assume that $\gamma^{(x_1)}_{i,y_1}=\gamma^{(x_2)}_{i,y_2}$. Assume that there are two accepting $(1,2)$-computation paths $p_{x_1,y_1}$ and $p_{x_2,y_2}$ generated by $M$ respectively on the inputs $w_1$ and $w_2$.
Notice that $i<|x_1|$ because, otherwise, $(w_1)_i=(w_2)_i$ follows.
Take a unique nonempty string $u_1$ satisfying $x_1=(x_1)_iu_1$.
Since $\gamma^{(x_1)}_{i,y_1}=\gamma^{(x_2)}_{i,y_2}$ by our assumption, we can swap the initial segments of these two computation paths restricted to the first $i$ steps, corresponding to  the first $i$ bits of the above two inputs. As a result, we obtain another accepting $(1,2)$-computation path on the input $(x_2)_iu_1\#x_1^R\#y_1$. Since $M$ precisely computes $g$,
$(x_2)_{i}u_1$ must equal $x_1$.
From this, $(w_1)_{i}=(w_2)_i$ follows instantly, a clear contradiction against $(w_1)_i\neq (w_2)_i$. Therefore, we obtain  $\gamma^{(x_1)}_{i,y_1}\neq \gamma^{(x_2)}_{i,y_2}$.

A similar argument can handle the third case, in which both $|x_1\#|<i \leq|x_1\#x_1^R\#|$ and $(x_1)_i\neq (x_2)_i$ hold.
Firstly, we claim that $x_1\neq x_2$ because, otherwise, $x_1\#x_1^R\#$ coincides with $x_2\#x_2^R\#$, contradicting $(w_1)_i\neq (w_2)_i$. Since $x_1\neq x_2$, we obtain $i<|x_1\#x_1^R|$. Take a unique string $u_1$ that satisfies $(x_1\#x_1^R)_iu_1 = x_1\# x_1^R$. Similarly to the second case, assuming $\gamma^{(x_1)}_{i,y_1}=\gamma^{(x_2)}_{i,y_2}$, there must exist an accepting $(1,2)$-computation path of $M$ on the input $(x_2\#x_2^R)_iu_1\#y_1$. From this,  we can draw a conclusion that $x_2^Ru_1 = x_1^R$; thus, $(w_1)_i=(w_2)_i$ follows, contradicting our assumption.
\end{yproof}


In what follows, we want to estimate the number of strings $x$ in $D_{(2,3)}$ for which their corresponding stack contents  $\gamma^{(x)}_{y}$ are  small in size for an \emph{arbitrary} string $y$ in $H_x$. In what follows, we will show its lower bound, which is sufficiently large.

\begin{proposition}\label{size-stack-content}
There exist two constants $d_1,d_2\in\nat^{+}$, independent of $(n,x,y)$, such that $|\{x\in D^{(n)}_{(2,3)}\mid \forall y\in H_{x}\,[ |\gamma^{(x)}_{y}|< d_1]\}| \geq |D^{(n)}_{(2,3)}| - d_2$.
\end{proposition}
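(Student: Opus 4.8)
The plan is to establish the bound $|\gamma^{(x)}_y|<d_1$ for every $x\in D_{(2,3)}$ and every $y\in H_x$, where $d_1$ is a constant depending only on $|Q|$ and $|\Gamma|$, allowing at most $d_2$ exceptional strings $x$; this immediately yields the stated lower bound on the count. Notably, unlike the lower bounds of Condition (2), this upper bound is a pure consequence of the matching structure of $(1,2)$-paths and does not invoke single-valuedness. Fix a pair $(x,y)$ with $y\in H_x$ and the accepting $(1,2)$-computation path $p_{x,y}$ on $x\#x^R\#y$. Write $\ell=|x\#x^R\#|$, so the stack content at position $\ell$ is $\gamma^{(x)}_y=z_m z_{m-1}\cdots z_1 z_0$ with $z_0=Z_0$. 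Because $M$ is in an ideal shape, the inner state is constantly $q$ away from the endpoints, every step alters the top symbol or pushes exactly one symbol, and the entire stack $\gamma^{(x)}_y$ is emptied while $M$ scans the suffix $y\dollar$. Hence each non-bottom symbol $z_k$ is pushed exactly once, at an input position $\mu_k\le\ell$ during the scan of $x\#x^R\#$, and popped for the last time at an input position $\nu_k>\ell$ during the scan of $y\dollar$; moreover $\mu_1\le\cdots\le\mu_m$ and these final pops occur in the reverse order $z_m,\ldots,z_1$.

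The core of the argument is a coupled vertical pumping on $p_{x,y}$. I first sort the surviving symbols $z_1,\ldots,z_m$ by the block in which they were pushed: the first block (positions $1,\ldots,n$, scanning $x$), the second block (positions $n+2,\ldots,2n+1$, scanning $x^R$), or one of the two delimiter positions. The termination condition forbids a repeated configuration during a run of $\lambda$-moves at a fixed position, so each input position contributes only boundedly many pushes; therefore, once $m$ exceeds a suitable constant $d_1$, either the first block or the second block contributes more than $|\Gamma|$ surviving symbols, pushed at at least two distinct positions. By pigeonhole I then select indices $k<k'$ lying in the same block, pushed at distinct positions $\mu_k<\mu_{k'}$, with $z_k=z_{k'}$. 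Since $z_k$ survives to $\ell$, the levels $1,\ldots,k$ are untouched throughout $(\mu_k,\ell)$, and the inner state is $q$ both at the two pushes and at the two final pops. The pumping-down construction then applies: I excise the input factor read strictly between $\mu_k$ and $\mu_{k'}$ (together with all transient stack activity above level $k$ in that window) and the tail factor read between the final pops of $z_{k'}$ and $z_k$, splicing the push of $z_{k'}$ onto $z_k$ and the post-pop of $z_k$ onto the pop of $z_{k'}$. The result is again a legitimate accepting $(1,2)$-computation path, now on a strictly shorter input.

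Confining both pushes to a single block guarantees that the excised prefix factor lies entirely inside the first block (or entirely inside the second block) and is nonempty, while the tail excision touches only the $y\dollar$ region; in particular both delimiters $\#$ survive. If the excision is in the first block, the new input has the form $\hat{x}\#x^R\#\hat{y}$ with $|\hat{x}|<n=|x^R|$, so its second block $x^R$ is no longer the reverse of its first block $\hat{x}$; symmetrically, a second-block excision yields $x\#\hat{z}\#\hat{y}$ with $\hat{z}\neq x^R$. In either case the new input $a\#b\#c$ satisfies $b\neq a^R$, whence $01^2\notin h_3(a\#b\#c)$, and, since $g$ refines $h_3$, there can be no accepting $(1,2)$-computation path on it---exactly the contradiction exploited in the proof of Lemma \ref{two-stack-difference}. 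This rules out $|\gamma^{(x)}_y|\ge d_1$ except for the degenerate configurations in which every repeated surviving symbol forces an empty or a $\#$-straddling, palindrome-preserving excision; these occur for only a bounded number $d_2$ of strings $x$. Consequently at most $d_2$ strings of $D_{(2,3)}$ violate the bound, which is the claim.

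The step I expect to be the main obstacle is making the coupled vertical pumping fully rigorous. Unlike Lemma \ref{two-stack-difference}, whose excision stays within a single scanning phase, here one stack symbol is created in the $x\#x^R\#$ region but destroyed in the $y\dollar$ region, so the push-side and pop-side excisions must be carried out simultaneously and shown to glue into a single valid computation path that still consumes a well-formed input. Equally delicate is ensuring that the chosen repeated symbol is pushed at two genuinely distinct positions inside one block---so that the excised prefix factor is nonempty and asymmetric about the central $\#$---since this is exactly what breaks the match $b=a^R$ and what pins down the constants $d_1$ and $d_2$.
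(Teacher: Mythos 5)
Your proposal is essentially correct, but it takes a genuinely different route from the paper's. The paper bounds $|\gamma^{(x)}_{y}|$ ``horizontally'': it picks a size-minimal stack content $\gamma^{(x)}_{\ell,y}\in MSC_{x,y}$ between the two $\#$'s, decomposes the input as $x=rz$, $x^R=z^Rsr'$ with $z\#z^Rs\in TF_{M}(\tau,\sigma)$ and $r'\#r''\in TF_{M}(\sigma,\lambda)$, and proves by cross-input swapping arguments (Claims \ref{segment-stack-content} and \ref{unique-r'}) that $s$ is uniquely determined by the pair $(\tau,\sigma)$ and $r'$ by $\sigma$; hence $|s|$ and $|r'|$, and so $|r|=|s|+|r'|$, are absolute constants, and since the stack grows by at most one symbol per step, reading the constant-length $r$ and $r'$ can only build a constant-height stack. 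Its exceptional set of size $d_2$ comes from the degenerate case $\ell=|x\#|$ in Lemma \ref{TF-bound}. You instead argue ``vertically'' within the single path $p_{x,y}$: a tall stack at position $|x\#x^R\#|$ forces, by pigeonhole, two surviving levels created at distinct positions inside one block, and you excise the push-side factor and the pop-side factor simultaneously, obtaining an accepting $(1,2)$-computation path on an input whose first two blocks have mismatched lengths---the same contradiction mechanism as in Lemma \ref{two-stack-difference}, applied in a coupled, two-sided form. Your route is more elementary (classical pushdown vertical pumping adapted to a two-phase excision), is self-contained rather than relying on the $TF_{M}$/$MSC$ machinery, and in fact yields a stronger conclusion: with the same-block, distinct-position pigeonhole the excised prefix factor is always nonempty, avoids both $\#$'s, and shortens exactly one block, so the $(1,2)$-match is broken by a pure length count and no exceptional strings are needed at all (any $d_2\geq 1$ works trivially).

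Three repairs are needed to make it rigorous, none fatal. First, matching ``$z_k=z_{k'}$'' is not quite the right invariant: under the fifth ideal-shape condition a step may alter the then-topmost symbol at the same moment it pushes a new one, so the symbol at level $k$ at its push moment can differ from its frozen value $z_k$ at position $|x\#x^R\#|$. Splice at the configurations immediately after the last pushes and pigeonhole on the freshly pushed symbol (or on pairs in $\Gamma\times\Gamma$); this only inflates $d_1$. Second, your appeal to the termination condition to bound pushes per position is the wrong justification: by the second ideal-shape condition the input head moves right at every step before reaching $\dollar$, so each input position triggers at most one push, and distinct surviving levels automatically have distinct push positions---which is exactly what guarantees a nonempty excision. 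Third, your closing hedge about ``degenerate configurations\ldots\ bounded number $d_2$ of strings $x$'' is unjustified as stated, but also unnecessary: once the two levels are chosen within one block at distinct positions, the excision can be neither empty nor $\#$-straddling nor palindrome-preserving (the block lengths differ after excision), so that sentence should simply be deleted rather than defended.
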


Proposition \ref{size-stack-content} is one of the key statements necessary to handle Case 1. For the proof of this proposition, we need two supporting lemmas, Lemmas \ref{subpath-transform} and \ref{TF-bound}. To explain these lemmas, we need to introduce extra terminology and notation.

Let us recall from Section \ref{sec:colored-automata} that $\{\Gamma_t\}_{y\in I_3}$ is a ``color'' partition of $\Gamma$. Given two strings $u\in (\Gamma_{(1,2)})^{+}$ and $v\in (\Gamma_{(1,2)})^*$ and a string $z\in\{0,1\}^+$, we say that $M$ {\em transforms $u$ to $v$ while reading $z$ (along computation (sub)path $p$)} if $M$ behaves as follows along this particular computation (sub)path $p$:
(i) $M$ starts  in inner state $q$ with $uw\bot $ in its stack for a certain string $w\in(\Gamma^{(1,2)})^*$, scanning the leftmost input symbol of $z$, (ii) $M$ then reads all input symbols in $z$, including no endmarker, one by one, (iii) just after reading off $z$ (making all possible $\lambda$-moves), $M$ enters inner state $q$, leaving $vw\bot $ in the stack, and
(iv) $M$ does not access any symbol in $w$ while reading $z$.
The notation $TF_{M}(u,v)$ expresses the set of all strings of the form $z\#z'$ for any pair $z,z'\in\{0,1\}^*$ such that $M$ transforms $u$ to $v$ while reading $z\#z'$ along certain computation (sub)paths.

\begin{lemma}\label{subpath-transform}
Given any pair $u,v\in(\Gamma_{(1,2)})^*$, there exists at most one string $x'$ such that $x'$ is a substring of a certain string $x$ in $D_{(2,3)}$ and $M$ transforms $u$ to $v$ while reading $x'$ along an appropriate subpath of $p_{x,y}$ for a certain string $y$ in $\{0,1\}^n$.
\end{lemma}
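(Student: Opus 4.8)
The plan is to argue by contradiction using a cut-and-paste (splicing) manipulation of computation paths, in the same spirit as Lemmas \ref{two-stack-difference} and \ref{stack-difference-at-i}. Suppose the claim fails, so that there exist two \emph{distinct} strings $x'_1\neq x'_2$, each witnessing the stated property: say $x'_1\sqsubseteq x_1$ for some $x_1\in D_{(2,3)}$, and along a subpath of $p_{x_1,y_1}$ (for a suitable $y_1\in H_{x_1}$) the machine $M$ transforms $u$ to $v$ while reading the occurrence of $x'_1$ inside the first component of $x_1\#x_1^R\#y_1$; likewise $x'_2\sqsubseteq x_2$ for some $x_2\in D_{(2,3)}$, with a subpath of $p_{x_2,y_2}$ that transforms $u$ to $v$ while reading $x'_2$. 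Write $x_1=\alpha_1 x'_1\beta_1$, where $\alpha_1,\beta_1$ record the position at which the chosen subpath reads $x'_1$; note that $x'_1$ is a block of $\{0,1\}$-symbols, so this reading happens within the first component.

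The key observation is that such a transform is \emph{portable}. By clauses (i)--(iv) of its definition, together with the ideal-shape normalization of Lemma \ref{modify-ideal-shape}, the subpath begins and ends in the unique working state $q$, it replaces the top segment $u$ by $v$, and by clause (iv) it never disturbs the stack content lying below that copy of $u$. Consequently, the $x'_2$-subpath extracted from $p_{x_2,y_2}$ can be grafted into $p_{x_1,y_1}$ at the moment when $M$, having just read $\alpha_1$, sits in state $q$ with $u$ on top of its current stack floor $F$. First I would splice: run $p_{x_1,y_1}$ up to that moment, then follow the $x'_2$-subpath (which reads $x'_2$ and rewrites $u$ to $v$ above $F$ without touching $F$), and finally resume the tail of $p_{x_1,y_1}$, which begins in state $q$ with $v$ above $F$, exactly the configuration it originally had there. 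The result is a legitimate accepting $(1,2)$-computation path of $M$, now on input $\tilde x_1\#x_1^R\#y_1$ with $\tilde x_1=\alpha_1 x'_2\beta_1$, still using only color-$(1,2)$ stack symbols.

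Since this spliced path is an accepting $(1,2)$-path, $M$ outputs $0^11^2$ on $\tilde x_1\#x_1^R\#y_1$, so $0^11^2\in g(\tilde x_1\#x_1^R\#y_1)$. Because $g$ refines $h_3$, we get $0^11^2\in h_3(\tilde x_1\#x_1^R\#y_1)$, which by the definition of $h_3$ forces $\tilde x_1^R=x_1^R$, i.e.\ $\tilde x_1=x_1$. Hence $\alpha_1 x'_2\beta_1=\alpha_1 x'_1\beta_1$, so $x'_2=x'_1$, contradicting $x'_1\neq x'_2$. This step needs no assumption on the relative lengths of $x'_1$ and $x'_2$; any difference between them already yields $\tilde x_1\neq x_1$. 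Therefore at most one such $x'$ exists.

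I expect the only delicate point to be the portability of the transform, namely verifying that clause (iv) genuinely guarantees the floor $F$ beneath $u$ is never read or modified, so the grafted segment behaves identically above an arbitrary floor and the tail of $p_{x_1,y_1}$ reattaches seamlessly. The ideal-shape assumption (single working state $q$, head always advancing, stack never emptied mid-computation) is precisely what makes this gluing clean; once it is secured, the refinement property of $g$ relative to $h_3$ supplies the contradiction.
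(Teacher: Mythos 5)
Your proposal is correct and follows essentially the same route as the paper's proof: assume two distinct witnesses, splice the second transform subpath into the accepting $(1,2)$-computation path containing the first, and derive a contradiction because the resulting accepting $(1,2)$-path would certify $\tilde x_1^R = x_1^R$ for $\tilde x_1 \neq x_1$. Your version merely spells out two points the paper leaves implicit—the portability of the transform above an arbitrary stack floor (via clause (iv) and the ideal-shape normalization) and the use of the refinement property $g(\cdot)\subseteq h_3(\cdot)$ to rule out the spliced accepting path—so no gap remains.
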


\begin{yproof}
We prove the lemma by way of contradiction. Assume that there are two distinct strings $x_1,x_2\in\{0,1\}^*$ satisfying that $M$ transforms $u$ to $v$ while reading $x_1$ along computation subpath $p_1$ and $M$ transforms $u$ to $v$ while reading $x_2$ along computation subpath $p_2$.
Let us consider a string $x$ in $D_{(2,3)}$ such that $x$ contains $x_1$ as a substring and an accepting $(1,2)$-computation path $p_{x,y}$ on input $x\#x^R\#y$ contains $p_1$ as a subpath for a certain string $y\in\{0,1\}^n$.
Let $x'$ be a string obtained from $x$ by replacing $x_1$ with $x_2$. It is possible to swap the two subpaths $p_1$ and $p_2$ without changing the acceptance criteria of $M$. Therefore, $M$ must have an accepting $(1,2)$-computation path on the input $x'\#x^R\#y$. This is absurd since $x'\neq x$.
\end{yproof}

Next, we will  show a technical lemma, Lemma \ref{TF-bound}, which is essential to prove Proposition \ref{size-stack-content}. We already know from  Lemma \ref{two-stack-difference}  that  all elements in $\{\gamma^{(x)}_{i,y}\mid 1\leq i\leq|x\#x^R\#|\}$ are mutually distinct. Let us concentrate particularly on stack contents $\gamma^{(x)}_{i,y}$ of minimal size.
Given any pair $(x,y)$, we define $MSC_{x,y}$ (minimal stack contents) to be the collection of all stack contents $\gamma$ that meet the following requirement:
there exists a cell position $\ell$ with $|x\#|\leq \ell \leq |x\#x^R\#|$
such that
(i)  $\gamma = \gamma^{(x)}_{\ell,y}$ and  (ii) $|\gamma|\leq |\gamma^{(x)}_{\ell',y}|$ holds for any cell position $\ell'$ satisfying $|x\#|\leq \ell'\leq |x\#x^R\#|$. Condition (ii), in particular,  ensures that the size of $\gamma$ must be  minimum.
Note that  $MSC_{x,y}$ cannot be empty.


\begin{figure}[t]
\centering
\includegraphics*[height=4.5cm]{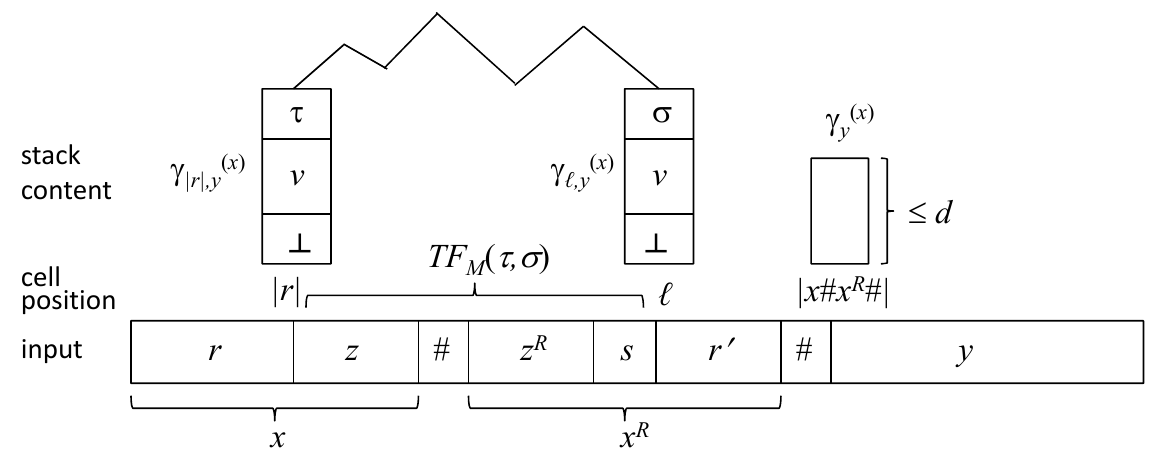}
\caption{A stack history stated in Lemma \ref{TF-bound}.}\label{fig:stack-history}
\end{figure}


\begin{lemma}\label{TF-bound}
There exists a constant $d>0$, independent of $(n,x,y)$, that satisfies the following statements.
Let $x\in D^{(n)}_{(2,3)}$, $y\in H_x$, and $\gamma^{(x)}_{\ell,y}\in MSC_{x,y}$ satisfying $|x\#|\leq \ell \leq |x\#x^R\#|$. Moreover, let  $x=rz$, $x^R=z^Rsr'$, $\ell=|x\#z^Rs|$,  $\gamma^{(x)}_{|r|,y} = \tau v\bot $, and $\gamma^{(x)}_{\ell,y}=\sigma v\bot $ for an appropriate tuple $(r,r',z,s,\sigma,\tau,v)$. If $\ell\neq |x\#|$ and $z\#z^Rs\in TF_{M}(\tau,\sigma)$, then $|\gamma^{(x)}_{y}|\leq d$ holds. Moreover, when
$n$ is sufficiently large, $\ell\neq|x\#|$ also holds.
\end{lemma}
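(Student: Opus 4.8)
The plan is to argue by contradiction: assuming the crossing hypothesis holds yet $|\gamma^{(x)}_{y}|$ exceeds a constant $d$ to be fixed, I will manufacture a shorter input on which $M$ still has an accepting $(1,2)$-computation path but whose first or second block no longer satisfies the reversal relation $x_1^R=x_2$. Since an accepting $(1,2)$-path of $M$ can occur only when $x_1^R=x_2$ (this is the acceptance criterion of the $(1,2)$-color that $g$ inherits from $h_3$, and it is the engine behind Lemmas \ref{two-stack-difference} and \ref{stack-difference-at-i}), this contradiction will force $|\gamma^{(x)}_{y}|\le d$. The two ingredients are the minimality of $\gamma^{(x)}_{\ell,y}$ and a pushdown pumping argument disciplined by the single-valuedness of $g$.

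First I would extract the structural meaning of the hypothesis $\gamma^{(x)}_{\ell,y}\in MSC_{x,y}$. Because $\gamma^{(x)}_{\ell,y}=\sigma vZ_0$ attains the least stack height over all positions in $[|x\#|,|x\#x^R\#|]_{\integer}$, no position after $\ell$ can have height below $|v|+2$; consequently the symbol $\sigma$ sitting at height $|v|+1$ is never exposed and popped during the rest of the second block, so $\sigma vZ_0$ remains a frozen bottom segment. Writing $\gamma^{(x)}_{y}=w\,\sigma vZ_0$, it then suffices to bound $|w|$ and $|v|$ by absolute constants depending only on $|Q|$ and $|\Gamma|$.

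Next I would use the crossing hypothesis $z\#z^Rs\in TF_{M}(\tau,\sigma)$ together with $\ell\neq|x\#|$: this says the portion $z$ of $x$ read before the first $\#$ is matched away against $z^Rs$ read after it, so that by position $\ell$ the symbols remaining above $Z_0$ are exactly those left unmatched by the palindrome comparison, i.e. those that survive past the second block. By the ideal-shape requirement that the stack must empty before any halting state, every non-$Z_0$ symbol of $\gamma^{(x)}_{y}=w\sigma vZ_0$ is therefore popped only while $M$ reads the third block. If $|w|+|v|$ exceeded $|Q|\cdot|\Gamma|$, a pigeonhole over the (state, topmost-symbol) pairs along the ascent (in the first and second blocks) and the matching descent (in the third block) would yield a genuine pushdown pump: a push run and its matching pop run carrying the same surface context. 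Excising this pump deletes input from the first or second block (the push run) and from the third block (the pop run); since the deleted push run is not mirror-matched inside the palindrome region, the altered input violates $x_1^R=x_2$ yet still admits an accepting $(1,2)$-path — the contradiction. Hence $|w|,|v|\le|Q|\cdot|\Gamma|$ and $|\gamma^{(x)}_{y}|=|w|+|v|+2\le d$; Lemma \ref{subpath-transform} certifies that the excised input genuinely differs from $x$.

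For the moreover clause I would suppose $\ell=|x\#|$, whence $z=s=\lambda$ and $\gamma^{(x)}_{|x\#|,y}=\sigma vZ_0$ is exactly the stack reached right after the first block, of height at most $d$ by the bound above. A swapping argument in the style of Lemma \ref{stack-difference-at-i} shows that distinct $x_1,x_2\in D_{(2,3)}$ must produce distinct such contents (otherwise splicing the first blocks gives an accepting $(1,2)$-path on $x_2\#x_1^R\#y$, forcing $x_1=x_2$); since $|D_{(2,3)}|\ge 2^{n}/2$, representing all of them needs height $\Omega(n)$, which exceeds $d$ once $n$ is large, so $\ell\neq|x\#|$. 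The step I expect to be hardest is guaranteeing that the pop run of the pump falls in the third block rather than in the second block: this is precisely where the frozen-base observation, the minimality of $\gamma^{(x)}_{\ell,y}$, and the $\#$-crossing transformation must be combined, so that the symbols counted in $w\sigma v$ are exactly those consumed against the third block and excising the pump necessarily perturbs a palindrome-relevant block, while Lemmas \ref{two-stack-difference} and \ref{stack-difference-at-i} rule out the degenerate repeated stack contents that would spoil the excision.
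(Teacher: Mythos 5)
Your proof is essentially correct but takes a genuinely different route from the paper's. The paper never pumps: it establishes two determinacy claims---$s$ is uniquely determined by the pair $(\tau,\sigma)$ (Lemma \ref{upper-size-s}, via Claim \ref{segment-stack-content}) and $r'$ is uniquely determined by $\sigma$ (Lemma \ref{size-bound-of-r}, via Claim \ref{unique-r'})---both by exactly the subpath-swapping device you cite; since $r^R=sr'$, the \emph{input pieces} $r$ and $r'$ have constant length, and because an ideal-shape machine raises its stack by at most one level per symbol read, $|v|\le |r|+O(1)$ and $|u|\le |r'|+O(1)$, which bounds $|\gamma^{(x)}_{y}|=|uvZ_0|$; the stack is short because it never had time to grow. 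You instead bound the height directly by a matched push/pop pigeonhole: every level of $\gamma^{(x)}_{y}$ is created while reading blocks one or two and destroyed while reading block three, so a tall stack yields an excisable loop whose push half deletes a nonempty piece of block one or two while its pop half merely shortens block three (which is irrelevant to the $(1,2)$-criterion $x_1^R=x_2$), producing an accepting $(1,2)$-computation path on a mismatched input---the same single-valuedness engine as Lemmas \ref{two-stack-difference} and \ref{stack-difference-at-i}. This is sound, in the swapping-lemma style of \cite{Yam08}, and it actually proves \emph{more} than the lemma: your pump uses neither the minimality of $\gamma^{(x)}_{\ell,y}$ nor the crossing hypothesis $z\#z^Rs\in TF_{M}(\tau,\sigma)$, so it bounds the stack at position $|x\#x^R\#|$ along \emph{every} accepting $(1,2)$-path, which would even let Proposition \ref{size-stack-content} go through with $d_2=0$. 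Three caveats you should patch. First, the pigeonhole must match state/exposed-symbol data at \emph{both} the push end and the pop end of each level (ideal-shape pushes and pops may rewrite the symbol beneath or being exposed), so the threshold is on the order of $(|Q|\cdot|\Gamma|)^2$ rather than $|Q|\cdot|\Gamma|$; relatedly, your claim that $\sigma$ itself is never altered after $\ell$ is slightly false---minimality freezes only the levels strictly below the window minimum, the exposed minimal-level symbol can be rewritten---but only heights matter for your count. Second, your case analysis omits the possibility that the excised push run spans a $\#$: then the shortened input is malformed, hence outside $\dom(g)=L_3=\dom(h_3)$, and the contradiction still holds, but this case must be stated. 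Third, your counting proof of the ``moreover'' clause shows only that $\ell=|x\#|$ can occur for at most constantly many $x$ per length, whereas the paper's argument via Lemma \ref{size-bound-of-r} rules it out per instance ($\ell=|x\#|$ forces $x^R=r'$ of constant length, impossible for large $n$); fortunately the weaker ``all but constantly many'' form is exactly what the proof of Proposition \ref{size-stack-content} consumes as its constant $d_2$, so nothing downstream breaks.
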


Lemma \ref{TF-bound} roughly states that, if there is an interval between $1$ and $|x\# x^R\#|$ crossing over the first $\#$ in which $M$ transforms a stack symbol $\tau$ to another $\sigma$, the size of stack content is small at the $|x\#x^R\#|$-th cell position.
Figure \ref{fig:stack-history} illustrates a stack history stated in the lemma.

Using Lemma \ref{TF-bound}, we can prove Proposition \ref{size-stack-content} in the following manner.
Since $MSC_{x,y}$ is nonempty, we can take an element  $\gamma^{(x)}_{\ell,y}$ from $MSC_{x,y}$ satisfying  $|x\#|\leq \ell\leq |x\#x^R\#|$. By the size-minimality of $\gamma^{(x)}_{\ell,y}$,
it follows that $|\gamma^{(x)}_{i,y}|\geq |\gamma^{(x)}_{\ell,y}|$ for any cell position  $i$ with $|x\#|\leq i\leq \ell$.
Since $M$ pushes at most one extra stack symbol into the stack, there must be a cell position $i$ satisfying both $1\leq i\leq|x\#|$ and $|\gamma^{(x)}_{i,y}|=|\gamma^{(x)}_{\ell,y}|$. This indicates that, by choosing an appropriate tuple  $(r,r',z,s,\sigma,\tau,u,v)$, we can decompose $x\#x^R\#y$ into
\begin{quote}
\begin{itemize}
\item[(*)] $x=rz$, $x^R=z^Rsr'$, $\ell=|x\#z^Rs|$, $\gamma^{(x)}_{|r|,y} =\tau v\bot $, $\gamma^{(x)}_{\ell,y}= \sigma v\bot $, and
$z\#z^Rs\in TF_{M}(\tau,\sigma)$.
\end{itemize}
\end{quote}
The second part of Lemma \ref{TF-bound} implies that, except for a finite number of $x$'s,  $\ell\neq |x\#|$ always holds. We define $d_2$ to be the total number of those exceptional $x$'s. For the other $x$'s, since $\ell\neq|x\#|$, the first part of  Lemma \ref{TF-bound} then provides the desired constant $d_1$ that upper-bounds  $|\gamma^{(x)}_{y}|$. We therefore obtain the proposition.

\ms

Now, it is time to verify Lemma \ref{TF-bound}. This lemma requires two additional lemmas,  Lemmas \ref{upper-size-s} and \ref{size-bound-of-r}.
In the first lemma given below, we want to show that the size of $s$ in (*) is bounded from above by a certain absolute constant.

\begin{lemma}\label{upper-size-s}
There exists a constant $d_1>0$, independent of $(n,x,y)$,
satisfying the following statements.
Let $x\in D^{(n)}_{(2,3)}$, $y\in H_x$, and $\gamma^{(x)}_{\ell,y}\in MSC_{x,y}$ with $|x\#|\leq \ell \leq |x\#x^R\#|$. Moreover, let $x=rz$, $x^R=z^Rsr'$, $\ell=|x\#z^Rs|$, $\gamma^{(x)}_{|r|,y} = \tau v\bot $, and $\gamma^{(x)}_{\ell,y}=\sigma v\bot $. If $\ell\neq |x\#|$ and $z\#z^Rs\in TF_{M}(\tau,\sigma)$, then $|s|\leq d_1$ holds.
\end{lemma}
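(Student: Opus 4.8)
The plan is to bound $|s|$ by a constant depending only on the machine $M$ (that is, on $|Q|$ and $|\Gamma|$), by a pumping-and-excision argument carried out entirely inside the reading of $s$. First I would isolate the relevant subpath of $p_{x,y}$. Since $\ell=|x\#z^Rs|$ and $\gamma^{(x)}_{\ell,y}=\sigma vZ_0$, the portion of $p_{x,y}$ that reads $s$ runs from position $p_0:=|x\#|+|z|$ to position $\ell$, and it lies entirely inside the middle block $x^R$, so the input it consumes is a $\{0,1\}$-string containing no $\#$. The transform hypothesis $z\#z^Rs\in TF_M(\tau,\sigma)$ guarantees that the bottom segment $vZ_0$ is never touched on $[|r|,\ell]$, hence a fortiori while reading $s$; together with the size-minimality of $\gamma^{(x)}_{\ell,y}\in MSC_{x,y}$ (which gives $|\gamma^{(x)}_{t,y}|\geq|\gamma^{(x)}_{\ell,y}|$ for every $t$ with $|x\#|\leq t\leq|x\#x^R\#|$), this shows that the portion of the stack above $vZ_0$ stays of height at least one throughout $[p_0,\ell]$ and has height exactly one (equal to $\sigma$) at $\ell$. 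Thus reading $s$ is an \emph{excursion} of $M$ above the fixed context $vZ_0$, whose behaviour depends only on the state and on the symbols above $v$, independently of $v$ itself.

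Next I would apply the pumping lemma for pushdown automata to this excursion; this is legitimate because the ideal shape forces every step to change the stack height by at most one. The claim is that there is an absolute constant $d_1=d_1(|Q|,|\Gamma|)$ such that, whenever $|s|>d_1$, the reading of $s$ admits a balanced pumpable factorization: there are one or two nonempty factors of $s$, consumed between matched push and pop steps of $M$, whose deletion yields a strictly shorter computation that starts and ends in the identical configurations (same state, and surface $\sigma$ over the same $vZ_0$) at the two splice points. Deleting these factors produces a string $s'$ with $|s'|<|s|$ together with a valid computation of $M$ reading $s'$ in place of $s$ and performing the same surface behaviour over $vZ_0$, while the rest of the path (before $p_0$ and after $\ell$) is left untouched and reconnects.

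The contradiction then comes from the fundamental property of $M$ recorded at the start of Case~1. Because the deleted factors lie inside the middle block, the modified input has the form $x\#x''\#y$ where $x''$ is $x^R$ with one or more symbols excised; in particular $|x''|<|x^R|$, so $x''\neq x^R$. The spliced computation is an accepting $(1,2)$-computation path of $M$ on $x\#x''\#y$: it uses only color-$(1,2)$ stack symbols and ends in $q_{acc}$, exactly as $p_{x,y}$ did. But $M$ has no accepting $(1,2)$-computation path on any input $x\#z\#y$ with $z\neq x^R$, since $g$ is single-valued. This contradiction forces $|s|\leq d_1$, with $d_1$ depending only on $M$, as required.

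The main obstacle is the pumping step itself. By Lemma~\ref{two-stack-difference} all stack contents along $p_{x,y}$ between positions $|x|$ and $|x\#x^R\#|$ are pairwise distinct, so no single configuration repeats and one cannot simply excise a contiguous block lying between two equal configurations. Instead one must extract a genuinely \emph{balanced} sub-excursion—matched push and pop steps bounding the deletable factors—so that the splice produces a legitimate computation; this is exactly the surface-pair (context-free) pumping analysis, and some care is needed to keep the deleted factors inside the $\{0,1\}$ middle block and within color $(1,2)$. Distinguishing the two regimes, namely many returns of the surface to height one (pigeonhole on the height-one configurations in $Q\times\Gamma$) versus a single wide excursion that rises high before descending (pumping a matched push/pop inside it), is where the bookkeeping is heaviest, but each case reduces to the standard pushdown pumping estimate phrased in terms of $|Q|$ and $|\Gamma|$.
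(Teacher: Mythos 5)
There is a genuine gap, and it sits exactly where you admit the ``bookkeeping is heaviest.'' Your pumping step presupposes that the portion of the run reading $s$ is a height \emph{excursion}: that it starts at (or returns often to) the baseline height $|v|+2$, so that any matched push/pop pair bounding the deletable factors lies wholly inside $s$. Nothing in the hypotheses gives you this. The minimality of $\gamma^{(x)}_{\ell,y}\in MSC_{x,y}$ only says the height stays $\geq |v|+2$ on $[|x\#|,|x\#x^R\#|]$ and equals $|v|+2$ at $\ell$; the height at the \emph{start} of $s$ (position $|x\#z^R|$) can be as large as roughly $|v|+2|z|$, since in ideal shape each of the $|z|+1+|z^R|$ preceding steps may push one symbol. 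So the reading of $s$ may be a long monotone descent, popping symbols whose matching pushes occurred while reading $z$, before the first $\#$. In that regime there is no balanced factorization inside $s$ at all --- your dichotomy (``many returns to height one'' versus ``a single wide excursion that rises high'') covers only runs that begin at baseline --- and any balanced excision necessarily deletes a factor from $z$ as well, modifying the \emph{first} block of the input. Then the intended contradiction (``no accepting $(1,2)$-path on $x\#z'\#y$ with $z'\neq x^R$'') no longer applies verbatim: the spliced input has the form $rz_1\#z_1^R s_1 r'\#y$, and one must analyze the resulting string equation rather than simply observe $|x''|<|x^R|$. Your argument as written is therefore incomplete, and the missing case is not a corner case but the typical behaviour of a stack-matching machine on the middle block.

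It is worth contrasting this with the paper's proof, which avoids pumping entirely and in effect handles precisely the cross-block situation you would be forced into. The paper proves a uniqueness claim by the same subpath-swapping you use for your contradiction: if \emph{any} string $z_1\#z_1^Rs_1$ lies in $TF_M(\tau,\sigma)$, splicing that subcomputation into $p_{x,y}$ yields an accepting $(1,2)$-path on $rz_1\#z_1^Rs_1r'\#y$, which forces $(rz_1)^R=z_1^Rs_1r'$, i.e.\ $r^R=s_1r'$; since also $r^R=sr'$, one gets $s_1=s$. Hence $s$ is a function of the pair $(\tau,\sigma)$, so at most $|\Gamma'_{(1,2)}|^2$ strings $s$ can ever arise, and $d_1$ is simply the maximum of their (finitely many) lengths --- a constant depending only on $M$. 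Note that this swap replaces the whole segment $z\#z^Rs$, deliberately changing the first block together with the middle one, which is exactly what sidesteps the descending-stack obstruction. If you want to salvage your proposal, the repair is essentially to abandon excision inside $s$ and run this uniqueness-plus-counting argument instead.
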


\begin{yproof}
Let $(r,r',z,s,\sigma,\tau,u,v)$ satisfy that $x=rz$, $x^R=z^Rsr'$, $\ell=|x\#z^Rs|$, $\gamma^{(x)}_{|r|,y} = \tau v\bot $, and $\gamma^{(x)}_{\ell,y}=\sigma v\bot $. Moreover, we assume that $\ell\neq|x\#|$ and $z\#z^Rs$ belongs to $TF_{M}(\tau,\sigma)$.
From the inequality $\ell\neq|x\#|$, it follows that $z^Rs\neq\lambda$.
Let us assume further that $\gamma^{(x)}_{\ell,y}\in MSC_{x,y}$.
We first claim that the string $s$ can be uniquely determined from the pair $(\tau,\sigma)$.

\begin{claim}\label{segment-stack-content}
Let $z_1\in\{0,1\}^+$ and $s_1\in\{0,1\}^*$ be arbitrary strings. If $z_1\#z_1^Rs_1\in TF_{M}(\tau,\sigma)$, then $s=s_1$ holds.
\end{claim}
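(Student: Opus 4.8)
The plan is to argue by contradiction, using a cut-and-paste (splicing) argument on the fixed accepting $(1,2)$-computation path $p_{x,y}$; this is in the same spirit as the proof of Lemma~\ref{subpath-transform}, but now the spliced infix straddles the first $\#$. Suppose to the contrary that $z_1\#z_1^Rs_1\in TF_{M}(\tau,\sigma)$ for some $z_1\in\{0,1\}^{+}$ and $s_1\in\{0,1\}^{*}$ with $s_1\neq s$. First I would record the elementary identity forced by the decomposition $x=rz$ and $x^R=z^Rsr'$: since $x^R=(rz)^R=z^Rr^R$, cancelling the common prefix $z^R$ yields $r^R=sr'$. This identity is the combinatorial engine that makes the final step close.

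Next I would build the modified input $w'=rz_1\#z_1^Rs_1r'\#y$, obtained from $x\#x^R\#y$ by replacing the infix $z\#z^Rs$ (the portion read strictly between positions $|r|$ and $\ell$) with $z_1\#z_1^Rs_1$, and exhibit an accepting $(1,2)$-computation path on $w'$ as a concatenation of three segments. The opening segment copies the moves of $p_{x,y}$ on the shared prefix $r$, so that $M$ reaches the configuration with state $q$ and stack $\tau vZ_0$ after reading $r$. The middle segment reads $z_1\#z_1^Rs_1$ using the transformation promised by $z_1\#z_1^Rs_1\in TF_{M}(\tau,\sigma)$; by the standard pushdown property that a transformation which never empties the stack (clause~(iv) of the definition of $TF_M$) can be replayed verbatim with any word $vZ_0$ frozen below, this segment carries $\tau vZ_0$ to $\sigma vZ_0$ while remaining in color $(1,2)$. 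The closing segment reuses the tail of $p_{x,y}$ from position $\ell$, which departs from the matching configuration (state $q$, stack $\sigma vZ_0$) and reads the remaining input $r'\#y$ into the accepting state. Since every segment uses only color-$(1,2)$ stack symbols, their concatenation is an accepting $(1,2)$-computation path of $M$ on $w'$.

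Finally, I would invoke the refinement hypothesis to reach a contradiction. Because $M$ computes $g$ and now has an accepting $(1,2)$-computation path on $w'$, the value $01^2$ lies in $g(w')$; as $g$ refines $h_3$, this forces $01^2\in h_3(w')$, meaning the reverse of the first block of $w'$ equals its second block. The first block is $rz_1$, whose reverse is $z_1^Rr^R=z_1^Rsr'$ by the identity $r^R=sr'$, whereas the second block is $z_1^Rs_1r'$; equating these and cancelling the common prefix $z_1^R$ and common suffix $r'$ yields $s=s_1$, contradicting $s_1\neq s$. Hence $s_1=s$, as claimed. The main obstacle is to make the splice rigorous, namely to certify that the three pasted segments genuinely compose into a single accepting color-$(1,2)$ computation path of $M$ on $w'$ (the delicate point being the frozen-below lifting of the clean $\tau\to\sigma$ transformation into the context $vZ_0$), together with the bookkeeping that $w'$ retains the three-block shape so that the refinement property of $h_3$ can legitimately be applied.
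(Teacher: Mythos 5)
Your proposal is correct and takes essentially the same route as the paper's own proof of Claim~\ref{segment-stack-content}: assume $s\neq s_1$, splice the subpath witnessing $z_1\#z_1^Rs_1\in TF_M(\tau,\sigma)$ into $p_{x,y}$ in place of the subpath for $z\#z^Rs$, obtain an accepting $(1,2)$-computation path on the modified input, and then use $(rz_1)^R=z_1^Rr^R=z_1^Rsr'\neq z_1^Rs_1r'$ to contradict the fact that an accepting $(1,2)$-computation path forces the reverse of the first block to equal the second block. If anything, your write-up is slightly more careful than the paper's, since you make explicit the frozen-$vZ_0$ lifting of the $\tau\to\sigma$ transformation and you retain the trailing $\#y$ in the spliced input, both of which the paper leaves implicit.
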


Claim \ref{segment-stack-content} uniquely associates  $s$ with $(\tau,\sigma)$, and thus we can define a map from $(\tau,\sigma)$ to $s$. Hence, the  number of all possible strings $s$ is at most $|\Gamma'_{(1,2)}|^2$, which is obviously a constant.
From this fact, we can draw a conclusion that $|s|$ is upper-bounded by an appropriately chosen constant, independent of $(n,x,y)$.

Finally, let us prove Claim \ref{segment-stack-content}.  Toward a contradiction, we assume that $z_1\#z_1^Rs_1\in TF_{M}(\tau,\sigma)$ and $s\neq s_1$.
Let $p_1$ denote any accepting $(1,2)$-computation path generated by $M$ while reading off $rz\#z^Rsr'\# y$. Consider its computation subpath, say, $p_2$ associated with the substring $z\#z^Rs$. By our assumption of $z_1\#z_1^Rs_1\in TF_{M}(\tau,\sigma)$, there exists a computation subpath, say, $p_3$ corresponding to $z_1\#z_1^Rs_1$. Now, along the computation path $p_1$, we replace the subpath $p_2$  by $p_3$.
This produces a new accepting $(1,2)$-computation path on the input $rz_1\#z_1^Rs_1r'\# y$. Thus, we conclude that $(rz_1)^R = z_1^Rr^R = z_1^Rsr' \neq z_1^Rs_1r'$ because of $s\neq s_1$. This means that there is no accepting $(1,2)$-computation path on $rz_1\#z_1^Rs_1r'\# y$, a contradiction. Therefore, Claim \ref{segment-stack-content}  is true.
\end{yproof}


In the second lemma below, we want to show that the size of $r'$ in (*) is also upper-bounded by a certain absolute constant.

\begin{lemma}\label{size-bound-of-r}
There exists a constant $d_2>0$, independent of $(n,x,y)$,
that satisfies the following statements.
Let $x\in D^{(n)}_{(2,3)}$, $y\in H_x$, and $\gamma^{(x)}_{\ell,y}\in MSC_{x,y}$. Moreover, let $x=rz$, $x^R=z^Rsr'$, $y=r''z'$,  $\ell=|x\#z^Rs|$, $\ell'=|x\#x^R\#r''|$,  $\gamma^{(x)}_{|r|,y} = \tau v\bot $,  $\gamma^{(x)}_{\ell,y}=\sigma v\bot $, and $\gamma^{(x)}_{\ell',y}=v\bot $. If $r'\#r''\in TF_{M}(\sigma,\lambda)$, then $|r'|\leq d_2$ holds.
\end{lemma}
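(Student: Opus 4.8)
The plan is to mimic the proof of Lemma \ref{upper-size-s}: rather than bounding $|r'|$ head-on, I will argue that the string $r'$ is completely determined by the single stack symbol $\sigma$ resting on top of $vZ_0$ at position $\ell$, so that only constantly many candidate strings $r'$ are possible. I will use throughout that $M$ is in an ideal shape, so along $p_{x,y}$ the machine occupies the unique state $q$ at every step except the first and last; in particular the configuration at position $\ell$ is $(q,\cdot,\sigma vZ_0)$ and the one at position $\ell'$ is $(q,\cdot,vZ_0)$, the symbol $\sigma$ having been popped during the reading of $r'\#r''$.

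The heart of the argument will be a claim in the spirit of Claim \ref{segment-stack-content}: whenever $r_1'\#r_1'' \in TF_M(\sigma,\lambda)$, one must have $r_1'=r'$. To prove it I will assume $r_1'\neq r'$ and derive a contradiction by splicing. Since $r_1'\#r_1''$ empties $\sigma$ down to $\lambda$ without touching the stack below the top symbol (clause (iv) of the definition of transformation), the same read sequence carries $\sigma vZ_0$ to $vZ_0$ when started in state $q$, the suffix $vZ_0$ being frozen; this is the usual stack-independence of a pushdown machine. I will therefore replace, inside $p_{x,y}$, the subpath reading $r'\#r''$ (positions $\ell+1$ through $\ell'$) by this new subpath. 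Because both subpaths begin and end in state $q$ with the stack returning to $vZ_0$ after the pop, the remaining tail of $p_{x,y}$ (which reads $z'$ and then $\dollar$) reattaches verbatim and still reaches an accepting configuration. The spliced accepting $(1,2)$-computation path then runs on the modified input $x\#(z^R s r_1')\#(r_1'' z')$, whose first two blocks are $x_1=r'^R s^R z$ and $x_2 = z^R s r_1'$ (recall $r=r'^R s^R$ from $r^R=sr'$). But an accepting $(1,2)$-path forces $x_1^R=x_2$, i.e.\ $z^R s r' = z^R s r_1'$, hence $r'=r_1'$, contradicting $r_1'\neq r'$.

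Granting the claim, the block preceding $\#$ in every element of $TF_M(\sigma,\lambda)$ is forced to equal $r'$, so $r'$ is a function of $\sigma$ alone, and running the same splicing across different inputs shows this function is global, independent of $(n,x,y)$. As $\sigma$ ranges over the finite set $\Gamma'_{(1,2)}$, at most $|\Gamma'_{(1,2)}|$ distinct fixed strings $r'$ can ever occur, and setting $d_2$ to be the largest of their lengths gives $|r'|\le d_2$, as required. I expect the delicate step to be the splice rather than the counting: I must check that the replacement sequence, guaranteed only to reduce $\sigma Z_0$ to $Z_0$, acts identically above the frozen $vZ_0$, and that the two splice points carry the matching state $q$. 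This is exactly where clause (iv) (the stack never empties before the final pop) together with the ideal-shape single-state property does the work, ensuring both that $v$ is never scanned nor altered and that the glue states agree.
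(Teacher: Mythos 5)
Your proposal is correct and takes essentially the same route as the paper's proof: like the paper, you reduce everything to the uniqueness claim that $r'_1\#r''_1\in TF_{M}(\sigma,\lambda)$ forces $r'_1=r'$, prove it by splicing the subpath over positions $\ell+1$ through $\ell'$ and invoking that an accepting $(1,2)$-path on $rz\#z^Rsr'_1\#r''_1z'$ requires $(rz)^R=z^Rsr'_1$, and then bound $|r'|$ via the resulting map $\sigma\mapsto r'$ over the finite set $\Gamma'_{(1,2)}$. Your additional checks on the splice (the frozen suffix $vZ_0$ guaranteed by clause (iv) of the transformation definition, and the matching glue state $q$ from the ideal-shape property) only make explicit what the paper leaves implicit.
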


\begin{yproof}
Take parameters $(r,r',r'',z,z',u,\tau,\sigma,\ell,\ell')$ as specified in the premise of the lemma and assume that $r'\#r''\in TF_{M}(\sigma,\lambda)$.
Similarly to Claim \ref{segment-stack-content}, we claim that $\sigma$ uniquely determines $r'$.

\begin{claim}\label{unique-r'}
Let $r'_1,r''_1\in\{0,1\}^*$. If $r'_1\#r''_1\in TF_{M}(\sigma,\lambda)$, then $r'_1=r'$.
\end{claim}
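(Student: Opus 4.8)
The plan is to mirror the proof of Claim \ref{segment-stack-content} almost verbatim, adapting the swapping argument to the second endmarker crossing. The key structural fact is the same: if $\sigma$ failed to determine $r'$ uniquely, we could splice together two computation behaviours to produce an accepting $(1,2)$-computation path on an input whose first block is incompatible with its third block, contradicting single-valuedness of $g$.

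First I would argue by contradiction, assuming that $r'_1 \neq r'$ while both $r'\#r'' \in TF_M(\sigma,\lambda)$ and $r'_1\#r''_1 \in TF_M(\sigma,\lambda)$. The membership $r'\#r'' \in TF_M(\sigma,\lambda)$ means $M$, starting in state $q$ with $\sigma$ on top of the stack (over $vZ_0$) and scanning the first symbol of $r'$, reads $r'\#r''$ and empties the color-$(1,2)$ portion down to the point where $\gamma^{(x)}_{\ell',y} = vZ_0$. Thus there is a computation subpath $p_2$ of the fixed accepting path $p_{x,y}$ that realizes this transformation, sitting between stack content $\sigma v Z_0$ at position $\ell$ and $vZ_0$ at position $\ell'$. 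By the hypothesis on $r'_1\#r''_1$, there is an analogous subpath $p_3$ that transforms $\sigma$ to $\lambda$ (relative to the same context $vZ_0$) while reading $r'_1\#r''_1$.

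Next I would perform the splice. Since both subpaths begin in state $q$ with the identical stack content $\sigma v Z_0$ and both end in state $q$ with the identical stack content $vZ_0$ (this is exactly what the $TF_M(\sigma,\lambda)$ condition guarantees), I can excise $p_2$ from $p_{x,y}$ and insert $p_3$ in its place without disturbing the prefix of the computation up to position $\ell$ or the suffix after the stack has returned to $vZ_0$. The resulting path is a legitimate accepting $(1,2)$-computation path, now on the modified input in which the block $r'$ has been replaced by $r'_1$ and $r''$ by $r''_1$. Concretely, recall $x^R = z^R s r'$, so the first two blocks of the modified input read $x\#z^R s r'_1$; since $x^R = z^Rsr'$ and $r'_1\neq r'$, the new middle block $z^R s r'_1$ differs from $x^R$. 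But an accepting $(1,2)$-computation path requires the second block to equal $x^R$ (the $(1,2)$-path checks $x_1^R = x_2$), so no such path can exist — a contradiction. Hence $r'_1 = r'$, proving the claim.

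The main obstacle I anticipate is purely bookkeeping: I must verify that the endmarker $\#$ straddling $r'\#r''$ is handled consistently by the ideal-shape conventions, so that the splice point really does occur at matching configurations $(q, \cdot, \sigma v Z_0)$ and $(q, \cdot, vZ_0)$ rather than at a halting or color-switching step. Because $M$ is in an ideal shape, the machine stays in the single working state $q$ throughout (except initial and final configurations), the stack never empties mid-computation, and colors are fixed at the first step; these properties guarantee that the transplant preserves acceptance and color. Once the claim is established, the finishing step of Lemma \ref{size-bound-of-r} is immediate: the map $\sigma \mapsto r'$ is well-defined, so the number of distinct $r'$ is bounded by $|\Gamma'_{(1,2)}|$, a constant independent of $(n,x,y)$, whence $|r'|$ is bounded by an absolute constant $d_2$.
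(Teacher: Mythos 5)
Your proof is correct and follows essentially the same route as the paper's: argue by contradiction from $r'_1\neq r'$, splice the subpath of $M$ reading $r'_1\#r''_1$ (available since $r'_1\#r''_1\in TF_{M}(\sigma,\lambda)$) in place of the subpath of $p_{x,y}$ reading $r'\#r''$, obtain an accepting $(1,2)$-computation path on $rz\#z^Rsr'_1\#r''_1z'$, and derive the contradiction $(rz)^R = z^Rsr'_1$ versus $(rz)^R\neq z^Rsr'_1$. Your additional bookkeeping about the ideal-shape conventions at the splice points merely makes explicit what the paper leaves implicit, so there is no substantive difference.
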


Claim \ref{unique-r'} helps us define a map from $\sigma$ to $r'$ since $r'$ is uniquely determined by $\sigma$. This mapping implies that the number of all possible $r'$ is at most $|\Gamma'_{(1,2)}|$.
Since there are at most $|\Gamma'_{(1,2)}|$ such strings $r'$, $|r'|$ must be  bounded from above by a certain constant, independent of $(n,x,y)$.

Claim \ref{unique-r'} itself can be proven by way of contradiction. First, we assume that $r'_1\neq r'$. Since $y\in H_x$, we are focused on the  accepting $(1,2)$-computation path $p_{x,y}$ on the input $x\#x^R\#y$, which equals $rz\#z^Rsr'\#r''z'$.
Since $r'\#r''\in TF_{M}(\sigma,\lambda)$ and $r'_1\#r''_2\in TF_{M}(\sigma,\lambda)$, we can replace its subpath associated with $r'\#r''$ by a subpath generated by $M$ while reading off $r'_1\#r''_1$.
We then obtain another accepting $(1,2)$-computation path on the input $x\#z^Rsr'_1\#r''_1z'$. By the definition of $h_3$ and the choice of $M$, $x^R=z^Rsr'_1$ must hold. On the contrary, we obtain $x^R = z^Rsr' \neq z^Rsr'_1$ from $r'_1\neq r'$. This is a contradiction.
\end{yproof}


Finally, we are ready to prove Lemma \ref{TF-bound} with the help of Lemmas \ref{upper-size-s} and \ref{size-bound-of-r}.

\begin{proofof}{Lemma \ref{TF-bound}}
Let $(r,r',z,s,\sigma,\tau,v,y)$ be given as in the premise of the lemma. Notice that $x=rz$, $x^R=z^Rsr'$, and $\gamma^{(x)}_{\ell,y}=\sigma v\bot $ with $\ell=|x\#z^Rs|$.
Since $\gamma^{(x)}_{\ell,y}\in MSC_{x,y}$, there exists a nonempty string $u$ satisfying $\gamma^{(x)}_{y}= uv\bot$.
Assume that $M$ transforms $\sigma$ to $u$ while reading $r'$.

We first claim that $\ell\neq |x\#|$ for any sufficiently large $n$. Assume otherwise; namely, $\ell=|x\#|$. This assumption yields $z=s=\lambda$, which implies that $x=rz=r$ and $x^R = z^Rsr' = r'$.
In this case, for a certain string $r''$, it follows that $x^R\# r''\in TF_{M}(\sigma,\tau)$ and $\gamma^{(x)}_{\ell',y}=v\bot$, where $\ell'=|x\#x^R\#r''|$.
By Lemma \ref{size-bound-of-r}, we obtain a constant size-upper bound of $r'$; in other words, there exists a constant $d_2$, independent of $(n,x,y)$, satisfying $|r'|\leq d_2$. Since $n=|x|$ is sufficiently large,  $r'$ must be large in size as well. This is a contradiction. Therefore, $\ell\neq|x\#|$ holds.

Hereafter, we assume that $\ell\neq |x\#|$ and $z\#z^Rs\in TF_{M}(\tau,\sigma)$.
Lemma \ref{upper-size-s} further ensures the existence of an appropriate constant $d_1$ for which  $|s|\leq d_1$. Lemma \ref{size-bound-of-r} also shows that $|r'|$ is upper-bounded by a certain constant, say, $d_2$. Since $|r|=|sr'|=|s|+|r'|$ by definition, $|r|$ is bounded from above by $d_1+d_2$.
Let $\sigma_0$ be the stack symbol pushed into the stack at the first step of $M$.
Since $M$ transforms $\sigma_0$ to $\tau v$ while reading $r$ for a certain stack symbol  $\tau$ and the stack increases by at most one, it follows that $|v|$ is upper-bounded by a certain absolute constant. Therefore, since $|r'|\leq d_2$, $|uv\bot|$ is bounded as well.
\end{proofof}


We have completed the proof of Proposition \ref{size-stack-content}. As a preparation for a further discussion on Case 1, we provide a useful lemma, which will be used in Section \ref{sec:stack-size}.

\begin{lemma}\label{x2-vs-y-difference}
Let  $x_1,x_2,y\in\{0,1\}^n$ with $x_2\in H_{x_1}$ and $y\in H_{x_2}$. If $x_2\in D_{(2,3)}$ and $x_1\neq x_2$, then
there is no cell position $i$ for which $|x_1|\leq i\leq |x_1\#x_1^R\#|$ and $\gamma^{(x_1)}_{i,x_2} = \gamma^{(x_2)}_{i,y}$.
\end{lemma}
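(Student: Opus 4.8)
The plan is to argue by contradiction, reusing the path-splicing technique of Lemmas \ref{two-stack-difference} and \ref{stack-difference-at-i}, but this time gluing together two \emph{different} accepting $(1,2)$-computation paths and deriving the contradiction from the membership $x_2\in D_{(2,3)}$. First I would record that the relevant paths exist: since $x_2\in H_{x_1}$, the value $h_3(x_1\#x_1^R\#x_2)$ is exactly $\{01^2\}$ (the pairs $(1,3)$ and $(2,3)$ are excluded precisely because $x_2\neq x_1^R$ and $x_2\neq x_1$), so the single-valued refinement $g$ satisfies $g(x_1\#x_1^R\#x_2)=01^2$ and $M$ has an accepting $(1,2)$-path $p_{x_1,x_2}$; likewise $y\in H_{x_2}$ yields $p_{x_2,y}$. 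Hence the stack contents $\gamma^{(x_1)}_{i,x_2}$ and $\gamma^{(x_2)}_{i,y}$ are well defined. Assume, toward a contradiction, that $\gamma^{(x_1)}_{i,x_2}=\gamma^{(x_2)}_{i,y}$ for some position $i$ with $n\le i\le 2n+2$ (recall $|x_1|=n$ and $|x_1\#x_1^R\#|=2n+2$).

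Next I would build a spliced computation path $P$ that follows $p_{x_2,y}$ for its first $i$ steps and then switches to the tail of $p_{x_1,x_2}$ from position $i$ onward. This splice is legitimate because $M$ is in an ideal shape: throughout the non-initial, non-final part of any computation the inner state is always $q$, so the two paths agree in state at position $i$; both input heads sit just past the $i$th cell; their stack contents coincide by assumption; and, being $(1,2)$-paths, both use only color-$(1,2)$ stack symbols, so no color clash occurs. Since the tail of $p_{x_1,x_2}$ empties the stack and enters the accepting state exactly as before, $P$ is a genuine accepting $(1,2)$-computation path of $M$, and the string $W$ that $P$ reads is the length-$i$ prefix of $x_2\#x_2^R\#y$ concatenated with the suffix of $x_1\#x_1^R\#x_2$ beyond position $i$.

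The decisive step is to read off the block structure of $W$ uniformly in $i$. Because $i\ge n=|x_2|$, the prefix contributed by $p_{x_2,y}$ already contains the whole first block $x_2$; because $i\le 2n+2=|x_2\#x_2^R\#|$, this prefix never reaches the third block of $x_2\#x_2^R\#y$, so the suffix contributed by $p_{x_1,x_2}$ still supplies the entire third block $x_2$ of $x_1\#x_1^R\#x_2$. A count of the two delimiters then shows $W=x_2\#w_2\#x_2$ for some $w_2\in\{0,1\}^n$, and in particular $W\in L$. Now $P$ is an accepting $(1,2)$-path on $W$, so $01^2$ is a valid output and $01^2\in g(W)\subseteq h_3(W)$; by the definition of $h_3$ this forces the $(1,2)$-condition $w_2=x_2^R$, whence $W=x_2\#x_2^R\#x_2$. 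But $x_2\in D_{(2,3)}$, and single-valuedness of $g$ precludes any accepting $(1,2)$-computation path on $x_2\#x_2^R\#x_2$; thus $P$ cannot exist, and the assumed coincidence is impossible.

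I expect the main obstacle to be the careful bookkeeping that simultaneously justifies the splice and pins down the two outer blocks of $W$ across the full range $n\le i\le 2n+2$, irrespective of whether $i$ lands on the first $\#$, inside $x_1^R$, or at the second $\#$. The hypothesis $x_1\neq x_2$ (which is in fact already forced by $x_2\in H_{x_1}$) enters transparently in the unified argument: for small $i$ the middle block read by the splice is literally $x_1^R$, so the forced equality $w_2=x_2^R$ would read $x_1^R=x_2^R$, i.e. $x_1=x_2$, which one could invoke as an alternative contradiction; but the single observation that $W$ collapses to $x_2\#x_2^R\#x_2$ handles every $i$ at once, with the final contradiction always coming from $x_2\in D_{(2,3)}$.
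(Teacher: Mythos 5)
Your proposal is correct and follows essentially the same route as the paper: both splice the prefix of $p_{x_2,y}$ with the suffix of $p_{x_1,x_2}$ at position $i$ (legitimized by the coinciding stack contents and the ideal-shape state/color discipline) to obtain an accepting $(1,2)$-computation path on a string of the form $x_2\#w_2\#x_2$, and both derive the contradiction from $x_2\in D_{(2,3)}$. The only cosmetic difference is that the paper splits into two cases according to whether $w_2=x_2^R$, whereas you fold the first case into the observation that an accepting $(1,2)$-path together with $g(W)\subseteq h_3(W)$ already forces $w_2=x_2^R$ --- logically the same argument.
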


\begin{yproof}
Assume that $x_2\in D_{(2,3)}$ and $x_1\neq x_2$. To lead to a contradiction, we further assume that a position $i$ in the lemma actually exists. Let $j=|x_1\#x_1^R\#|-i$. Now, let us consider two accepting $(1,2)$-computation paths $p_{x_1,x_2}$ and $p_{x_2,y}$.  Since $\gamma^{(x_1)}_{i,x_2}=\gamma^{(x_2)}_{i,y}$, it is possible to swap between subpaths of $p_{x_1,x_2}$ and $p_{x_2,y}$ generated by $M$ while reading substrings $x_1\#(x_1^R)_{j}$ and $x_2\#(x_2^R)_j$, respectively. We then obtain another accepting $(1,2)$-computation path, say, $p$ on the input $x_2\#(x_2^R)_j(x_1^R)_{n-j}\#x_2$. Here, we handle two possible cases.

(Case i) If $(x_2^R)_j(x_1^R)_{n-j} \neq x_2^R$, then the computation path $p$ cannot be an accepting  $(1,2)$-computation path, a contradiction.

(Case ii) If $(x_2^R)_j(x_1^R)_{n-j} = x_2^R$,  then $x_2\#(x_2^R)_j(x_1^R)_{n-j}\#x_2$ equals $x_2\#x_2^R\#x_2$. The obtained computation path $p$ is indeed an accepting $(1,2)$-computation path on $x_2\#x_2^R\#x_2$, and thus $x_2$ must be in $D_{(1,2)}$. This obviously contradicts the choice of $x_2\in D_{(2,3)}$, a contradiction.
\end{yproof}

\subsection{Size of Stack Contents}\label{sec:stack-size}

We continue our discussion on Case 1. In Proposition \ref{size-stack-content}, we have shown that all  but a constant number of  strings $x$ in $D_{(2,3)}$ satisfy the inequality  $|\gamma^{(x)}_{y}|<d_1$ for all strings $y$ in $H_x$.
Toward an intended contradiction, we will further show that there are a large portion of $x$'s in $D_{(2,3)}$ whose corresponding stack contents $\gamma^{(x)}_{y}$ for appropriately chosen strings $y$ are large in size. Together with Proposition \ref{size-stack-content}, we can derive the desired contradiction.

In the subsequent argument, the notation $E_{x}$ expresses the collection of all stack contents $\gamma^{(x)}_{y}$ at the $|x\#x^R\#|$-th cell position (obtained just after reading off $x\#x^R\#$) along an accepting $(1,2)$-computation path $p_{x,y}$ of $M$ on input $x\#x^R\#y$ for each string $y\in H_x$.  Since $\pi$ is fixed, it follows that $1\leq |E_x|\leq |H_x| = 2^{|x|}-2$ because there are at most $|H_x|$ subpaths generated by $M$ while reading inputs $x\#x^R\#y$ with $y\in H_x$.

Prior to a discussion on a general case of $|E_x|\geq1$, we wish to consider a special case where $|E_{x}|$ equals $1$ for any string $x\in D_{(2,3)}$, because this case  exemplifies an essence of our proof for the general case.

\vs{-2}
\paragraph{\bf I) Special Case of $|E_x|=1$.}

Since $|E_x|=1$, the choice of $y\in H_x$ becomes irrelevant. It is thus possible to drop subscript ``$y$'' altogether and abbreviate, e.g.,   $\gamma^{(x)}_{i,y}$, $\gamma^{(x)}_{y}$, and $u_{x,y}$  as  $\gamma^{(x)}_{i}$, $\gamma^{(x)}$, and $u_{x}$, respectively.
To lead to the desired contradiction, we want to show in Proposition \ref{stack-height-bound} that a large number of strings $x$ in $D_{(2,3)}$ produce stack contents $\gamma^{(x)}_{y}$ of extremely large size for certain strings $y\in H_x$. Now, recall the notation $\Gamma'_{(1,2)}$ that stands for the set $\Gamma_{(1,2)}\cup\{\bot \}$.

\begin{proposition}\label{stack-height-bound}
Given any number $\epsilon\geq0$, it follows that $|\{x\in D_{(2,3)}\mid \exists y\in H_x\,[|\gamma^{(x)}_{y}|\geq (n-2-\epsilon)/\log|\Gamma'_{(1,2)}|]\}|\geq |D_{(2,3)}|(1-2^{-\epsilon})$.
\end{proposition}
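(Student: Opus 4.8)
The plan is to work within the special case $|E_x|=1$, where the choice of $y$ is immaterial, so I write $\gamma^{(x)}$ for the common value of the $\gamma^{(x)}_y$ over $y\in H_x$; for each $x\in D_{(2,3)}$ the event ``$\exists y\in H_x\,[|\gamma^{(x)}_y|\geq L]$'' then collapses to ``$|\gamma^{(x)}|\geq L$'', where I abbreviate $L=(n-2-\epsilon)/\log|\Gamma'_{(1,2)}|$. It therefore suffices to bound the complementary ``small-stack'' set $S=\{x\in D_{(2,3)}\mid |\gamma^{(x)}|<L\}$ and show $|S|\leq|D_{(2,3)}|\cdot 2^{-\epsilon}$, since the set named in the proposition is exactly $D_{(2,3)}\setminus S$.

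The heart of the argument is a near-injectivity property of the assignment $\Phi:D_{(2,3)}\to\{\text{stack contents}\}$ given by $\Phi(x)=\gamma^{(x)}$; I claim $\Phi$ is at most $2$-to-$1$. Indeed, I would take $x_1,x_2\in D_{(2,3)}$ with $x_1\neq x_2$ and $x_2\neq x_1^R$, so that $x_2\in H_{x_1}$. Since $n$ is large, $H_{x_2}\neq\setempty$, so I may fix some $y\in H_{x_2}$. Applying Lemma \ref{x2-vs-y-difference} at the position $i=|x_1\#x_1^R\#|$ (the largest admissible index, which also equals $|x_2\#x_2^R\#|$ because $|x_1|=|x_2|=n$) yields $\gamma^{(x_1)}_{i,x_2}\neq\gamma^{(x_2)}_{i,y}$, which in the present case reads $\gamma^{(x_1)}\neq\gamma^{(x_2)}$. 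Contrapositively, whenever two distinct elements of $D_{(2,3)}$ share a stack content, one must be the reversal of the other; hence every fibre $\Phi^{-1}(\gamma)$ is contained in a pair $\{x,x^R\}$ and has at most two elements.

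With the fibre bound in hand, the remainder is a counting estimate. Every stack content arising on a $(1,2)$-computation path has the form $wZ_0$ with $w$ a word of color-$(1,2)$ symbols, so the number of distinct $\gamma$ with $|\gamma|<L$ is at most the number of words of length below $L$ over an alphabet of size at most $|\Gamma'_{(1,2)}|$, which is strictly less than $|\Gamma'_{(1,2)}|^{L}=2^{L\log|\Gamma'_{(1,2)}|}=2^{\,n-2-\epsilon}$. Combining this with the $2$-to-$1$ bound gives $|S|\leq 2\cdot 2^{\,n-2-\epsilon}=2^{\,n-1}\cdot 2^{-\epsilon}$. Finally, the Case 1 hypothesis guarantees $|D_{(2,3)}|\geq 2^{n}/2=2^{\,n-1}$, whence $|S|\leq|D_{(2,3)}|\cdot 2^{-\epsilon}$, and therefore $|\{x\in D_{(2,3)}\mid|\gamma^{(x)}|\geq L\}|=|D_{(2,3)}|-|S|\geq|D_{(2,3)}|(1-2^{-\epsilon})$, as required.

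The hard part is the $2$-to-$1$ claim: everything hinges on invoking Lemma \ref{x2-vs-y-difference} correctly, in particular on verifying $x_2\in H_{x_1}$ (the genuinely excluded case being exactly $x_2=x_1^R$, which is what forces a factor $2$ rather than outright injectivity) and on the harmless observation that $H_{x_2}$ is nonempty for large $n$ so that a witness $y$ exists. The counting step is routine; the only point to watch is that the weaker denominator $\log|\Gamma'_{(1,2)}|$ (rather than $\log|\Gamma_{(1,2)}|$) keeps the threshold $L$ small enough that the crude count $|\Gamma'_{(1,2)}|^{L}$ of short stack contents suffices, with the factor $2$ from the fibres absorbed precisely by the ``$-2$'' in the exponent $n-2-\epsilon$.
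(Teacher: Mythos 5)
Your proposal is correct and follows essentially the same route as the paper: the paper factors the argument through Lemma \ref{distinct-pair} (obtained from Lemma \ref{x2-vs-y-difference} at position $i=|x_1\#x_1^R\#|$) and Lemma \ref{bound-of-A_d} (giving $|A_d|\geq |D_{(2,3)}|-2|\Gamma'_{(1,2)}|^d$ for general $d$, then setting $d=(n-2-\epsilon)/\log|\Gamma'_{(1,2)}|$), which is exactly your near-injectivity of $x\mapsto\gamma^{(x)}$ plus the count of short stack contents and the hypothesis $|D_{(2,3)}|\geq 2^n/2$. Your explicit $2$-to-$1$ fibre bound with fibres contained in $\{x,x^R\}$ corresponds to the paper's splitting of $B_d$ into palindromes $\bar{B}_d$ and reversal pairs in $B_d-\bar{B}_d$, and is in fact stated more carefully than the paper's Lemma \ref{distinct-pair}, whose proof via Lemma \ref{x2-vs-y-difference} likewise only covers pairs with $x_2\in H_{x_1}$.
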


To prove Proposition \ref{stack-height-bound},  let us consider two stack contents $\gamma^{(x_1)}_{x_2}$ and $\gamma^{(x_2)}_{x_1}$ associated with two distinct strings $x_1,x_2\in D_{(2,3)}$. By choosing $i=|x_1\#x_1^R\#|$ in Lemma  \ref{x2-vs-y-difference}, we immediately obtain  $\gamma^{(x_1)}_{x_2} \neq \gamma^{(x_2)}_{x_1}$. We then reach the following conclusion.

\begin{lemma}\label{distinct-pair}
For every distinct pair $x_1$ and $x_2$ in $D_{(2,3)}$, it follows that $\gamma^{(x_1)}\neq \gamma^{(x_2)}$.
\end{lemma}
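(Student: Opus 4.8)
The plan is to read Lemma~\ref{distinct-pair} as the assertion that, in the special case $|E_x|=1$, the assignment $x\mapsto\gamma^{(x)}$ is injective on $D_{(2,3)}$. The whole point of the hypothesis $|E_x|=1$ is that $E_x$ is a singleton, so the abbreviation $\gamma^{(x)}$ is unambiguous: for every $y\in H_x$ we have $\gamma^{(x)}_y=\gamma^{(x)}$, and this common value is literally the stack content produced at the terminal prefix position $|x\#x^R\#|=2n+2$ along any of the fixed accepting $(1,2)$-paths $p_{x,y}$. Hence it suffices to separate these terminal stack contents for two different seeds.

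First I would fix a distinct pair $x_1,x_2\in D_{(2,3)}$ in the generic situation $x_2\in H_{x_1}$ (equivalently $x_2\notin\{x_1,x_1^R\}$). Since $x_2=x_1^R$ is equivalent to $x_1=x_2^R$, the condition $x_2\in H_{x_1}$ together with $x_1\neq x_2$ automatically yields $x_1\in H_{x_2}$, so both chosen paths $p_{x_1,x_2}$ and $p_{x_2,x_1}$ are defined. Now I would invoke Lemma~\ref{x2-vs-y-difference} with its free string instantiated as $y:=x_1$: the hypotheses $x_2\in H_{x_1}$, $x_1\in H_{x_2}$, $x_2\in D_{(2,3)}$, and $x_1\neq x_2$ all hold, so no position $i$ with $|x_1|\le i\le|x_1\#x_1^R\#|$ satisfies $\gamma^{(x_1)}_{i,x_2}=\gamma^{(x_2)}_{i,x_1}$. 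Specialising to the terminal position $i=|x_1\#x_1^R\#|=2n+2$, which lies in this range and equals $|x_2\#x_2^R\#|$ because $|x_1|=|x_2|=n$, gives $\gamma^{(x_1)}_{x_2}\neq\gamma^{(x_2)}_{x_1}$. Finally, using $|E_{x_1}|=|E_{x_2}|=1$ to identify $\gamma^{(x_1)}_{x_2}=\gamma^{(x_1)}$ and $\gamma^{(x_2)}_{x_1}=\gamma^{(x_2)}$ delivers $\gamma^{(x_1)}\neq\gamma^{(x_2)}$.

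The step I expect to be the real obstacle is the excluded pair $x_2=x_1^R$ (with $x_1$ not a palindrome), where both $x_1$ and $x_1^R$ happen to lie in $D_{(2,3)}$. Here $x_2\notin H_{x_1}$, so neither labelling makes Lemma~\ref{x2-vs-y-difference} applicable, and the naive swap fails: cutting $p_{x_1,y}$ and $p_{x_1^R,y}$ at the assumed common stack merely reproduces accepting $(1,2)$-paths on $x_1\#x_1^R\#y$ and $x_1^R\#x_1\#y$, both of which legitimately exist, so no contradiction with single-valuedness appears. Lemma~\ref{stack-difference-at-i}, via condition (ii) at the first index $i^{*}\le n$ where $x_1$ and $x_1^R$ disagree, does separate the two stack histories, but only at position $i^{*}$, not at the terminal position that defines $\gamma^{(x)}$. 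I would therefore either supply a dedicated argument that propagates this early discrepancy forward to the terminal position, or, if that proves elusive, fall back on the observation that $x\mapsto\gamma^{(x)}$ can collapse at most the two-element set $\{x,x^R\}$ and so is at worst $2$-to-$1$; this ``almost injectivity'' already suffices to feed the counting of distinct stack contents in Proposition~\ref{stack-height-bound}, whose slack parameter $\epsilon$ absorbs the resulting constant factor.
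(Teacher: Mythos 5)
Your main argument --- instantiating Lemma~\ref{x2-vs-y-difference} with $y:=x_1$ at the terminal position $i=|x_1\#x_1^R\#|$ (after checking that $x_2\in H_{x_1}$ and $x_1\neq x_2$ force $x_1\in H_{x_2}$) and then using $|E_x|=1$ to identify $\gamma^{(x_1)}_{x_2}$ and $\gamma^{(x_2)}_{x_1}$ with $\gamma^{(x_1)}$ and $\gamma^{(x_2)}$ --- is exactly the paper's proof of this lemma. Your caveat about the excluded pair $(x_1,x_2)=(x,x^R)$ is well spotted: the paper's one-line derivation likewise does not cover that case despite the lemma's ``every distinct pair'' phrasing, and the paper's own application in Lemma~\ref{bound-of-A_d} (where $h$ is claimed 1-to-1 only ``on at least a half of elements in $B_d-\bar{B}_d$,'' yielding the factor $2$ in $|A_d|\geq |D_{(2,3)}|-2|\Gamma'_{(1,2)}|^d$) silently adopts precisely your fallback that the map $x\mapsto\gamma^{(x)}$ can collapse at most the pair $\{x,x^R\}$.
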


 To simplify our notation further,  we write
$A_d$ to express the set $\{x\in D_{(2,3)} \mid \exists y\in H_x\,[|\gamma^{(x)}_{y}|\geq d]\}$ for each chosen number $d\in\nat^{+}$. With this notation, Proposition \ref{stack-height-bound} is equivalent to the assertion that $|A_{(n-2-\varepsilon)/\log|\Gamma'_{(1,2)}|}|\geq |D_{(2,3)}|(1-2^{-\varepsilon})$. Associated with $A_d$, we define $B_d=\{x\in D_{(2,3)} \mid \forall y\in H_x\,[|\gamma^{(x)}_{y}|<d]\}$.
Note that $D_{(2,3)} = A_d\cup B_d$ and $A_d\cap B_d = \setempty$ for any number $d\in\nat^{+}$. In the following lemma, we present a lower bound on the cardinality of $A_d$.

\begin{lemma}\label{bound-of-A_d}
For any constant $d\in\nat^{+}$, it follows that $|A_d|\geq |D_{(2,3)}| - 2|\Gamma'_{(1,2)}|^d$.
\end{lemma}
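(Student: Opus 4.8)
The plan is to give a counting argument that directly bounds $|B_d|$ from above and then uses $D_{(2,3)} = A_d \cup B_d$ to derive the stated lower bound on $|A_d|$. The key observation is that for every $x \in B_d$, all the stack contents $\gamma^{(x)}_y$ with $y \in H_x$ are short: they have size strictly less than $d$. By Lemma \ref{distinct-pair}, distinct strings $x_1, x_2 \in D_{(2,3)}$ always yield distinct stack contents $\gamma^{(x_1)} \neq \gamma^{(x_2)}$ (in the special case notation, where the subscript $y$ has been suppressed). So the map $x \mapsto \gamma^{(x)}$ is injective on $D_{(2,3)}$, and in particular injective on $B_d \subseteq D_{(2,3)}$.

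The main step is then just to count the codomain of this injection restricted to $B_d$. Every $x \in B_d$ maps to a stack content $\gamma^{(x)}$ whose size is strictly less than $d$, and whose symbols all come from the stack alphabet $\Gamma'_{(1,2)} = \Gamma_{(1,2)} \cup \{Z_0\}$ associated with color $(1,2)$. The number of strings over an alphabet of size $|\Gamma'_{(1,2)}|$ of length at most $d-1$ (or length less than $d$) is at most $|\Gamma'_{(1,2)}|^d$ — this is the standard geometric-series bound, and being generous with the off-by-one constant only helps. Since the map is injective, $|B_d| \leq |\Gamma'_{(1,2)}|^d$. One should double-check whether the special-case assumption $|E_x| = 1$ forces the stack content at the final reading position to be genuinely single-valued per $x$, so that ``$\gamma^{(x)}$'' is well-defined; under $|E_x| = 1$ this is exactly what we have, so each $x \in B_d$ contributes precisely one short stack content.

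Combining the two pieces finishes the proof: from $D_{(2,3)} = A_d \cup B_d$ we get $|A_d| \geq |D_{(2,3)}| - |B_d| \geq |D_{(2,3)}| - |\Gamma'_{(1,2)}|^d$. I would then remark that the factor $2$ in the statement ($-2|\Gamma'_{(1,2)}|^d$ rather than $-|\Gamma'_{(1,2)}|^d$) is simply a loose constant that comfortably absorbs the off-by-one in counting strings of length ``less than $d$'' versus ``at most $d$,'' so the stated inequality holds a fortiori.

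I do not expect any real obstacle here; the substance of the argument has already been carried by Lemma \ref{x2-vs-y-difference} and its corollary Lemma \ref{distinct-pair}, which supply the injectivity. The only point demanding a little care is making the counting bound match the precise definition of $B_d$ (strict inequality $|\gamma^{(x)}_y| < d$) and the convention for $\Gamma'_{(1,2)}$ including the bottom marker $Z_0$; once the alphabet and the length bound are pinned down, the cardinality estimate is routine.
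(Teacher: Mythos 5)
There is a genuine gap, and it sits exactly where you declared the factor $2$ to be harmless slack. Your argument needs the map $x \mapsto \gamma^{(x)}$ to be injective on all of $B_d$, which you extract from the literal statement of Lemma \ref{distinct-pair}. But that lemma is obtained from Lemma \ref{x2-vs-y-difference}, whose hypothesis requires $x_2 \in H_{x_1}$, i.e., $x_2 \notin \{x_1, x_1^R\}$; so distinctness of $\gamma^{(x_1)}$ and $\gamma^{(x_2)}$ is only supported for pairs that are not reversals of each other. For a pair $(x, x^R)$ with $x \neq x^R$ nothing forbids $\gamma^{(x)} = \gamma^{(x^R)}$, and the path-swapping technique cannot rule it out: splicing the suffix subpath of an accepting $(1,2)$-computation path on $x^R\#x\#y'$ onto the prefix of one on $x\#x^R\#y$ merely yields an accepting $(1,2)$-path on $x\#x^R\#y'$ with $y' \in H_x$, which is perfectly legitimate and produces no contradiction. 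So the map can be $2$-to-$1$ on reversal pairs, and your intermediate bound $|B_d| \leq |\Gamma'_{(1,2)}|^d$ is unjustified.

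The paper's proof navigates precisely this obstacle: it splits $B_d$ into the palindromes $\bar{B}_d = \{x \in B_d \mid x = x^R\}$, on which $h: x\mapsto \gamma^{(x)}$ is 1-to-1 (two distinct palindromes do satisfy $x_2 \in H_{x_1}$, so the lemma applies), and $B_d - \bar{B}_d$, on which $h$ is 1-to-1 on at least one representative of each pair $\{x, x^R\}$, hence on at least half the elements. This yields $|B_d|/2 \leq \sum_{j=1}^{d-1}|\Gamma'_{(1,2)}|^j \leq |\Gamma'_{(1,2)}|^d$, and thus $|B_d| \leq 2|\Gamma'_{(1,2)}|^d$: the factor $2$ in the statement is exactly the price of the reversal-pair ambiguity, not an off-by-one cushion in counting strings of length below $d$. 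Your remaining steps --- the count of short stack contents over $\Gamma'_{(1,2)}$, the well-definedness of $\gamma^{(x)}$ under $|E_x| = 1$, and the passage from $D_{(2,3)} = A_d \cup B_d$ to the final inequality --- are all fine; the repair is to insert the palindrome/reversal-pair case split before counting.
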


\begin{yproof}
Since  $\{A_d,B_d\}$ partitions $D_{(2,3)}$, it follows that $|A_d|= |D_{(2,3)}| - |B_d|$. To prove the lemma, let us concentrate on $B_d$. From $|E_x|=1$, $B_d$ coincides with $\{x\in D_{(2,3)}\mid |\gamma^{(x)}|<d\}$.  Notice that each $\gamma^{(x)}$ belongs to $(\Gamma'_{(1,2)})^m$ for a  certain number $m$ with $m\leq d-1$. Consider a mapping $h$ from $x$ to $\gamma^{(x)}$. Induced from $B_d$, we define  $\bar{B}_d =\{x\in B_d\mid x^R=x\}$. The function $h$ is 1-to-1 on $\bar{B}_d$ and, by Lemma \ref{distinct-pair}, it is  also 1-to-1 on at least a half of elements in $B_d-\bar{B}_d$. Hence,  it follows that $|B_d|/2 \leq \sum_{j=0}^{d-1}|\Gamma'_{(1,2)}|^j=|\Gamma'_{(1,2)}|^d$. We conclude that  $|A_d|= |D_{(2,3)}| - |B_d| \geq |D_{(2,3)}| -2|\Gamma'_{(1,2)}|^d$, as requested.
\end{yproof}

With the help of Lemma \ref{bound-of-A_d}, Proposition \ref{stack-height-bound} is now easy to prove.

\begin{proofof}{Proposition \ref{stack-height-bound}}
For simplicity, we use $d$ to denote  $(n-2-\epsilon)/\log|\Gamma'_{(1,2)}|$, which equals $\log_{|\Gamma'_{(1,2)}|}2^{n-2-\epsilon}$. Our goal is to show that $|A_d|\geq |D_{(2,3)}|(1-2^{-\epsilon})$.
By Lemma \ref{bound-of-A_d}, we obtain $|A_d|\geq |D_{(2,3)}|- 2|\Gamma'_{(1,2)}|^d$. Notice that, by the definition, $|\Gamma'_{(1,2)}|^d = 2^{n-2-\varepsilon} \leq |D_{(2,3)}|\cdot 2^{-1-\varepsilon}$, where the last inequality comes from our assumption of $|D_{(2,3)}|\geq 2^n/2$. As a result, we obtain $|A_d| \geq |D_{(2,3)}|- 2|\Gamma'_{(1,2)}|^d \geq |D_{(2,3)}|(1-2^{-\epsilon})$, as requested.
\end{proofof}

To finish this special case, let $x=rz$, $x^R=z^Rsr'$,  $\gamma^{(x)} = uv\bot $, and $\gamma^{(x)}_{\ell}=\sigma v\bot $ with $\ell=|x\#z^Rs|$. Assume that $M$ transforms $\sigma$ to $u$ while reading $r'$.
Proposition \ref{size-stack-content} shows that, for most of $x$'s,
$|uv\bot |$ is upper-bounded by a certain constant, independent of $(n,x,y)$.
However, by setting, \eg $\epsilon=98$, Proposition \ref{stack-height-bound} yields  $|uv\bot |\geq (n-100)/\log|\Gamma'_{(1,2)}|$ for at least the $2/3$-fraction of $x$'s in $D_{(2,3)}$. Since $n$ is sufficiently large, $|uv\bot |$ cannot be bounded from above by any absolute constant. Therefore, we obtain a clear contradiction.

\vs{-1}
\paragraph{\bf II) General Case of $|E_x|\geq1$.}

We have already shown how to cope with the case of $|E_x|=1$ for all strings $x\in D_{(2,3)}$. Hereafter, we will discuss a general
case where
$|E_{x}|\geq 1$ holds for any $x\in D_{(2,3)}$.
Our goal is to prove the correctness of the following statement.

\begin{proposition}\label{symbol-length-small}
Let $d>0$. All but $2(|\Gamma'_{(1,2)}|^d+1)$ strings  $x$ in $D_{(2,3)}$ satisfy the following: there exists a stack content $\tau \in E_x$ for which $\tau$ contains at least $d$ symbols; namely, $|\{x\in D_{(2,3)}\mid \exists\,\tau\in E_x\,[|\tau|\geq d]\}|\geq |D_{(2,3)}| -2|\Gamma'_{(1,2)}|^d-2$.
\end{proposition}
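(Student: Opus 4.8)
The plan is to establish the equivalent bound $|\widetilde{B}_d|\le 2|\Gamma|^d$ for the ``all short'' set $\widetilde{B}_d=\{x\in D_{(2,3)}\mid \forall y\in H_x\,[\,|\gamma^{(x)}_y|<d\,]\}$, whose complement in $D_{(2,3)}$ is exactly the set appearing in the proposition. I would mimic the single-valued case treated in Lemma \ref{bound-of-A_d} and Proposition \ref{stack-height-bound}: there the assignment $x\mapsto\gamma^{(x)}$ is, up to the symmetry $x\leftrightarrow x^R$, injective and takes values among stack contents of length below $d$, of which there are at most $\sum_{j<d}|\Gamma|^j<|\Gamma|^d$, while the factor $2$ simply absorbs the pairing of $x$ with $x^R$. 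Accordingly, I first discard one string from each pair $\{x,x^R\}$ to obtain a subset $T'\subseteq\widetilde{B}_d$ containing no two mutually reversed strings (so $|\widetilde{B}_d|\le 2|T'|$), and reduce the goal to $|T'|\le|\Gamma|^d$.

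The driving tool is Lemma \ref{x2-vs-y-difference} read at the terminal position $i=|x\#x^R\#|$. For any two distinct $x_1,x_2\in T'$ we have $x_2\in H_{x_1}$ and $x_1\in H_{x_2}$, precisely because $T'$ has no reversed pairs, so the lemma gives $\gamma^{(x_1)}_{x_2}\neq\gamma^{(x_2)}_{y}$ for every $y\in H_{x_2}$; that is, the ``diagonal'' stack content $\gamma^{(x_1)}_{x_2}$ lies in $E_{x_1}$ but equals no member of $E_{x_2}$, whence $\gamma^{(x_1)}_{x_2}\in E_{x_1}\setminus E_{x_2}$. The mechanism is the familiar swap: equality would let one splice the tail reading $x_1$ onto the prefix reading $x_1\#x_1^R\#$, producing an accepting $(1,2)$-path on $x_1\#x_1^R\#x_1$ and forcing $x_1\in D_{(1,2)}$, contrary to $x_1\in D_{(2,3)}$. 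In particular every $E_x$ is a nonempty subset of the set $S$ of stack contents of length below $d$, with $|S|<|\Gamma|^d$, and the family $\{E_x\}_{x\in T'}$ is pairwise incomparable.

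The step I expect to be the genuine obstacle is converting this set-level separation into the linear count $|T'|\le|S|$. In contrast with the single-valued case, $E_x$ carries no canonical representative: the natural candidate $\gamma^{(x)}_{x'}$ depends on the partner $x'$, and for $x\in\widetilde{B}_d$ the automaton has effectively ``forgotten'' its first block, so two different first blocks can reach the very same stack content on a common third block; hence no fixed choice of representative is injective. The incomparability of $\{E_x\}$ alone is too weak, since Hall's condition may fail for an antichain, so it cannot by itself yield a system of distinct representatives. My intended route is to construct an injection $\rho\colon T'\to S$ greedily along a fixed order $z_1\prec z_2\prec\cdots$, at stage $k$ choosing $\rho(z_k)$ among the diagonal stacks $\{\gamma^{(z_k)}_{z_i}:i<k\}$ while avoiding $\{\rho(z_1),\dots,\rho(z_{k-1})\}$; the non-membership $\gamma^{(z_k)}_{z_i}\notin E_{z_i}\ni\rho(z_i)$ immediately settles the case $|E_{z_k}|=1$ and, more generally, rules out the most dangerous collisions. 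To force a fresh representative to exist in the remaining cases I would invoke the pushdown-specific rigidity of Lemma \ref{subpath-transform} — that a transformation over $(\Gamma_{(1,2)})^*$ is realized by at most one read string — applied to the fixed prefix $z_k\#z_k^R\#$, so that availability is dictated by the reachability structure of the automaton rather than by the abstract set system alone; securing this availability is the delicate point of the whole argument. Once $\rho$ is injective we obtain $|T'|\le|S|<|\Gamma|^d$, hence $|\widetilde{B}_d|\le 2|\Gamma|^d$, which together with Proposition \ref{size-stack-content} yields the two contradictory bounds that close Case 1. I note finally that for the contradiction itself only an $n$-independent bound on $|\widetilde{B}_d|$ is needed, and even the crude estimate $|T'|\le 2^{|S|}$ afforded by mere incomparability would suffice there; it is the sharper form stated in the proposition that requires the representative construction.
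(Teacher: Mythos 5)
You correctly reconstructed the paper's framework: the paper likewise passes to the complement set $X_d=\{x\in D_{(2,3)}\mid \forall y\,[|\gamma^{(x)}_{y}|<d]\}$, extracts a maximal reversal-free subset $X'$ with $|X_d|\le 2|X'|$, observes that $\EE=\bigcup_{x\in X_d}E_x$ sits inside the short stack contents so that $|\EE|\le\sum_{i=0}^{d-1}|\Gamma|^i\le|\Gamma|^d$, and uses exactly your diagonal separation (its Lemma \ref{accept-reject-path}, equivalent to your terminal-position reading of Lemma \ref{x2-vs-y-difference}): for distinct $x_1,x_2\in X'$, $\gamma^{(x_1)}_{x_2}\in E_{x_1}\setminus E_{x_2}$. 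Up to this point the two arguments coincide. The genuine gap in your proposal is that you never complete the linear count $|T'|\le|S|$: the greedy injection $\rho$ is only a plan, the availability of a fresh representative at stage $k$ is precisely the assertion to be proved, and the appeal to Lemma \ref{subpath-transform} is a hope rather than an argument --- that lemma says a given stack transformation over $(\Gamma_{(1,2)})^*$ is realized by at most one read string, which gives no handle on whether some diagonal stack $\gamma^{(z_k)}_{z_i}$ avoids the finitely many representatives already consumed. As a proof of the proposition as stated, the proposal is therefore incomplete.

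The paper finesses this step without any Hall-type or greedy machinery, by a cyclic device: enumerate $X'=\{x_1,\ldots,x_m\}$ with $m>|\Gamma|^d$ and take the single chain $\xi_i=\gamma^{(x_i)}_{x_{i+1}}$ for $i\in[m-1]$, $\xi_m=\gamma^{(x_m)}_{x_1}$; it asserts that the diagonal separation (**) makes the $\xi_i$ pairwise distinct, so $m\le|\EE|\le|\Gamma|^d$, a contradiction, yielding $|X'|\le|\Gamma|^d$ and hence $|X_d|\le2|\Gamma|^d$. You should note, though, that your diagnosis of where the difficulty lies is accurate even against this step as written: (**) gives $\xi_i\notin E_{x_{i+1}}$ while $\xi_{i+1}\in E_{x_{i+1}}$, which separates only \emph{consecutive} links of the cycle; for non-consecutive $i,j$, equality $\xi_i=\xi_j$ merely allows tail-swaps producing accepting $(1,2)$-paths on $x_i\#x_i^R\#x_{j+1}$ and $x_j\#x_j^R\#x_{i+1}$, contradictory only when $x_{i+1}\in\{x_j,x_j^R\}$ --- so the paper's one-line distinctness claim is terse at exactly your ``delicate point.'' Finally, your fallback observation is correct and worth keeping: the diagonal separation makes the family $\{E_x\}_{x\in T'}$ pairwise incomparable, hence pairwise distinct, giving $|T'|\le 2^{|S|}$; taking $d=d_1$ (the \emph{constant} of Proposition \ref{size-stack-content}) this bounds $|\widetilde{B}_{d_1}|$ by a constant while Proposition \ref{size-stack-content} forces $|\widetilde{B}_{d_1}|\ge|D_{(2,3)}|-d_2\ge 2^n/2-d_2$, which suffices to close Case 1 and the main theorem --- but it does not establish the sharper $2|\Gamma|^d$ bound that the proposition itself claims.
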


We first give basic notions and notation needed for the proof of Proposition \ref{symbol-length-small}. With a fixed number $n$, let
$G_n$ denote a specific undirected graph $(V_n,E_n)$ in which $V_n= \{(i,j)\mid i,j\in[n],i\neq j\}$ and, for any two vertices $v_1=(i_1,j_1)$ and $v_2=(i_2,j_2)$ in $V_n$, $(v_1,v_2)$ belongs to $E_n$ if either $i_1=j_2$ or $i_2=j_1$. A \emph{coloring} of $G_n$ is a function $\phi:V_n\to C$ for a certain finite set $C$.
Given a number $t\in\nat^{+}$, $G_n$ is said to be \emph{$t$-colorable} if there exists a coloring $\phi:V_n\to C$ with $|C|=t$ for which no single color is assigned to two adjacent vertices (i.e., no edge $(v_1,v_2)\in E$ satisfies $\phi(v_1)=\phi(v_2)$).
The \emph{chromatic number} of $G_n$, denoted by $\chi(G_n)$, is the smallest number $t\in\nat^+$ that makes $G_n$ be  $t$-colorable.

\begin{lemma}\label{chromatic}
For any $n\in\nat^{+}$ with $n\geq3$, $\chi(G_n)$ equals $n$.
\end{lemma}

\begin{proof}
Firstly, we assert that $\chi(G_n)\leq n$. To achieve this goal, we define a special coloring $\phi$ as follows. Let $C$ be the color set $\{c_1,c_2,\ldots,c_n\}$. For any pair $i,j\in[n]$ with $i\neq j$, we set $\phi((i,j))=c_j$. It then follows that, for any two distinct vertices $v_1=(i_1,j_1)$ and $v_2=(i_2,j_2)$ in $G_n$, if $\phi(v_1)=\phi(v_2)$, then $j_1=j_2$ holds; thus, we obtain $(v_1,v_2)\notin E_n$ because, otherwise, either $i_1=j_1$ or $i_2=j_2$ follows.
Therefore, $\phi$ is a valid coloring of $G_n$. Since $|C|=n$, we obtain $\chi(G_n)\leq n$.

To show that $\chi(G_n)\geq n$, on the contrary, we start with calculating the ``independent number'' of $G_n$. An \emph{independent set} $A$ of $G_n$ is a set of vertices of $G_n$ such that any two distinct vertices in $A$ cannot be adjacent in $G_n$. The \emph{independent number} $\alpha(G_n)$ is the maximum size of any independent set of $G_n$. It is immediate that $\chi(G_n) \alpha(G_n) \geq |V|$. We estimate $\alpha(G_n)$ in the following claim.

\begin{claim}\label{independent}
For each $n\in\nat^{+}$ with $n\geq3$, $\alpha(G_n)=n-1$.
\end{claim}

From this claim, since $|V_n|=n(n-1)$ and $\alpha(G_n)=n-1$, we conclude that
$\chi(G_n)\geq \frac{|V_n|}{\alpha(G_n)} = \frac{n(n-1)}{n-1} = n$, as requested.

To prove Claim \ref{independent}, we first show that $\alpha(G_n)\geq n-1$. For this purpose, let us consider the set $A=\{(1,i)\in V_n \mid i\in[2,n]_{\integer}\}$. Clearly, $A$ is an independent set of $G_n$. Since $|A|=n-1$, we immediately obtain $\alpha(G_n)\geq n-1$.
Next, we intend to verify that $\alpha(G_n)\leq n-1$. Toward a contradiction, we assume that $\alpha(G_n)\geq n$.
Take an independent set, say, $B$ of $G_n$ of cardinality $n$. We assume that $B$ has the form $\{(k_i,j_i)\mid i\in[n]\}$. Since $|B|=n$, it is not difficult to show that there are two numbers $i_1,i_2\in[n]$ for which either  $k_{i_1} = j_{i_2}$ or $k_{i_2}=j_{i_1}$ holds. This is a contradiction. Thus, we conclude that $\alpha(G_n)\leq n-1$. This completes the proof of the claim.
\end{proof}

Let us return to the proof of Proposition \ref{symbol-length-small}.

\begin{proofof}{Proposition \ref{symbol-length-small}}
As done in I), we set $A_d = \{x\in D_{(2,3)}\mid \exists y\in H_x\, [|\gamma^{(x)}_{y}|\geq d]\}$ and define $B_d=\{x\in D_{(2,3)}\mid \forall y\in H_x\,[|\gamma^{(x)}_{y}|<d]\}$ so that $D_{(2,3)}=A_d\cup B_d$.
With these notations, the proposition asserts that $|A_d|\geq |D_{(2,3)}|-2(|\Gamma'_{(1,2)}|^d+1)$, or equivalently, $|B_d|\leq 2(|\Gamma'_{(1,2)}|^d+1)$.
We further restrict $B_d$ as $\hat{B}_d = \{x\in B_d\mid x<x^R\}$, where ``$<$'' is the lexicographic ordering on $\{0,1\}^n$. Obviously,
$|B_d|\leq 2|\hat{B}_d|$ follows. To obtain the proposition,
it therefore suffices to prove the statement: (*) $|\hat{B}_d|\leq |\Gamma'_{(1,2)}|^d+1$. In what follows, we wish to prove this statement (*).

Let $m=|\hat{B}_d|$ for simplicity. We express all elements of $\hat{B}_d$ as $\{x_1,x_2,\ldots,x_m\}$ and then identify it with the integer set $\{1,2,\ldots,m\}$ ($=[m]$).
We then introduce an undirected graph $G=(V,E)$ as follows. Let $V=\{(x,y)\mid x,y\in[m],x\neq y\}$ and define $E$ to be composed of all edges $((x_1,x_2),(x_2,y_2))$ such that (i) $(x_1,y_1),(x_2,x_2)\in V$ and (ii) either $x_1=y_2$ or $x_2=y_1$. We set
$C=\{\gamma^{(x)}_{y}\mid (x,y)\in V\}$ and define a function $\phi: V\to C$ by setting $\phi(x,y)=\gamma^{(x)}_{y}$, where $x$ and $y$ are seen as associated elements in $\hat{B}_d$. We assert the following claim concerning $\phi$.

\begin{claim}\label{coloring}
For any $x,y,z$, if $(x,y)$ and $(y,z)$ are vertices in $V$, then $\phi(x,y)\neq \phi(y,z)$ holds.
\end{claim}

To prove the claim, we assume otherwise. Take three elements $x,y,z$ satisfying that $(x,y),(y,z)\in V$ and $\phi(x,y)=\phi(y,z)$.
Note that $x,y\in D_{(2,3)}$, $y\in H_x$, and $z\in H_y$. By Lemma \ref{x2-vs-y-difference} with $i=|x\# x^R \#|$, we conclude that $\gamma^{(x)}_{i,y}\neq \gamma^{(y)}_{i,z}$. From this inequality, we obtain $\phi(x,y)\neq \phi(y,z)$. However, this is obviously a contradiction. Therefore, the claim is true.

By Claim \ref{coloring}, $\phi$ turns out to be a coloring of $G$; thus, we obtain  $\chi(G)\leq |C|$. Since $C\subseteq (\Gamma'_{(1,2)})^{<d}$, it follows that $\chi(G) \leq |C|\leq \sum_{i=0}^{d-1}|\Gamma'_{(1,2)}|^i = |\Gamma'_{(1,2)}|^d$.
Furthermore, Lemma \ref{chromatic} implies that the chromatic number of $G$ is exactly  $m-1$, which equals $|\hat{B}_d|-1$. Thus, we conclude that $|\hat{B}_d|\leq |C|+1\leq |\Gamma'_{(1,2)}|^d+1$, as requested.
\end{proofof}

Finally, let us close Case 1 by drawing a contradiction. Here, we set $d= \floors{\log_{|\Gamma|}(|D_{(2,3)}|/8)}$.
A simple calculation yields
$2(|\Gamma'_{(1,2)}|^d+1) \leq 4|\Gamma'_{(1,2)}|^d \leq |D_{(2,3)}|/2$. Moreover, since $|D_{(2,3)}|\geq 2^n/2$, it follows that $d\geq \log_{|\Gamma'_{(1,2)}|}\frac{|D_{(2,3)}|}{8}-1 \geq \frac{n-4}{\log{|\Gamma'_{(1,2)}|}}-1$.  By Proposition \ref{symbol-length-small}, for at least $|D_{(2,3)}|-2|\Gamma'_{(1,2)}|^d-2$ ($\geq |D_{(2,3)}|/2$) elements $x$ in $D_{(2,3)}$,  an appropriately chosen string $y\in H_x$ makes  $\gamma^{(x)}_{y}$ satisfy $|\gamma^{(x)}_{y}|\geq d$, which further implies  $|\gamma^{(x)}_{y}| \geq (n-4)/\log{|\Gamma'_{(1,2)}|}-1$.
Proposition \ref{size-stack-content}, however,  indicates that, except for at most $d_2$ elements in $D_{(2,3)}$, all strings $x\in D_{(2,3)}$ satisfy $|\gamma^{(x)}_{y}|\leq d_1$ for any choice of $y$ in $H_x$, where $d_1$ and $d_2$ are absolute constants, not depending on $(n,x,y)$.
Since there are infinitely many $n$ satisfying $|D_{(2,3)}|\geq 2^n/2$, for any sufficiently large $n$, there exists a string $x\in D_{(2,3)}$ for which $|\gamma^{(x)}_{y}|\geq (n-4)/\log|\Gamma'_{(1,2)}|-1$ and $|\gamma^{(x)}_{y}|\leq d_1$.
This leads to a clear contradiction, as requested; therefore, this closes Case 1.

\subsection{Case 2: $D^{(n)}_{(1,2)}$ is Large for Infinitely Many Lengths $n$}\label{sec:case-two}

We have already handled Case 1 in Sections \ref{sec:case-one}--\ref{sec:stack-size}. To complete the proof of Proposition \ref{h3-no-refinement}, nevertheless, we still need to deal with the remaining second case where $\{n\in\nat^{+}\mid |D^{(n)}_{(2,3)}|\geq 2^n/2\}$ is a finite set, implying that  $|D^{(n)}_{(1,2)}|> 2^{n}/2$ holds for all but finitely many $n\in\nat$. Instead of managing an argument similar to Case 1, we instead make a quite different approach. Let us recall from Section \ref{sec:colored-automata}  the introduction of the colored automaton $M = (Q,\Sigma,\{\cent,\dollar\},\Gamma,I_3,\delta,q_0,\bot, Q_{acc},Q_{rej})$ in an almost ideal shape that computes $g$. We set
$Q_{acc}=\{q_{acc}\}$ and $Q_{rej}=\{q_{rej}\}$.
Before starting the intended proof for the second case, we present a general statement, ensuring the existence of another colored automaton that can ``simulate'' $M$ on inputs \emph{in a backward fashion}.

\begin{proposition}\label{reversing-machine}
There exists a colored automaton $M^R$ that satisfies the following for any three strings $x_1,x_2,x_3\in\Sigma^*$ and for any $(i,j)\in I_3$:
$M$ accepts $x_1\#x_2\#x_3$ along an accepting $(i,j)$-computation path if and only if $M^R$ accepts $x_3^R\#x_2^R\#x_1^R$ along an accepting $(4-j,4-i)$-computation path.
\end{proposition}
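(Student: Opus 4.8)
The plan is to construct $M^R$ directly, by reversing $M$ in time and relabeling stack colors, and then to exhibit a bijection between accepting computation paths of $M$ and of $M^R$. Throughout, I would exploit the ideal-shape properties of $M$ (Lemma \ref{modify-ideal-shape}): it scans its input one-way from $\cent$ to $\dollar$; it stays in the single working state $q$ strictly between the initial and final configurations; its stack equals $Z_0$ only at the initial and final configurations; and every move changes the stack height by $-1$, $0$, or $+1$, i.e.\ it is a pop $X\mapsto\lambda$, a top-rewrite $X\mapsto X'$, or a rewrite-and-push $X\mapsto YX'$ (top symbol listed first). I also fix the color map $\rho$ on $I_3$ given by $\rho(i,j)=(4-j,4-i)$; thus $\rho(1,2)=(2,3)$, $\rho(2,3)=(1,2)$, $\rho(1,3)=(1,3)$, and $\rho$ is an involution. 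Because the single symbol pushed at the first step fixes the color of an entire path, recoloring every color-$(i,j)$ stack symbol of $M$ as a color-$\rho(i,j)$ symbol of $M^R$ will send $(i,j)$-paths to $(4-j,4-i)$-paths, which is exactly the correspondence the statement demands. (The choice of $\rho$ is the natural one: reversing $x_1\#x_2\#x_3$ swaps blocks $1$ and $3$, so the check $x_i^R=x_j$ for the pair $(i,j)$ becomes the check for the pair $(4-j,4-i)$ on the reversed blocks.)

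The machine $M^R$ reads $x_3^R\#x_2^R\#x_1^R$ and simulates the moves of $M$ from last to first, maintaining the invariant that, just before $M^R$ reads the reverse image of $M$'s $k$-th input symbol, the stack of $M^R$ equals the stack of $M$ right after $M$ has read that symbol. I would invert each forward transition as follows. A forward top-rewrite $X\mapsto X'$ reading $a$ is reversed by an $M^R$-move that reads $a$, checks that the top is $X'$, and rewrites it back to $X$. A forward pop $X\mapsto\lambda$ reading $a$ is reversed by an $M^R$-move that reads $a$ and pushes $X$. A forward rewrite-and-push $X\mapsto YX'$ reading $a$ is reversed in two steps: an $M^R$-move reading $a$ that checks the top is $Y$ and pops it, followed by a $\lambda$-move that checks the uncovered top is $X'$ and rewrites it to $X$. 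In every case $M^R$ nondeterministically guesses which forward transition it inverts and verifies the guess against the actual stack top(s). The endmarker moves are inverted symmetrically: $M$'s initial $\cent$-move (pushing the color symbol onto $Z_0$) becomes $M^R$'s final $\dollar$-move (popping that symbol back to $Z_0$ and accepting), while the pops that empty $M$'s stack at $\dollar$ become the pushes $M^R$ performs right after reading its own $\cent$. I would set the start and accepting states of $M^R$ so that its stack equals $Z_0$ exactly at its initial and final configurations, matching the empty-stack conditions of $M$.

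It then remains to check that this correspondence is a bijection on accepting paths, which yields the ``if and only if.'' For the forward direction, reversing an accepting $(i,j)$-computation of $M$ on $x_1\#x_2\#x_3$ step by step produces, by the case analysis above, a legitimate computation of $M^R$ on $x_3^R\#x_2^R\#x_1^R$ whose stack history is the reverse of $M$'s and which ends in $M^R$'s accepting configuration; since it uses only recolored color-$(i,j)$ symbols, it is an accepting $\rho(i,j)=(4-j,4-i)$-path. For the converse I must argue that $M^R$ has \emph{no} spurious accepting runs: because each $M^R$-move verifies its nondeterministic guess against the current stack top(s) (the popped symbol in a reverse pop, and both the popped symbol $Y$ and the uncovered symbol $X'$ in the reversed rewrite-and-push), every accepting run of $M^R$, read backwards, is forced to be a valid run of $M$. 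Hence an accepting $(4-j,4-i)$-path of $M^R$ reverses to an accepting $(i,j)$-path of $M$, completing the equivalence.

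The main obstacle I expect is the faithful inversion of the rewrite-and-push move. One must guarantee that when $M^R$ pops the guessed symbol $Y$, the symbol it uncovers is exactly the modified symbol $X'$, so that the subsequent $\lambda$-move can legitimately restore $X$. This holds because, along $M$'s forward run, $X'$ sits strictly below $Y$ for the entire interval between that push and its matching pop, so $X'$ is never the stack top and is never altered in the meantime. Establishing this ``never altered below the top'' property, together with the matching of pushes to pops that makes the two-step reversal unambiguous and spurious-path-free, is the technical crux; once it is in place, the endmarker bookkeeping and the color relabeling are routine, and $M^R$ is the desired colored automaton.
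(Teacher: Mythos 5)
Your proposal takes essentially the same route as the paper's proof of Proposition~\ref{reversing-machine}: you invert each transition according to the same three-case analysis on $|\eta|\in\{0,1,2\}$ (with the rewrite-and-push reversed as a pop followed by a $\lambda$-move, exactly as the paper does via its intermediate state $\hat{q}$), swap the roles of the endmarkers and of $q_0$ and $q_{acc}$, and apply the same color translation $(i,j)\mapsto(4-j,4-i)$. The only difference is that you spell out the no-spurious-runs verification that the paper dismisses as ``not difficult to verify,'' which is a sound filling-in rather than a departure.
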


\begin{yproof}
Under our assumption that the colored automaton $M$ in an almost ideal shape computes $g$, we wish to describe the desired colored automaton $M^R  = (Q^R,\Sigma,\{\cent',\dollar'\},\Gamma,I_3, \delta^R,q'_0,\{q'_{acc}\},\{q'_{rej}\})$, where, for clarity reason, we use two different special symbols $\cent'$ and $\dollar'$ to stand for the  endmarkers of $M^R$. Let $w$ denote any input
of the form $x_1\#x_2\#x_3$ given to $M$.
Since $M$ is in an almost ideal shape, $M$ must empty its stack before or at scanning $\dollar$ along any computation path.
To make our proof simpler, we further modify $M$ so that $M$ never enters any halting state before reading $\dollar$.

Intuitively, the desired machine $M^R$ works as follows. We start with the unique accepting state $Q_{acc}$ of $M$ by placing a tape head onto the endmarker $\dollar$, and nondeterministically traverse a computation of $M$ on $w$ backward by moving its tape head leftward from $\dollar$ to $\cent$. To maintain the color scheme, we initially guess a color and use only stack symbols of the same color during the reverse simulation of $M$. If we successfully enter the initial state $q_0$ of $M$ after reaching $\cent$, then we accept the input; otherwise, we reject the input at scanning $\cent$.

More formally, the colored automaton $M^R$ takes the input of the form $w^R$ ($ = x_3^R\#x_2^R\#x_1^R$). We set $q'_0=q_{acc}$ and $q'_{acc}=q_0$. The machine $M^R$ starts with this initial state $q'_0$ with its tape head in the $0$th cell with $\cent'$. The machine $M^R$ guesses (\ie chooses nondeterministically) a color $(i,j)\in I_3$ and remembers it until the end of computation. Let us define the transition function $\delta^R$ of $M^R$.
Assume that at present $M^R$ is in inner state $q$ with stack content $\gamma=\xi z$ and its input tape head scanning a cell containing $\sigma$. It is important to note that the color $(i,j)$ for $M$ is translated to  color $(4-j,4-i)$ for $M^R$ due to the use of the reversal $w^R$ of the original input $w$.
Now, $M$ is assumed to make a transition of the form $(q,\eta)\in\delta(p,\sigma,\xi)$ with $\eta\in \Gamma^{\leq 2}$.
We discuss three possible cases separately, depending
on the size of $\eta$.

\renewcommand{\labelitemi}{$\circ$}
\begin{enumerate}\vs{-1}
  \setlength{\topsep}{-2mm}%
  \setlength{\itemsep}{0mm}%
  \setlength{\parskip}{0cm}%

\item[(1)]  When $\eta$ has the form $\eta_1\eta_2$ with  $|\eta|=2$, the stack content of $M$ changes from $\xi z\bot$ to $\eta_1\eta_2 z\bot$ for a certain string $z$. In this case,  $M^R$ removes $\eta$ and changes its inner state from $q$ to $p$ by the following two steps. We first introduce a new inner state $\hat{q}$ associated with $q$ and then define $(\hat{q},\lambda)\in\delta^R(q,\sigma,\eta_1)$ and $(p,\xi)\in\delta^R(\hat{q},\lambda,\eta_2)$. Notice that the second step is a $\lambda$-move.

\item[(2)] When $|\eta|=1$, $M$ changes its stack content from $\xi z\bot$ to $\eta z\bot$. We simply define $(p,\xi)\in\delta^R(q,\sigma,\eta)$.

\item[(3)] When $\eta=\lambda$, the stack content $\xi z\bot$ of $M$ is modified to $z\bot$. We then define $(p,\xi\tau)\in\delta^R(q,\sigma,\tau)$ for any stack symbol $\tau\in \Gamma$ of the same color as $\xi$.
\end{enumerate}
At last, when scanning $\dollar'$, if $q\neq q_0$, then we define  $\delta^R(q,\dollar',\tau) = \{(q'_{rej},\tau)\}$ for any $q\in Q-Q_{halt}$. Otherwise, we define $\delta^R(q_0,\dollar',\tau) = \{(q'_{acc},\tau)\}$.

It is not difficult to verify by the definition that $M^R$ correctly ``simulates'' $M$ in a reversible way.
\end{yproof}

Let us return to our proof for Case 2, in which,  by running $M$ on inputs of the form $x\# x^R\# x$ for $x\in\{0,1\}^n$, we  obtain $|D^{(n)}_{(1,2)}|>2^n/2$ for infinitely many numbers $n\in\nat$.
Proposition \ref{reversing-machine} provides us with another colored automaton $M^R$ that ``simulates'' $M$ in a reversible manner on any input written in reverse. By Lemma \ref{modify-ideal-shape},  we can convert $M^R$ to one in an almost ideal shape. For the ease of notation, we use the same notation $M^R$ to express the converted machine. A counterpart of $D_{(1,2)}$, denoted by $D^R_{(2,3)}$, is obtained by running $M^R$, instead of $M$, on inputs of the form  $x\#x^R\# x$. By the construction of $M^R$ in the proof of  Proposition \ref{reversing-machine}, we can conclude that  $|D^R_{(2,3)}|>2^n/2$ holds for infinitely many numbers $n\in\nat$. Now, we apply  an argument used for
Case 1  to $D^R_{(2,3)}$, and we then drive an intended contradiction. We have therefore completed
the entire proof of Proposition \ref{h3-no-refinement}.

\section{Future Challenges}\label{sec:future}

Throughout this paper, we have discussed a question of whether multi-valued partial functions can be refined by certain single-valued partial functions. For NFA functions, Kobayashi \cite{Kob69} solved this refinement question affirmatively. Konstantinidis, Santean, and Yu \cite{KSY07} tackled the same question for CFL functions and obtained a partial solution but left the entire question open. In this paper, we have answered this question negatively by proving that $\ucfltwov\not\sqsubseteq_{ref}\cflsv$ (Theorem \ref{UCFL2V-by-CFLSV}).
In a natural, analogous way, we can expand our interest from $\ucfltwov$ and $\cflsv$ to unambiguous $(k(n)+1)$-valued and $k(n)$-valued function families, $\ucflkv{(k(n)+1)}$ and $\ucflkv{k(n)}$ for any appropriately chosen function $k:\nat\to\nat^{+}$, where ``$n$'' refers to input length. Here,  we wish to raise a more general question of the following form regarding unambiguous $(k(n)+1)$-valued CFL functions.

\begin{openproblem}
Is it true that $\ucflkv{(k(n)+1)}\sqsubseteq_{ref}\cflkv{k(n)}$?
\end{openproblem}

It is not clear that the proof argument of this paper can be straightforwardly extended to solve this general question. Nevertheless, we  conjecture that a negative solution is possible for any ``reasonable''  function $k(n)\in 2^{O(n)}$.

As another type of  extension mentioned
in Section \ref{sec:introduction},
Yamakami \cite{Yam14b} partially settled the refinement question for $\sigmacflmv{k}$ in the CFLMV hierarchy when $k\geq3$, where the CFLMV hierarchy was defined in \cite{Yam14b} as follows. Given a function class $\FF$, its complement class $\co\FF$ is composed of all functions $f:\Sigma^*\to\Gamma^*$ such that there exist a function $g\in \FF$, two constants $a,b\geq0$, and a number $n_0\in\nat$ for which $f(x)=\Gamma^{\leq a|x|+b}-g(x)$  for all strings $x$ in $\Sigma^{\geq n_0}$.
Inductively, let $\sigmacflmv{1}=\cflmv$,  $\picflmv{k}=\co\sigmacflmv{k}$, and $\sigmacflmv{k+1}=\cflmv_{T}^{\sigmacfl{k}}$ for $k\geq1$, where $\cflmv_{T}^{\CC}$ is the collection of all multi-valued partial functions that are computed by \emph{oracle npda's}, which are allowed to access oracle $A\in\CC$ adaptively, running in $O(n)$ time for inputs of length $n$.
With these notations, Theorem \ref{CFL2V-refine-CFLSV} can be rephrased as  $\sigmacflmv{1}\not\sqsubseteq_{ref}\sigmacflsv{1}$.
However, as noted in Section \ref{sec:introduction},  we do not know any answer to the following question regarding the 2nd level of the CFLMV hierarchy.

\begin{openproblem}
Does $\sigmacflmv{2}\sqsubseteq_{ref}\sigmacflsv{2}$ hold?
\end{openproblem}

We conjecture that this refinement question could be solved negatively as well.

\section*{Appendix: Proof of Lemma  \ref{modify-ideal-shape}}

Lemma \ref{modify-ideal-shape} guarantees that it suffices for us to consider only colored automata \emph{in an almost ideal shape}. In Section \ref{sec:main-theorem}, we have used this fact extensively; however, we have left the lemma unproven in Section \ref{sec:colored-automata}. In what follows, we provide a sketch of its proof, in which we render a procedure of how to convert any colored automaton $M$ to its ``equivalent'' colored automaton $N$ in an almost ideal shape. A fundamental idea for this procedure comes from the conversion of any context-free grammar to Greibach Normal Form (see, e.g., \cite{HU79}).

Let $M=(Q,\Sigma,\{\cent,\dollar\},\Gamma,C,\delta,q_0,\bot, Q_{acc},Q_{rej})$ denote any colored automaton with a color partition $\{\Gamma_{\xi}\}_{\xi\in C}$ of $\Gamma$ except for $\bot $. In what follows, we will construct another colored automaton $N$ in an almost ideal shape of the form $(Q',\Sigma,\{\cent,\dollar\}, \Gamma',C,\delta',q_0,\bot, \{q'_{acc}\},\{q'_{rej}\})$ that can simulate  $M$.
Hereafter, we will describe how to convert $M$ into $N$ step by step. To clarify each step of modification, with a slight abuse of the symbols, we want to use $\delta$, $Q$, and $\Gamma$ to indicate a function and sets that have been already modified during the previous step and we use $\delta'$ and $\Gamma'$ for their newly modified versions obtained at the current step. Note that the conversion method given below also works for the case where all computation paths are not required to terminate in linear time.

\s

(1) As a basic transformation, we first remove from $\Gamma$ all stack symbols that never be used in any computation of $M$ on an arbitrary input. Those symbols are called \emph{useless}. Next, we restrict $Q_{acc}$ and $Q_{rej}$ to $\{q'_{acc}\}$ and $\{q'_{rej}\}$, respectively, by reassigning all $q\in Q_{acc}$ (resp., $q\in Q_{rej}$) to $q'_{acc}$ (resp., $q'_{rej}$).
Finally, we  modify the machine so that it never enters any halting state before scanning $\dollar$. For this purpose, we postpone the timing of entering any halting state  by introducing a dummy accepting state and a dummy rejecting state  and by staying in those inner states while an input-tape head moves to the right until
it eventually arrives at $\dollar$.

(2) We then convert $Q$ to $Q'=\{q_0,q,q'_{acc},q'_{rej}\}$ by encoding the information on the changes of inner states into stack symbols in a nondeterministic fashion. We translate
(i) a transition of the form $(r,c_1c_2\cdots c_k)\in \delta(q_0,\cent,\bot )$ to a new transition $(q,\track{rp_1}{c_1}\track{p_1p_2}{c_2}\cdots \track{p_kp_{k+1}}{c_k}\track{p_{k+2}p_{k+3}}{\bot })\in\delta'(q_0,\cent,\bot )$ for all possible inner states $p_1,p_2,\ldots,p_{k+3}\in Q$ satisfying $p_2,p_4,\ldots,p_{k+2}\notin Q_{halt}$,
(ii) a transition of the form  $(r,c_1c_2\cdots c_k)\in\delta(p,\sigma,a)$ with $\sigma\in\check{\Sigma}\cup\{\lambda\}$ and $p\neq q_0$ to $(q,\track{rp_1}{c_1}\track{p_1p_2}{c_2}\cdots \track{p_kp_{k+1}}{c_k})\in \delta'(q,\sigma,\track{pr}{a})$, and
(iii) a transition of the form $(q',w)\in\delta(p,\sigma,a)$ with $q'\in Q_{halt}$ and $\sigma\in\check{\Sigma}\cup\{\lambda\}$ to $(q',\track{pq'}{a})\in\delta'(q,\sigma,\track{pq'}{a})$, where $\track{pr}{a}$, $\track{p_ip_{i+1}}{c_i}$, $\track{pq'}{a}$, etc. are all new stack symbols. Here, we paint those new symbols in the same color as $a$ and $c_i$ have.

(3) We supplement all missing transitions (if any) with special transitions that directly guide to the unique rejecting state $q'_{rej}$.

(4) We eliminate all transitions of the form $(q,\lambda)\in\delta(q,\lambda,a)$. After this step,  no $\lambda$-move deletes a stack symbol. This elimination is done by finding  so-called \emph{nullable symbols} as follows.
A stack symbol $a$ is \emph{nullable} if there is a transition of the form $(q,\lambda)\in\delta(q,\lambda,a)$.
Note that, when a transition $(q,b_1b_2\cdots b_k)\in\delta(q,\lambda,a)$ exists and all $b_i$'s are nullable, $a$ is also nullable.
Associated with each transition $(q,c_1c_2\cdots c_k)\in\delta(q,\lambda,a)$, we include all transitions of the form $(q,e_1e_2\cdots e_k)\in\delta'(q,\lambda,a)$ satisfying the following three conditions:
(i)  $e_1e_2\cdots e_k\neq\lambda$, (ii) $e_i=c_i$ if $c_i$ is not nullable,  and (iii) $e_i\in\{c_i,\lambda\}$ if $c_i$ is nullable.

(5) We remove all transitions that make \emph{single-symbol replacement}, namely, transitions of the form  $(q,b)\in\delta(q,\lambda,a)$ for $a,b\in\Gamma$.  From the existing set of transitions, we first choose all transitions that do not have the above form and make them new transitions of $\delta'$. We then define a new transition  $(q,w)\in\delta'(q,\lambda,a)$ if a transition $(q,w)\in\delta(q,\lambda,b)$ exists and $M$ transforms $a$ to $b$ along a certain computation subpath without using the transition  $(q,b)\in\delta(q,\lambda,a)$.

(6) We delay the start of a loop given by a transition of the form $(q,au)\in\delta(q,\lambda,a)$. The following \emph{loop-delay conversion} eliminates this form entirely. Assume that there are transitions $(q,au)\in\delta(q,\lambda,a)$ and $(q,w)\in\delta(q,\sigma,a)$ with $\sigma\in\check{\Sigma}\cup\{\lambda\}$ and $w\notin a\Gamma^*$. We introduce a new symbol $b$ (in the same color as $a$'s) and introduce new transitions $(q,u)\in\delta'(q,\lambda,b)$, $(q,ub)\in\delta'(q,\lambda,b)$, and $(q,wb)\in\delta'(q,\sigma,a)$.

(7) We eliminate all $\lambda$-moves made while reading inputs (including the endmarkers). Let $\Gamma=\{a_0,a_1,\ldots,a_k\}$ be a stack alphabet defined at the previous step with $a_0=\bot $. This step is composed of the following three substeps (i)--(iii).

(i) First, we inductively modify the transitions (and also adding extra new symbols) so that, for any pair $i,j\in[0,k]_{\integer}$, $(q,a_iu)\in \delta'(q,\lambda,a_j)$ implies $i>j$. For each index $j=0,1,\ldots,k$, choose $i=0,1,\ldots,j-1$ sequentially and conduct the following modifications (a)--(b).
(a) When a transition $(q,a_iu)\in\delta(q,\lambda,a_j)$ exists for $u\in\Gamma^*$, we include  transitions $(q,wu)\in\delta'(q,\sigma,a_j)$ and $(q,w)\in\delta'(q,\sigma,a_i)$ for each transition $(q,w)\in\delta(q,\sigma,a_i)$ with $\sigma\in\check{\Sigma}\cup\{\lambda\}$ (and $w\notin a_i\Gamma^*$ by (6)).
(b) Next, for each transition of the form $(q,a_ju)\in\delta(q,\lambda,a_j)$ (possibly) generated in (a),
we apply the loop-delay conversion of (6) by introducing a new symbol $b_j$ whose color is set to be the same as $a_j$.

(ii) For each index $j=k-1,k-2,\ldots,0$ chosen sequentially,  if $(q,a_iu)\in\delta(q,\lambda,a_j)$ with $i>j$ exists, then we add  $(q,wu)\in\delta'(q,\sigma,a_j)$ for each transition  $(q,w)\in\delta(q,\sigma,a_i)$ with $\sigma\in\check{\Sigma}$ (notice that $w$ does not begin with a symbol in $\{a_{i+1},a_{i+2},\ldots,a_k\}$).

(iii) For the newly added $b_j$'s, we have only transitions of the form $(q,w)\in\delta(q,\lambda,b_j)$ with $w$ beginning with $a_i$'s. Associated with each transition of the form  $(q,a_iu)\in\delta(q,\lambda,b_j)$, we include $(q,wu)\in\delta'(q,\sigma,b_j)$ for each transition  $(q,w)\in\delta(q,\sigma,a_i)$ with $\sigma\in\check{\Sigma}$.

(8) Finally, we reduce to at most $2$ the number of stack symbols pushed simultaneously into the stack. Let us consider the set $A=\{w\in \Gamma^* \mid \exists p,p'\in Q\, \exists a\in\Gamma\, \exists \sigma\in\check{\Sigma}\, [(p',w)\in\delta(p,\sigma,a)]\}\cup \Gamma$. Let $\Gamma'$ be composed of all symbols $[u]$ for all prefixes $u$ of $w$ in $A$ and all symbols $[bw]$ for $b\in\Gamma-\{\bot \}$ and $w\in A$.  For simplicity, we set $[\lambda]=\lambda$.
If a transition $(q,w)\in\delta(q_0,\cent,\bot )$ exists, then we include a transition $(q,[w][\bot ])\in \delta'(q_0,\cent,[\bot ])$, where $[\bot ]$ is the new bottom marker. For each transition  $(q,w)\in\delta(q,\sigma,b)$ with $\sigma\in\check{\Sigma}$,  we define $(q,[w][u])\in\delta'(q,\sigma,[bu])$ for any $u$ satisfying   $[bu]\in \Gamma'$.  Moreover, if a transition $(q,\lambda)\in\delta(q,\sigma,b)$ exists, then we include transitions  $(q,[u])\in\delta'(q,\sigma,[bu])$ for any $u$ with $[bu]\in\Gamma'$. The colors of $[w]$, $[u]$, and  $[u]$ are the same as those of all symbols in $w$, $u$, and $bu$, respectively.

\let\oldbibliography\thebibliography
\renewcommand{\thebibliography}[1]{%
  \oldbibliography{#1}%
  \setlength{\itemsep}{0pt}%
}
\bibliographystyle{plain}

\end{document}